\newtheorem{theorem}{Theorem}[section]
\newtheorem{definition}[theorem]{Definition}
\newtheorem{lemma}[theorem]{Lemma}
\newtheorem{claim}[theorem]{Claim}
\newtheorem{corollary}[theorem]{Corollary}
\newtheorem{notation}[theorem]{Notation}
\newtheorem{example}[theorem]{Example}
\newtheorem{remark}[theorem]{Remark}
\newtheorem{note}[theorem]{Note}
\newtheorem*{notation*}{Notation}
\newtheorem*{lemma*}{Lemma}
\newtheorem*{proposition*}{Proposition}
\newtheorem*{note*}{Note}
\newtheorem*{theorem*}{Theorem}
\newtheorem*{corollary*}{Corollary}
\newcommand{\innerifequals}[3]{\ifthenelse{\equal{#1}{#2}}{#3}{}}
\newenvironment{switch}[1]{\newcommand{\case}{\innerifequals{#1}}}{}
\newcommand{\definedas}[0]{\coloneqq}
\newcommand{\parens}[1]{\left( #1 \right)}
\newcommand{\sparens}[1]{\left[ #1 \right]}
\newcommand{\expectation}[2]{\mathbbm{E}_{#1} \sparens{#2}}
\newcommand{\set}[1]{\left\{ #1 \right\}}
\newcommand{\suchthat}[0]{\middle|}
\newcommand{\basefield}[0]{\mathbbm{F}}
\newcommand{\blocklength}[0]{n}
\newcommand{\field}[0]{\basefield^{\blocklength}}
\newcommand{\titlefield}[0]{\texorpdfstring{$\field$}{Fn}}
\newcommand{\degree}[0]{d}
\newcommand{\basefieldsize}[0]{p}
\newcommand{\series}[3]{#1_{#2},\ldots,#1_{#3}}
\newcommand{\charfunc}[1]{e \sparens{#1}}
\newcommand{\genprime}[0]{p}
\NewDocumentCommand{\genset}{o}{\IfNoValueTF{#1}{A}{
    \begin{switch}{#1}
        \case{1}{A}
        \case{2}{B}
        \case{3}{C}
    \end{switch}
}}
\NewDocumentCommand{\onvarfunc}{o}{\IfNoValueTF{#1}{f}{
    \begin{switch}{#1}
        \case{1}{f}
        \case{2}{g}
        \case{3}{h}
    \end{switch}
}}
\NewDocumentCommand{\onvarfuncset}{o}{\IfNoValueTF{#1}{\mathfrak{f}}{
    \begin{switch}{#1}
        \case{1}{\mathfrak{f}}
        \case{2}{\mathfrak{g}}
        \case{3}{\mathfrak{h}}
    \end{switch}
}}
\NewDocumentCommand{\genfunc}{o}{\IfNoValueTF{#1}{F}{
    \begin{switch}{#1}
        \case{1}{F}
        \case{2}{G}
        \case{3}{H}
    \end{switch}
}}
\NewDocumentCommand{\genfuncset}{o}{\IfNoValueTF{#1}{\mathfrak{F}}{
    \begin{switch}{#1}
        \case{1}{\mathfrak{F}}
        \case{2}{\mathfrak{G}}
        \case{3}{\mathfrak{H}}
    \end{switch}
}}
\newcommand{\funcdef}[3]{#1:#2\rightarrow#3}
\newcommand{\restrictfunc}[2]{#1|_{#2}}
\NewDocumentCommand{\genpoly}{o}{\IfNoValueTF{#1}{P}{
    \begin{switch}{#1}
        \case{1}{P}
        \case{2}{Q}
        \case{3}{H}
    \end{switch}
}}
\NewDocumentCommand{\genpolyset}{o}{\IfNoValueTF{#1}{\mathcal{P}}{
    \begin{switch}{#1}
        \case{1}{\mathcal{P}}
        \case{2}{\mathcal{Q}}
        \case{3}{\mathcal{H}}
    \end{switch}
}}
\newcommand{\factor}[0]{\mathcal{B}}
\newcommand{\semrefine}[0]{\succeq_{\text{sem}}}
\newcommand{\semrefineex}[1]{\succeq_{\text{sem}|#1}}
\newcommand{\relsemrefine}[1]{\succeq^{#1}_{\text{sem}}}
\newcommand{\relsemrefineex}[1]{\succeq^{#1}_{\text{sem}}}
\newcommand{\synrefine}[0]{\succeq_{\text{syn}}}
\newcommand{\allpolyset}[3]{Poly_{#1}(#2\rightarrow#3)}
\newcommand{\cube}[2]{(#1|#2)}
\newcommand{\cubes}[2]{C_{#1}(#2)}
\NewDocumentCommand{\lowdegpoly}{o}{\IfNoValueTF{#1}{\alpha}{
    \begin{switch}{#1}
        \case{1}{\alpha}
        \case{2}{\beta}
    \end{switch}
}}
\newcommand{\naturalnumbersset}[0]{\mathbbm{N}}
\newcommand{\realnumbersset}[0]{\mathbbm{R}}
\NewDocumentCommand{\varpoly}{o}{\IfNoValueTF{#1}{L}{
    \begin{switch}{#1}
        \case{1}{L}
        \case{2}{M}
    \end{switch}
}}
\NewDocumentCommand{\varpolyset}{o}{\IfNoValueTF{#1}{\tilde{\mathcal{L}}}{
    \begin{switch}{#1}
        \case{1}{\matchcal{L}}
        \case{2}{\matchcal{M}}
    \end{switch}
}}
\newcommand{\relativeremainder}[1]{\overline{#1}}
\newcommand{\remainderpoly}[0]{R}
\NewDocumentCommand{\onvarpoly}{o}{\IfNoValueTF{#1}{p}{
    \begin{switch}{#1}
        \case{1}{p}
        \case{2}{q}
        \case{3}{h}
    \end{switch}
}}
\NewDocumentCommand{\onvarpolyset}{o}{\IfNoValueTF{#1}{\mathfrak{p}}{
    \begin{switch}{#1}
        \case{1}{\mathfrak{p}}
        \case{2}{\mathfrak{q}}
        \case{3}{\mathfrak{h}}
    \end{switch}
}}
\newcommand{\variety}{{\tilde{X}}}
\newcommand{\titlevariety}{\texorpdfstring{$\variety$}{X}}
\newcommand{\varietypolycount}{{\tilde{c}}}
\newcommand{\varietydeg}[0]{\tilde{\degree}}
\newcommand{\lift}[1]{\widehat{#1}}
\newcommand{\homopart}[1]{\mathrm{h}({#1})}
\newcommand{\directionderivative}[0]{D}
\newcommand{\zerofunc}[1]{Z \parens{#1}}
\newcommand{\existfunc}[1]{1_{#1}}
\newcommand{\rank}[1]{rank \parens{#1}}
\newcommand{\drank}[2]{rank_{#1} \parens{#2}}
\newcommand{\schmrank}[1]{schmrank \parens{#1}}
\newcommand{\relschmrank}[2]{shcmrank_{#1}\parens{{#2}}}
\newcommand{\relrank}[2]{rank_{#1} \parens{#2}}
\newcommand{\drelrank}[3]{rank_{{#1}, {#2}} \parens{#3}}
\newcommand{\rankval}[0]{{{r}}}
\newcommand{\rankfunc}[0]{{{{r}}}}
\newcommand{\rankbiasfunc}[0]{{\tilde{r}}}
\newcommand{\varietyrankval}[0]{{\bar{r}}}
\newcommand{\epsilonlimitedrankbias}[0]{{\tilde{\epsilon}}}
\newcommand{\bias}[1]{bias({#1})}
\newcommand{\relbias}[2]{bias_{#1}{({#2})}}
\newcommand{\norm}[1]{\left\lVert{#1}\right\rVert}
\newcommand{\abs}[1]{\left| #1 \right|}
\newcommand{\dist}[1]{dist \parens{#1}}
\newcommand{\prex}[2]{\Pr_{#1}\left[ #2 \right]}
\NewDocumentCommand{\linearform}{o}{\IfNoValueTF{#1}{l}{
    \begin{switch}{#1}
        \case{0}{l}
        \case{1}{r}
    \end{switch}
}}
\newcommand{\gencode}{\mathfrak{C}}
\newcommand{\quotientcode}{\gencode_{\variety}}
\newcommand{\reedmullercodeex}[3]{RM_{#1, #2}(#3)}
\newcommand{\listpolycount}[4]{\ell_{#1, #2}(#3, #4)}
\newcommand{\listdecodingradiusex}[3]{LDR_{#1, #2}(#3)}
\newcommand{\normalizedcodedistance}[2]{\delta_{#1}(#2)}
\newcommand{\normalizedcodedistanceex}[3]{\delta_{#1, #2}(#3)}
\DeclareRobustCommand{\topbot}{\genfrac{}{}{0pt}{}}
\title{List Decoding Quotient Reed-Muller Codes}
\author{
Omri Gotlib
\footnote{Department of Computer Science, Bar-Ilan University, Email: gotlib.omri@gmail.com.}
\and
Tali Kaufman
\footnote{Department of Computer Science, Bar-Ilan University, Email: kaufmant@mit.edu, supported by ISF.}
\and
Shachar Lovett
\footnote{Department of Computer Science and Engineering, UC San Diego, Email: shachar.lovett@gmail.com, supported by NSF award 2425349 and a Simons investigator award.}
}
\begin{document}
    \maketitle
    \begin{abstract}
        Reed-Muller codes consist of evaluations of
        $\blocklength$-variate polynomials over a finite field $\basefield$ with degree at most $\degree$.
        Much like every linear code, Reed-Muller codes can be characterized by constraints, where a codeword is valid if and only if it satisfies all \emph{degree-$\degree$} constraints.

        For a subset $\variety \subseteq \field$,
        we introduce the notion of \emph{$\variety$-quotient} Reed-Muller code.
        A function $\funcdef{\genfunc}{\variety}{\basefield}$ is a valid codeword in the quotient code if it satisfies all the constraints of degree-$\degree$ polynomials \emph{lying in $\variety$}.
        This gives rise to a novel phenomenon: a quotient codeword may have \emph{many} extensions to original codewords.
        This weakens the connection between original codewords and quotient codewords which introduces a richer range of behaviors along with substantial new challenges.

        Our goal is to answer the following question: what properties of $\variety$ will imply that the quotient code inherits its distance and list-decoding radius from the original code?
        \newline
        We address this question using techniques developed by Bhowmick and Lovett~\cite{bhowmick2014list},
        identifying key properties of $\field$ used in their proof and extending them to general subsets $\variety \subseteq \field$.
        By introducing a new tool, we overcome the novel challenge in analyzing the quotient code that arises from the weak connection between original and quotient codewords.
        This enables us to apply known results from additive combinatorics and algebraic geometry~\cite{kazhdan2018polynomial, kazhdan2019extendingweaklypolynomialfunctions, lampert2021relative}
        to show that when $\variety$ is a \emph{high rank variety}, $\variety$-quotient Reed-Muller codes inherit the distance and list-decoding parameters from the original Reed-Muller codes.
    \end{abstract}
    \section{Introduction}\label{sec:introduction}
Let $\basefield$ be a finite field, $\blocklength \in \naturalnumbersset$, and let $\variety \subseteq \field$ be a subset
\footnote{As a convention, we use $\tilde{\square}$ to denote properties of the subset, and thus also the subset itself.}
.
We begin by introducing a new definition applicable to any linear code over $\basefield$: the \emph{$\variety$-quotient code}.
We then illustrate this novel definition using Reed-Muller codes, and present a property of $\variety$ which we use to show that $\variety$-quotient Reed-Muller code \emph{inherits its distance and list decoding radius} from the original Reed-Muller code.
Finally, leveraging known results from additive combinatorics and algebraic geometry, we establish as a corollary that this inheritance holds when $\variety$ is a \emph{high-rank variety}.

\paragraph{The Quotient Code}
Let $\gencode$ be a linear code over $\basefield$.
Each codeword of $\gencode$ can be described as a function $\funcdef{\genfunc}{\field}{\basefield}$ that is in the span of the columns of the code's \emph{generator matrix}.
An equivalent way to describe $\gencode$ is using a \emph{parity check matrix}, where a function $\genfunc$ is a codeword if and only if it satisfies the constraints represented by parity-check matrix.
Each such constraint can be thought of as a requirement over a few inputs of $\genfunc$ from $\field$: the requirement that their weighted sum will equal $0$.

The first novel definition we introduce is the definition of the \emph{$\variety$-induced} code:
\begin{definition}[The $\variety$-Induced Code]
    We define the \emph{$\variety$-induced code $\quotientcode$} to be
    the set of all functions $\funcdef{\onvarfunc}{\variety}{\basefield}$
    \footnote{By convention, we use uppercase letters to denote functions with domain $\field$ and lowercase letters to denote functions with domain $\variety$.}
    that satisfy all the constraints \emph{that lie in $\variety$}.
\end{definition}

Let us briefly describe the connection between codewords in $\field$ and $\variety$-induced codewords.
One can easily verify that each original codeword \emph{restricted} to $\variety$ is a valid codeword in the induced code.
\newline
We call an extension of an $\variety$-induced codeword $\funcdef{\onvarfunc}{\variety}{\basefield}$ to valid codeword in the original code (extending its domain to $\field$), a \emph{lift} of $\onvarfunc$.
When each induced codeword has a unique lift, there is a natural 1-to-1 correspondence between the original and induced codeword.
This becomes substantially more interesting for subsets $\variety$ in which induced codewords have \emph{multiple} lifts.
This non-uniqueness weakens the connection between the original codewords and induced codewords, and leads to a richer range of phenomena (and interesting new challenges).

We also note that the other direction is not always true: For a general subset $\variety$, there might be an induced codeword (a valid codeword in the induced code) that \emph{cannot be lifted} to a valid codeword in $\field$.
We are interested to better understand $\quotientcode$ using $\gencode$ and vice-versa, and therefore we introduce a new notion, which is the notion of the \emph{$\variety$-quotient code}:
\begin{definition}[The $\variety$-Quotient Code]
    Let $\gencode$ be a linear code, and let $\quotientcode$ be the $\variety$-induced code of $\gencode$.
    We say $\quotientcode$ is a \emph{$\variety$-quotient code}
    if every quotient codeword $\onvarfunc \in \quotientcode$ has a lift to $\field$.
\end{definition}
In the case described above, we also say that $\variety$ is a \emph{lift-enabler} for $\gencode$ and that the code $\gencode$ is a \emph{covering code} for the code $\quotientcode$.
\newline
The novelty of this definition is that it captures subsets in which \emph{there is} a correspondence between codewords in $\variety$ and in $\field$,
and the correspondence may be \emph{1-to-many}.

\paragraph{Importance of Definition}
This timely definition extends a fundamental and useful concept previously introduced for graphs and complexes—namely, the notion of a \emph{covering graph} or alternatively, the \emph{quotient graph}.
This concept gained an increasing prominence in theoretical computer science, where it was recently employed to construct \emph{high dimensional expanders}~\cite{dikstein2022newhighdimensionalexpanders, yaacov2024sparsehighdimensionalexpanders}
and achieve improved \emph{local testing} results~\cite{gotlib2022listagreementexpansioncoboundary, dikstein2024agreementtheoremshighdimensional, bafna2024characterizingdirectproducttesting},
where the latter also played a crucial role in constructions of PCPs.
Consequently, the study of covering spaces for graphs has found usages in theoretical computer science and specifically in development of PCPs with enhanced properties.
We believe our question, which explores the analogous question for codes, will similarly lead to meaningful applications in theoretical computer science.

In addition to that, the question of \emph{puncturing} of codes has caught much attention recently, in a line of work~\cite{brakensiek2024genericreedsolomoncodesachieve, alrabiah2024randomlypuncturedreedsolomoncodes, brakensiek2024generalizedgmmdspolynomialcodes, brakensiek2024agcodesachievelistdecoding},
followed by the resolution of the GM-MDS conjecture~\cite{DBLP:journals/corr/abs-1803-02523, DBLP:journals/corr/abs-1803-03752}.
Where the question of puncturing is focused exclusively on the case where the lift is \emph{unique},
the study of quotient codes also tackles subsets $\variety \subseteq \field$ where the lift is \emph{not unique}.
Notably, in the unique-lift case there are well-established lower-bounds for the size of $\variety$ such as~\cite[Theorem 1.1]{DBLP:journals/cc/DoronTT22}.
In contrast, the size of $\variety$ in quotient codes may be \emph{much smaller} than its lower-bound in punctured code (for example in Reed-Muller codes), suggesting the potential for new insights and improved results.

\paragraph{Our Question}
Our goal is to answer the following question:
what properties of $\variety$ will imply that the quotient code inherits its distance and list-decoding radius from the original code?

This question is analogous to the study of quotients of expander graphs—just as not all quotients of an expander necessarily preserve expansion,
not all subsets $\variety$ necessarily yield a well-behaved quotient code.
Understanding the conditions under which expansion is preserved has been a fundamental problem in the study of expanders,
and similarly, identifying the conditions under which a quotient code retains key properties of the original code is a central challenge in our work.
Given this parallel, we believe our question may have broader implications for future research in both coding theory and theoretical computer science.

We answer this question in the context of \emph{Reed-Muller codes}.
Notably, our approach does \emph{not only} address the case of where there are multiple lifts,
but also introduces a novel framework for analyzing unique-lift (puncturing) setting when the field size is constant-a scenario that is typically considered more challenging.

\paragraph{Reed-Muller Codes}
Let $\basefield$ be a finite field, and let $\blocklength, \degree$ be integers.
Each codeword in Reed-Muller code $\reedmullercodeex{\basefield}{\field}{\degree}$,
is defined by a polynomial over $\basefield$ in $\blocklength$ variables with total degree $\leq \degree$
\footnote{We focus on the regime where $\degree, \abs{\basefield}$ are considered constants and $\blocklength$ is considered very large.}
.
The message that one wishes to encode is represented in the code as a polynomial $\funcdef{\genpoly}{\field}{\basefield}$, whose coefficients are the different message characters.
The encoding of the message is a vector of the different evaluation of $\genpoly$ over \emph{all} possible points in $\field$.

Alternatively, one can describe Reed-Muller codes using a set of local constraints.
A function $\funcdef{\genfunc}{\field}{\basefield}$ is a polynomial of degree $\leq \degree$
if and only if the (alternating) sum of each possible \emph{cube}, which is a set of points of the form $\set{x + \sum_{i \in S} y_i}_{S \subseteq \sparens{\degree + 1}}$ for $x, y_1,...,y_{\degree+1} \in \field$, equals $0$.
The set of all cubes is \emph{the set of constraints of degree-$\degree$ polynomials}.

Next, we present our notations for the induced Reed-Muller code:
\begin{notation}[The $\variety$-Induced Reed-Muller Code]
    We say a function $\funcdef{\genfunc}{\variety}{\basefield}$ is a \emph{polynomial of degree $\leq \degree$ \emph{in $\variety$}}
    if it satisfies all the constraints of degree-$\degree$ polynomials \emph{that lie in $\variety$}.
    \newline
    We denote the $\variety$-induced Reed-Muller code:
    \[
        \reedmullercodeex{\basefield}{\variety}{\degree} = \set{\funcdef{\onvarpoly}{\variety}{\basefield} \suchthat \onvarpoly \text{ is a polynomial of degree } \leq \degree \text{ in } \variety}
    \]
\end{notation}

\paragraph{Properties of Induced Reed-Muller Codes}
A study of Ziegler and Kazhdan~\cite{kazhdan2018polynomial, kazhdan2019extendingweaklypolynomialfunctions, kazhdan2020propertieshighranksubvarieties}
shows that if $\variety$ is a \emph{high rank variety}
\footnote{Under some conditions we describe later.}
, then $\variety$ is a \emph{lift-enabler} for $\reedmullercodeex{\basefield}{\field}{\degree}$.
In other words, the authors showed that the $\variety$-induced Reed-Muller code is in fact a \emph{$\variety$-quotient Reed-Muller code}.
We rely on this property of $\variety$ as a black-box.
See Section~\ref{sec:polynomials_in_X} for more details in this regard.

An additional property of $\variety \subseteq \field$ we rely on is the connection between \emph{algebraic structure} and \emph{random behavior (equidistribution)} of polynomials in $\variety$.
\newline
For $\field$, this connection is a well-studied result~\cite{green2007distribution, kaufman2008worst, DBLP:journals/corr/0001L15}.
It lies in the heart of many results in higher-order Fourier analysis,
and specifically was used in~\cite{bhowmick2014list} to analyze the list decoding radius of Reed-Muller code in $\field$.
\newline
The equivalent of this relation for subsets $\variety \subseteq \field$ was studied in~\cite{lampert2021relative, gowers2022equidistributionhighrankpolynomialsvariables}.
These works captured the measure of algebraic-structure in $\variety$ by a definition called \emph{relative rank},
and captured the lack of random behavior in $\variety$ by a definition called \emph{relative bias}.
We note that for subsets, the definition of algebraic structure of a polynomial in $\variety$ considers the algebraic structure of \emph{all its possible} lifts.
It was shown in~\cite{lampert2021relative} that when $\variety$ is a high-rank variety, high relative rank implies low relative bias
\footnote{
    Note that even though Gowers and Karam~\cite{gowers2022equidistributionhighrankpolynomialsvariables} also acheived a similar relation for a type of subsets,
    the definition of rank they used is slightly different than the standard definition of rank.
    While this difference may seem unharmful at first, it is, to our knowledge, does not allow to do a \emph{regularization} process
    (note that a generalization of this process is the heart of our proof).
}
.
\newline
We use this property as a black box as well.
When a subset $\variety \subseteq \field$ has such property for polynomials of degree $\leq \degree$, we say that it has the \emph{$\degree$-relative rank-bias property}.
See Section~\ref{sec:relative-rank-bias-property} for more details.

\paragraph{Our Results}
Next, let us present our main theorem more concretely.
Our work focuses on the regime where $\degree < \abs{\basefield}$ for prime finite fields $\basefield = \basefield_p$.
Throughout this paper, we always assume these two assumptions.
Denote the \emph{minimum normalized distance of $\reedmullercodeex{\basefield}{\field}{\degree}$} by $\normalizedcodedistanceex{\basefield}{\field}{\degree}$,
shorthand by $\normalizedcodedistance{\basefield}{\degree}$.
We have:
\[
    \normalizedcodedistance{\basefield}{\degree} = 1 - \degree/\abs{\basefield}
\]
Moreover, we define the \emph{list decoding count} of $\reedmullercodeex{\basefield}{\field}{\degree}$ by:
\[
    \listpolycount{\basefield}{\field}{\degree}{\tau} \definedas
    \max_{\funcdef{\genfunc}{\field}{\basefield}}
        {\abs{\set{\genpoly \in \allpolyset{\leq \degree}{\field}{\basefield} \suchthat {\dist{\genpoly, \genfunc} \leq \tau}}}}
\]
Let $\listdecodingradiusex{\basefield}{\field}{\degree}$ be the \emph{list decoding radius} of $\reedmullercodeex{\basefield}{\field}{\degree}$,
which is the maximum $\tau$ for which $\listpolycount{\basefield}{\field}{\degree}{\tau - \epsilon}$ is bounded by a \emph{constant} depending only on $\epsilon, \abs{\basefield}, \degree$.
\newline
In the paper~\cite{bhowmick2014list} it was shown that for constant field size and degree, the list decoding radius \emph{reaches the distance of the code}, as conjectured earlier by~\cite{10.1145/1374376.1374417}
\footnote{Note that it is known that $\listdecodingradiusex{\basefield}{\field}{\degree} \leq \normalizedcodedistance{\basefield}{\degree}$,
    and therefore, in a sense, their result is \emph{optimal in $\field$} assuming $\degree, \abs{\basefield}$ are considered as constants.}
.
We denote the corresponding distance parameter of $\variety \subseteq \field$ by $\normalizedcodedistanceex{\basefield}{\variety}{\degree}$ and $\listdecodingradiusex{\basefield}{\variety}{\degree}$ respectively.

We next present our main theorem, which establishes that the \emph{list decoding radius} of the quotient Reed-Muller code is \emph{at least as good} as the that of the original code:
\begin{theorem*}[List Decoding Quotient Reed-Muller Code]
\footnote{Informal, for formal see Theorem~\ref{thm:list-decoding-RM-in-X}.}
Let $\basefield$ be a finite (prime) field of constant size, let $\degree \in \naturalnumbersset$ be a constant such that $\degree < \abs{\basefield}$,
and let $\blocklength \in \naturalnumbersset$ be an integer.
\newline
Let $\variety \subseteq \field$ be a subset that is a lift-enabler for $\reedmullercodeex{\basefield}{\field}{\degree}$ and has the $\degree$-relative rank-bias property.
\newline
Then, $\reedmullercodeex{\basefield}{\variety}{\degree}$ inherits its \emph{list decoding radius} from $\reedmullercodeex{\basefield}{\field}{\degree}$, i.e:
\[
    \listdecodingradiusex{\basefield}{\variety}{\degree} \geq \listdecodingradiusex{\basefield}{\field}{\degree}
\]
\end{theorem*}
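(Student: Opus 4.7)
The plan is to lift the list-decoding argument of Bhowmick and Lovett~\cite{bhowmick2014list} from $\field$ to the quotient setting by consistently replacing every notion of rank and equidistribution with its \emph{relative} analogue. The two black-box hypotheses---that $\variety$ is a lift-enabler and satisfies the $\degree$-relative rank-bias property---play the roles in the relative world that unique representation of polynomials and the absolute rank-bias theorem play in~\cite{bhowmick2014list}.

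Setting up the contrapositive, I would fix $\tau < \listdecodingradiusex{\basefield}{\field}{\degree}$ and $\epsilon > 0$, take an arbitrary function $\funcdef{\genfunc}{\variety}{\basefield}$ together with quotient codewords $\onvarpoly_1,\ldots,\onvarpoly_M \in \reedmullercodeex{\basefield}{\variety}{\degree}$ satisfying $\dist{\onvarpoly_i,\genfunc} \leq \tau - \epsilon$ on $\variety$, and aim to bound $M$ by a constant depending only on $\epsilon$, $\abs{\basefield}$, and $\degree$. Using the lift-enabler property, I choose for each $i$ a degree-$\degree$ polynomial $\genpoly_i : \field \to \basefield$ whose restriction to $\variety$ equals $\onvarpoly_i$. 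A standard Fourier expansion converts closeness in Hamming distance into the statement that $\abs{\expectation{x \in \variety}{\charfunc{\genpoly_i(x) - \genfunc(x)}}}$ is bounded away from zero for every $i$.

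The core of~\cite{bhowmick2014list} is to regularize the collection of difference polynomials $\set{\genpoly_i - \genpoly_j}$ into a high-rank factor $\factor$, use rank-bias to deduce equidistribution of $\factor$, and then count, on each atom of $\factor$, how many characters of polynomial combinations of $\factor$ can correlate with $\genfunc$. Attempting this with \emph{absolute} rank is doomed in the quotient setting: different lifts of the same $\onvarpoly_i$ can have wildly different absolute ranks but identical behavior on $\variety$, so a factor of high absolute rank on $\field$ need not reflect the genuine complexity of the quotient codewords. Instead, I would carry out the regularization using \emph{rank relative to $\variety$}: whenever a factor polynomial has low relative rank, it can be written, modulo a polynomial vanishing on $\variety$, as a polynomial combination of lower-degree polynomials, which lets me refine the factor. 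Iterating produces a factor of bounded complexity whose members all have high relative rank, and the $\degree$-relative rank-bias property then equidistributes this factor on $\variety$.

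With equidistribution on $\variety$ in hand, the correlation lower bound forces each $\genpoly_i$, restricted to $\variety$, to agree up to small error with a character of a polynomial combination of the factor polynomials; since the number of such combinations depends only on $\epsilon$, $\abs{\basefield}$, $\degree$, and the factor complexity, $M$ is bounded by a constant of the required form. The principal obstacle is precisely the relative regularization step just described: non-uniqueness of lifts means that simplifications performed on $\field$ need not descend to genuine simplifications on $\variety$, so each rank-drop must be organized to reflect an honest reduction of structure relative to $\variety$ rather than an artifact of the chosen lift. This is the new tool alluded to in the introduction; once it is in place, the remainder of the Bhowmick-Lovett argument carries through with $\field$ replaced by $\variety$ throughout.
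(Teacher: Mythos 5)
Your overall strategy matches the paper's: lift the candidate codewords using the lift-enabler property, run a regularization with respect to \emph{relative} rank, and invoke the relative rank-bias property to get equidistribution on $\variety$, then finish along the lines of~\cite{bhowmick2014list}. However, there is a genuine gap at exactly the point you defer to ``the new tool.'' When you refine the factor because some polynomial has low relative rank, you replace it by lower-degree polynomials that compute not the polynomial itself but some $\variety$-equivalent of it; after this step the original polynomials (and the lifted candidate $\genpoly$) are no longer measurable by the new factor, only measurable \emph{up to a remainder $\relativeremainder{\genpoly}$ vanishing on $\variety$}. Your sketch treats this as harmless, but the later structural step requires it not to be: to conclude that the composed function of the factor polynomials is itself a low-degree object (so that the degree-preservation / faithful-composition lemma and the Schwartz--Zippel-type argument on disjoint variable copies apply), one must know that $\genpoly - \relativeremainder{\genpoly}$ still has degree $\leq \degree$. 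An arbitrary remainder vanishing on $\variety$ gives no such bound, and then nothing about the structure of the measurement function can be inferred from $\genpoly$. The paper's actual contribution is precisely a relaxed notion of measurability ($\variety$-measurable with a \emph{valid}, i.e.\ degree-bounded, remainder) together with a relative regularization process that preserves it, which is achieved by interleaving an ordinary high-rank regularization at every step so that the measurement functions are genuinely low-degree polynomials and the accumulated error can be shown to remain a valid remainder. Your proposal names the obstacle (non-uniqueness of lifts) but locates it in the wrong place and supplies no mechanism for this degree control, so the claim that ``the remainder of the Bhowmick--Lovett argument carries through'' does not yet follow.

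A secondary inaccuracy: the core of~\cite{bhowmick2014list} in this radius regime is not a regularization of the difference polynomials followed by counting correlating characters on atoms. It proceeds via the low-complexity approximation lemma (their Corollary 3.3), adding the candidate polynomial to the regular factor, re-regularizing so as to obtain a syntactic refinement, and then a Schwartz--Zippel-type lemma on disjoint variable copies showing the candidate is measurable by the original factor; the list size is then bounded by the number of possible measurement functions. Agreement only $\degree/\abs{\basefield}+\epsilon$ is too weak for a purely character-correlation counting argument to reach the radius $\normalizedcodedistance{\basefield}{\degree}-\epsilon$, so even in $\field$ your outline of the endgame would need to be replaced by the measurability-counting argument that the paper (following~\cite{bhowmick2014list}) actually uses.
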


In addition, we also achieve a (simpler) result regarding the \emph{distance} of the quotient Reed-Muller code (Theorem~\ref{thm:distance-of-RM-in-X}):
Under the conditions described above,
$\reedmullercodeex{\basefield}{\variety}{\degree}$ also inherits its \emph{distance} from $\reedmullercodeex{\basefield}{\field}{\degree}$, i.e
$\normalizedcodedistanceex{\basefield}{\variety}{\degree} \geq \normalizedcodedistanceex{\basefield}{\field}{\degree}$
\footnote{Our techniques also show that also the other direction is true, which yields an \emph{equality} in the distance of the two codes.}
.

As a corollary, using results studied in~\cite{kazhdan2018polynomial, kazhdan2019extendingweaklypolynomialfunctions, lampert2021relative} regarding high-rank varieties, we obtain the following:
\begin{corollary*}[List Decoding Quotient Reed-Muller Code: High Rank Variety]
    Let $\variety \subseteq \field$ be a \emph{high rank variety},
    that is, $\variety$ is the set of common zeros of a collection of polynomials $\varpolyset = (\varpoly_1,...,\varpoly_{\varietypolycount})$
    that is of \emph{high rank}
    \footnote{We note that the higher the rank of the collection is, the more accurate the greater or equal in the theorem is.}
    \footnote{We also note that for this result some assumptions are needed regarding the field size or the degree of the polynomials in the collection.}
    , i.e. $\variety = \zerofunc{\varpolyset} = \set{x \suchthat \forall i: \varpoly_i(x) = 0}$.
    \newline
    Then, $\reedmullercodeex{\basefield}{\variety}{\degree}$ inherits its distance parameters from $\reedmullercodeex{\basefield}{\field}{\degree}$, i.e:
    \begin{enumerate}
        \item $\normalizedcodedistanceex{\basefield}{\variety}{\degree} \geq \normalizedcodedistanceex{\basefield}{\field}{\degree}$.
        \item $\listdecodingradiusex{\basefield}{\variety}{\degree} \geq \listdecodingradiusex{\basefield}{\field}{\degree}$.
    \end{enumerate}
\end{corollary*}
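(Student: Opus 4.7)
The plan is to reduce the corollary directly to the main theorem (and its distance counterpart Theorem~\ref{thm:distance-of-RM-in-X}) by verifying that a high-rank variety $\variety = \zerofunc{\varpolyset}$ satisfies the two hypotheses assumed there: namely, that $\variety$ is a \emph{lift-enabler} for $\reedmullercodeex{\basefield}{\field}{\degree}$, and that $\variety$ has the \emph{$\degree$-relative rank-bias property}. Since both properties are invoked as black boxes in the main theorem, the corollary becomes essentially a matter of citing the right results from the literature with parameters chosen appropriately for the regime $\degree < \abs{\basefield}$.

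First I would address the lift-enabler condition. The goal here is to show that any function $\funcdef{\onvarpoly}{\variety}{\basefield}$ which satisfies all degree-$\degree$ cube constraints lying in $\variety$ is the restriction to $\variety$ of some degree-$\degree$ polynomial on $\field$. This is precisely the statement of extension of weakly polynomial functions on high-rank varieties. Concretely, I would invoke the results of Kazhdan and Ziegler~\cite{kazhdan2018polynomial, kazhdan2019extendingweaklypolynomialfunctions, kazhdan2020propertieshighranksubvarieties}, which give that, provided the rank of the defining collection $\varpolyset = (\varpoly_1, \ldots, \varpoly_{\varietypolycount})$ is sufficiently large as a function of $\degree$, $\varietypolycount$ and $\abs{\basefield}$, every weakly polynomial function of degree $\leq \degree$ on $\variety$ extends to a polynomial of degree $\leq \degree$ on $\field$. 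Note that the formulation of weakly polynomial functions via alternating-sum cube constraints is exactly the induced-code definition used here, so the extension theorem translates directly to the lift-enabler property.

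Next I would verify the $\degree$-relative rank-bias property. The required statement is that on $\variety$, a polynomial whose relative rank (rank over $\variety$, measured by the rank of its lifts modulo the defining ideal of $\variety$) is large must have small relative bias. For this I would invoke the main regularization-type result of Lampert~\cite{lampert2021relative}, which establishes precisely the quantitative rank-bias equivalence for polynomials on high-rank varieties. The careful point is to verify that Lampert's notion of relative rank is the one consumed by the proof of the main theorem (in particular, it is the lift-based rank, not the variant of Gowers-Karam mentioned in the footnote, which would not support the regularization step).

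With both properties in hand, the corollary follows directly: applying the main theorem gives $\listdecodingradiusex{\basefield}{\variety}{\degree} \geq \listdecodingradiusex{\basefield}{\field}{\degree}$, and applying Theorem~\ref{thm:distance-of-RM-in-X} gives $\normalizedcodedistanceex{\basefield}{\variety}{\degree} \geq \normalizedcodedistanceex{\basefield}{\field}{\degree}$. I do not anticipate a genuine obstacle here beyond bookkeeping; the only subtle point, and the one I would be most careful about, is tracking the quantitative dependence between the rank threshold needed by the Kazhdan-Ziegler extension theorem, the rank threshold needed by Lampert's rank-bias theorem, and the rank threshold required internally by the proof of the main theorem, so that a single lower bound on $\drank{}{\varpolyset}$ (depending on $\degree, \varietypolycount, \abs{\basefield}$ and on the $\epsilon$ appearing in the definition of list-decoding radius) suffices to invoke all three simultaneously. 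This is the reason for the two footnotes in the corollary statement about higher rank giving sharper conclusions and about the mild assumptions on field size and degree of the defining polynomials.
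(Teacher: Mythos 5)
Your proposal matches the paper's own route: the corollary is obtained by feeding the high-rank variety into Theorem~\ref{thm:distance-of-RM-in-X} and Theorem~\ref{thm:list-decoding-RM-in-X}, after establishing the lift-enabler property via the Kazhdan--Ziegler extension theorems (Section~\ref{sec:polynomials_in_X}) and the rank-bias property via Lampert's result together with the rank-comparison arguments of Appendix~\ref{sec:comparing-ranks}. The only nuance you gloss over is that what the paper actually obtains is the \emph{limited} relative rank-bias property (Corollary~\ref{high-rank-variety-has-limited-rank-relative-bias-property}), since the fixed rank of $\varpolyset$ caps how small the bias parameter can be, which is precisely why the inherited bounds hold only for $\epsilon$ above a threshold and why the footnotes qualify the inequalities.
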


\paragraph{Main Technical Challenge}
We achieve these results by combining the two black-box properties of subsets $\variety \subseteq \field$ we presented.
Analysis of the polynomials in $\variety$ raises a new challenge, as previous techniques that were used to analyze low-degree polynomials,
both regarding $\field$~\cite{green2007distribution} and regarding subsets $\variety$~\cite{lampert2021relative},
were focused on maintaining the behavior of polynomials \emph{in the set they work on} ($\field$ and $\variety$ accordingly).
\newline
The novelty of our new technique is that it uses a similar approach to analyze polynomials $\variety$ as commonly used in $\field$,
\emph{while simultaneously maintaining a connection} between polynomials in $\variety$ to polynomials in $\field$.
This connection allows us to deduce that polynomials in $\variety$ behave similarly to polynomials in $\field$.
Informally, given a question regarding a polynomial in $\variety$, our new technique allows us to
associate it with a ``correct'' lift of it, and answer the question \emph{using properties of its lift}.
We emphasize that the correct lift (the one we later choose to use) \emph{depend} on the question,
thus we cannot pick a single canonical lift to \emph{generally} describe each polynomial in $\variety$.

Next we describe this challenge in more detail.
\newline
Analyses of polynomials in $\field$ were commonly based on the structure-randomness connection of polynomials in $\field$.
To use this connection, a procedure introduced by~\cite{green2007distribution}, which is called the \emph{regularization process}, is often used~\cite{kaufman2008worst, tao2011inverse, hatami2011higher, bhattacharyya2013locally, bhattacharyya2013algorithmic, DBLP:journals/corr/0001L15}.
This procedure takes any collection of polynomials, and constructs from it another collection of polynomials that has \emph{equidistriubtion in $\field$}
and ``captures'' all functions ``captured'' by the previous collection.
This notion of ``capturing'' is formulated by a definition called \emph{measurable},
and thus it is required that every function measurable by the old collection will be measurable by the new collection.

We note that the regularization procedure achieves random behavior in $\field$ by requiring the collection to have an \emph{extremely low algebraic structure}:
This implies the new collection has random behavior (equidistributed) as it is a property of $\field$.
The notion of structure is captured by a definition called \emph{rank}, where a polynomial with high rank has extremely low structure.
Additionally, the notion of lack of random behavior is captured by a definition called \emph{bias}, where a polynomial with low bias behaves randomly (equidistributed).
Therefore, the equidistribution is achieved in the regularization process by constructing a collection with high rank, as \emph{in $\field$ high rank implies low bias}.

To generalize these ideas to $\variety$, one must achieve a similar result in $\variety \subseteq \field$:
Given any collection of polynomials, construct a new collection of polynomials that is both equidistributed in $\variety$
and captures every function in $\variety$ that was previously captured.
In our case, however, we must also ensure that the new collection also captures all functions that were previously-captured \emph{in $\field$},
as in our case we wish to use the connection of polynomials in $\variety$ to polynomials in $\field$.
This can be summarized by 3 requirements:
\begin{enumerate}
    \item The polynomials in the new collection will behave random in $\variety$.
    \item Every function that was measurable in $\variety$ by the old collection will be measurable by the new collection in $\variety$.
    \item Every function that was measurable \emph{in $\field$} by the old collection will be measurable by the new collection \emph{in $\field$}.
\end{enumerate}
Alas, this third-requirement is incompatible with the way we achieve the first requirement.
Achieving the first requirement, which is the random behavior in $\variety$, is done by requiring an extremely low algebraic structure \emph{according to relative rank}.
This requires one to consider all possible lifts of polynomials in the collection to avoid any structure.
\newline
More accurately
\footnote{As the polynomials we have here are polynomials in $\field$ we can not discuss their lift.}
, for a polynomial $\funcdef{\genpoly}{\field}{\basefield}$,
we define an \emph{$\variety$-equivalent polynomial for $\genpoly$} to be
a polynomial in $\field$ that coincides with $\genpoly$ on $\variety$ and has the same degree bound as $\genpoly$
\footnote{This is the same as considering all lifts of the polynomial $\restrictfunc{\genpoly}{\variety}$, assuming such lift exist.}
.
Using this definition, the definition of relative rank requires examining all possible \emph{$\variety$-equivalent} polynomials,
and ensuring non of them exhibit structure.
\newline
Typically (in $\field$ for example), avoiding structure is achieved by replacing every structured polynomial by a \emph{small}
set of less-structured polynomials that capture it.
We note that it is \emph{crucial} that the set is small, and from reasons we did not explain here (see definition~\ref{definition:rank}), it is promised because the polynomial we wish to replace is structured.
\newline
For $\variety$, we aim to avoid \emph{all} $\variety$-equivalent polynomials of a polynomial from being structured.
Achieving this, while keeping the collection small,
requires one to replace the polynomial by a set of less-structured polynomials that capture a \emph{structured-$\variety$-equivalent} of it.
Therefore, this process creates a new collection that captures this $\variety$-equivalent polynomial,
but does not necessarily capture the original polynomial!
\newline
In summary, the challenge is that avoiding the structure of \emph{all} the lifts of a polynomial to achieve equidistribution in $\variety$,
without adding too many polynomials, may harm the functions we capture in $\field$.

\paragraph{Introducing New Tools}
We overcome this challenge by presenting a new definition that relaxes the notion of \emph{measurable} we required for functions in $\field$,
which we call \emph{$\variety$-measurable}.
This enables us to describe a relaxed version of the regularization process,
in which we require that every function in $\field$ that was $\variety$-measurable by the old collection will still be $\variety$-measurable by the new collection.
In contrast to the original regularization process, which mandated that functions that were measurable by the old collection will be measurable by the collection,
this relaxed definition only requires such functions to be \emph{$\variety$-measurable} by the new collection.

Even though we no longer need to capture all previously captured functions in $\field$,
it is important that the new relaxed-definition is strict enough to keep the connection between polynomials in $\variety$ and in $\field$.
Therefore, maintaining the $\variety$-measurable functions throughout the regularization process cannot be done trivially,
and this is handled in a procedure we call \emph{the $\variety$-relative regularization process} which is a stronger-version of the regularization process that is used in $\field$.
This new definition and procedure are thoroughly described in Section~\ref{sec:regularization-relative-to-X}.

We note that these new definition and procedure are a novel contribution of this work, and we believe they can
be useful in future research of the quotient Reed-Muller code.

\subsection{Comparison to Related Work}\label{subsec:previous-work}
In~\cite{bhowmick2014list} the authors studied the list decoding radius of Reed Muller codes $\field$.
They proved that, for prime fields, the list decoding radius \emph{reaches the distance of the code}, as conjectured earlier by~\cite{10.1145/1374376.1374417}
\footnote{Note that it is known that $\listdecodingradiusex{\basefield}{\field}{\degree} \leq \normalizedcodedistance{\basefield}{\degree}$,
    and therefore, in a sense, their result is \emph{optimal in $\field$} assuming $\degree, \abs{\basefield}$ are considered as constants.}
\footnote{We also note that their work also apply to the regime $\degree \geq \abs{\basefield}$. }
.
Formally, they showed the following theorem:
\begin{theorem}~\cite[Theorem 1]{bhowmick2014list}
Let $\basefield$ be a prime field.
Let $\epsilon > 0$ and $\degree, \blocklength \in \naturalnumbersset$.
There exists a constant
\footnote{It is important to note that $c$ is \emph{independent of $\blocklength$}.}
$c \definedas c(\abs{\basefield}, \degree, \epsilon)$ such that:
\[
    \listpolycount{\basefield}{\field}{\degree}{\normalizedcodedistance{\basefield}{\degree}- \epsilon} \leq c
\]
\end{theorem}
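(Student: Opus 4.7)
The plan is to follow the higher-order Fourier analytic framework, combining the regularization process with equidistribution of high-rank polynomial collections over prime fields. The strategy has three key components.

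First, a reduction from list decoding to counting correlated polynomials: if $\genpoly$ is of degree $\le \degree$ and $\dist{\genpoly, \genfunc} \le \normalizedcodedistance{\basefield}{\degree} - \epsilon$, then $\genpoly$ agrees with $\genfunc$ on a fraction $\ge \degree/\abs{\basefield} + \epsilon$, and averaging over the nonzero characters of $\basefield$ produces some $t \in \basefield^\ast$ with
\[
\Bigl|\,\mathbb{E}_x\, \charfunc{t(\genfunc(x) - \genpoly(x))}\,\Bigr| \;\ge\; \epsilon' \;:=\; \epsilon / (\abs{\basefield} - 1).
\]
It therefore suffices to bound the number of degree-$\le \degree$ polynomials that ``$\epsilon'$-correlate with $\genfunc$'' in this sense.

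Second, regularize. Suppose for contradiction there are $L = L(\blocklength)$ distinct $\epsilon'$-correlating polynomials $\genpoly_1, \ldots, \genpoly_L$ with $L \to \infty$ as $\blocklength \to \infty$. Apply the Green--Tao / Kaufman--Lovett regularization lemma to $\{\genpoly_i\}$ with a sufficiently fast rank-growth function, producing a collection $\mathcal{Q} = (Q_1, \ldots, Q_m)$ of polynomials of degree $\le \degree$ and very high rank, with $m \le m_0(\epsilon, \degree, \abs{\basefield})$ independent of $\blocklength$ and $L$, such that each $\genpoly_i = \Gamma_i(Q_1(\cdot), \ldots, Q_m(\cdot))$ for some function $\Gamma_i : \basefield^m \to \basefield$.

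Third, exploit equidistribution. Over prime fields, high rank forces the joint distribution of $(Q_1(x), \ldots, Q_m(x))$ under uniform $x \in \field$ to be $o_{\blocklength}(1)$-close to uniform on $\basefield^m$. Conditioning on the atoms of $\mathcal{Q}$ then gives
\[
\mathbb{E}_x\, \charfunc{t_i(\genfunc - \genpoly_i)} \;\approx\; \abs{\basefield}^{-m} \sum_{b \in \basefield^m} \alpha(b)\, \overline{\charfunc{t_i \Gamma_i(b)}}, \qquad \alpha(b) := \mathbb{E}\!\left[\charfunc{t_i \genfunc(x)} \,\Big|\, \mathcal{Q}(x) = b\right],
\]
an inner product on $L^2(\basefield^m)$. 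Since $\abs{\alpha(b)} \le 1$, Parseval on $L^2(\basefield^m)$ (after expanding $\charfunc{t_i \Gamma_i(\cdot)}$ in the standard additive Fourier basis of $\basefield^m$) bounds the number of distinct $\Gamma_i$'s with such inner product $\ge \epsilon'/2$ by a constant $c = c(\epsilon', m_0, \abs{\basefield})$ independent of $\blocklength$ and $L$. Since each $\Gamma_i$ together with $\mathcal{Q}$ determines $\genpoly_i$, this contradicts $L \to \infty$ and yields $L \le c$.

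The main obstacle is choosing the rank-growth function in the regularization sharp enough that the equidistribution error from $\mathcal{Q}$ is dominated by $\epsilon'$ even as $m$ grows through the iterative refinements; this forces $c$ to be a tower-type function in $1/\epsilon$, $\degree$, and $\abs{\basefield}$. A subtler point is ensuring that the regularized polynomials $Q_j$ remain of degree at most $\degree$, rather than being forced to drop to lower-degree derivatives, which is where the prime-field assumption and $\degree < \abs{\basefield}$ enter via the Kaufman--Lovett refinement of Green--Tao regularization.
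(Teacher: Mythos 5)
There are two genuine gaps here, and they are exactly the points where the Bhowmick--Lovett argument (which this paper imports as a black box and later mimics over $\variety$) has to work hard.

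First, your opening reduction throws away the theorem. Replacing ``$\dist{\genpoly,\genfunc}\le \normalizedcodedistance{\basefield}{\degree}-\epsilon$'' by ``some nontrivial character of $\genfunc-\genpoly$ has bias $\ge\epsilon'$'' is sound in the forward direction, but the resulting intermediate statement --- that only constantly many degree-$\le\degree$ polynomials can $\epsilon'$-correlate with a fixed $\genfunc$ --- is false. Take $\genfunc\equiv 0$: every polynomial of the form $c\prod_{i=1}^{\degree}(L-a_i)$ with $L$ linear (and, more generally, every low-rank biased polynomial) satisfies $\Pr[\genpoly=0]=\degree/\abs{\basefield}$ and hence has correlation bounded below by a constant depending only on $\abs{\basefield},\degree$, yet there are $\Omega(\abs{\basefield}^{\blocklength})$ such polynomials. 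The list-decoding bound at radius $\normalizedcodedistance{\basefield}{\degree}-\epsilon$ is therefore not a statement about correlation alone; the exact threshold $1-\degree/\abs{\basefield}$ must enter the argument. In the actual proof it enters through a Schwartz--Zippel-type claim (Claim 4.2 of \cite{bhowmick2014list}, restated in this paper as the lemma on polynomials in disjoint blocks of variables): a degree-$\le\degree$ polynomial on $\basefield^{\blocklength_1+\blocklength_2}$ agreeing with a function of only the first $\blocklength_1$ variables on more than a $\degree/\abs{\basefield}$ fraction of inputs cannot depend on the last $\blocklength_2$ variables. Nothing in your sketch plays this role, so the radius $\normalizedcodedistance{\basefield}{\degree}-\epsilon$ is never actually used.

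Second, the regularization step as you state it is not available: regularizing the list $\genpoly_1,\dots,\genpoly_L$ produces a factor whose size is bounded only in terms of $L$ (the regularity lemma bounds output size as a function of input size), so you cannot obtain $m\le m_0(\epsilon,\degree,\abs{\basefield})$ independent of $L$; and since your final count is in terms of functions on $\basefield^m$, the argument becomes circular. The correct route, which this paper follows when proving the analogue over $\variety$, is to build a constant-size factor from the received word alone via the low-complexity approximation lemma (\cite[Corollary 3.3]{bhowmick2014list}, giving $c\le 1/\epsilon^2$ approximating codewords), regularize that, and then treat each candidate codeword $\genpoly$ separately: adjoin $\genpoly$ to the factor, re-regularize so the result syntactically refines the old factor, use degree preservation (rank high enough for the faithful-composition/degree-preservation lemma), equidistribution, and the Schwartz--Zippel claim above to conclude that $\genpoly$ was already measurable with respect to the original constant-size factor. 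Once that is established the count is simply the number of possible measurement functions, at most $\abs{\basefield}^{\abs{\basefield}^{c'}}$; no Parseval argument is needed, and the one you propose would not go through anyway, since the functions $\charfunc{t_i\Gamma_i(\cdot)}$ for distinct list elements need not be pairwise near-orthogonal.
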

Our work gives new tools for analyzing polynomials in $\variety \subseteq \field$,
which we later use to follow their line of proof and show equivalent result \emph{in $\variety$}.

We next present related work regarding the study of polynomial codes in subsets $\variety \subseteq \field$.
Before presenting them specifically, we note that our work has a \emph{fundamental difference} than that of the previous study of polynomials in subsets.
Most works which studied polynomials over subsets $\variety \subseteq \field$ were focused on subsets in which every polynomial has a \emph{unique} lift.
This ensures that there is a 1-to-1 correspondence between polynomials in $\variety$ and in $\field$
and therefore allows easier connection between polynomials in $\variety$ and in $\field$.
\newline
We note that our work is non-trivial even in this case:
it extracts the properties of $\field$ that were used in~\cite{bhowmick2014list}, in a way they can be used to analyze quotient Reed-Muller codes.
However, as described earlier, our work addresses an additional substantial challenge which arise when the lift is \emph{not} unique.
Thus our work is only comparable to other works in the unique-lift case, which is the less-challenging case we address.

The first line of work we mention is this regard is the study of hitting sets for low degree polynomials~\cite{6243404, 10.1145/2554797.2554828, 6875485},
and a stronger variant of it which is the study of pseudorandom-generators against low degree polynomials~\
    \cite{10.1145/1060590.1060594, 4389478, 10.1145/1374376.1374455, 4558816,  Cohen2013PseudorandomGF, derksen2022fooling, dwivedi2024optimalpseudorandomgeneratorslowdegree}
Both definitions capture subsets
\footnote{Sometimes this subset is allowed to be a \emph{multiset}.}
$\variety \subseteq \field$ such that every polynomial over $\field$ has a non-negligible distance from $0$ \emph{when restricted to $\variety$}.
This requirement implicitly implies that every low degree polynomial over $\variety$ has at most a \emph{single} lift.

Another line of work worth mentioning in this regard is~\cite{4558818, guruswami2017efficientlylistdecodablepuncturedreedmuller},
which studied \emph{puncturing of Reed-Muller codes}.
This line of work studied the construction of sets $\variety \subseteq \field$,
such that puncturing Reed-Muller codes over $\variety$, that is, taking every original codeword and \emph{restricting} it to $\variety$, will yield a good error-correction code.
To perform their analysis, it was important that every polynomial in $\variety$ has at most a single lift,
and therefore it was an assumption in their work.

The papers~\cite{brakensiek2024genericreedsolomoncodesachieve, alrabiah2024randomlypuncturedreedsolomoncodes, brakensiek2024generalizedgmmdspolynomialcodes}
also studied similar questions.
This line of work is followed by the resolution of the \emph{GM-MDS conjecture}, which was proved by~\cite{DBLP:journals/corr/abs-1803-02523, DBLP:journals/corr/abs-1803-03752}.
\newline
We note that these works
were focused on the regime where the field is \emph{large}.
More specifically,
they require that the field is \emph{large in respect of $\blocklength$}, i.e $\Omega(\blocklength)$.
We emphasize that our work is focused on \emph{constant fields}.
Moreover, their results were regarding \emph{random} puncturing, while our result makes an \emph{explicit} puncturing.

%
We also note that most studies presented above also achieved results regarding the \emph{rate} of the punctured code.
This property of the code can be analyzed naturally when each polynomial over $\variety$ has a \emph{unique} lift, as such assumption implies that the number of polynomials remains the same in $\variety$ as of in $\field$.
As our work does \emph{not} assume such uniqueness, the rate of the code we consider is not analyzed in our work, and thus remained \emph{an open problem}
\footnote{Note that is highly dependent on $\variety$, as additional assumptions are needed to acheive good results in this regard.}
.

\subsection{Proof Overview}\label{subsec:our-work}
In this subsection we present our main technical contribution, which is how we address the challenge of \emph{non-unique lift}.
This is done by introducing the definition of being \emph{$\variety$-measurable}, and by presenting a new tool which is the \emph{relative regularization process}.

To describe them clearly, we first elaborate more on two definitions we described briefly.
\paragraph{Measurable}
Suppose we have a collection of polynomials
\footnote{In this context we think of $c$ as a small (constant for example).}
$\genpolyset[1] = \parens{\genpoly_1,...,\genpoly_c}$ where $\funcdef{\genpoly_i}{\field}{\basefield}$ is a polynomial of degree $\leq \degree$.
We say a function $\funcdef{\genfunc}{\field}{\basefield}$ is \emph{measurable in respect of $\genpolyset[1]$} if it can be determined by the values of $\genpoly_1,...,\genpoly_c$:
if one knows the values of $\genpoly_1(x),...,\genpoly_c(x)$, then she also knows the value of $\genfunc(x)$.
This mathematical-analysis notion, which was first used in a similar context in~\cite{green2007primescontainarbitrarilylong}, is formally defined as follows:
\begin{definition}[Measurable]
    We say a function $\funcdef{\genfunc}{\field}{\basefield}$ is \emph{measurable in respect of $\genpolyset[1] = \parens{\genpoly_1,...,\genpoly_c}$} if
    there exists $\funcdef{\Gamma_{\genfunc}}{\basefield^c}{\basefield}$ such that:
    \[
        \genfunc(x) = \Gamma_{\genfunc}(\genpoly_1(x),...,\genpoly_c(x))
    \]
\end{definition}
This definition can be thought of as the collection $\genpolyset$ ``captures'' the function $\genfunc$
\footnote{Note that this definition also generalizes to every collection of functions.
For now, one can think of the collcetion as a collection of bounded degree polynomials.}
\footnote{One can think of this definition as a generalization of linear span: the collection \emph{spans} the function, where $\Gamma$ is some notion of a span.}
.

Moreover, it would have been useful had this collection of polynomials been ``pseudo-random'', i.e the vector $\parens{\genpoly_1(x),...,\genpoly_c(x)}$ would be equidistributed over a random input $x \in \field$.
This equidistribution would allow us to better understand functions $\genfunc$ that are measurable in respect of $\genpolyset$.

As $\field$ has the rank-bias property, this equidistribution can be achieved by requiring $\genpolyset$ to be a collection of high-rank.
This is a fundamental idea behind the regularization process, first presented in~\cite{green2007distribution}.
Given a collection of polynomials $\genpolyset$, the regularization process constructs another collection $\genpolyset[2]$ of polynomials (with the same degree bound),
such that $\genpolyset[2]$ is a collection of high-rank (and therefore equidistributed) that \emph{refines} $\genpolyset$.
By refine, we mean that every function that was measurable by the first collection $\genpolyset$ is also measurable by the new collection $\genpolyset[2]$ (See definition~\ref{def:semantic-refinement}).

\paragraph{Relative Rank}
We remind the reader that $\variety$-relative rank is a notion that measures the algebraic structure of a polynomial in a subset $\variety \subseteq \field$, by considering the structure of all of its $\variety$-equivalent polynomials.
This notion was presented by~\cite{gowers2022equidistributionhighrankpolynomialsvariables, lampert2021relative}, and is used to achieve equidistribution in $\variety$ assuming $\variety$ has relative rank-bias property.
It is defined as follows:
\begin{definition}[Relative Rank, informal.
See definition~\ref{def:relative-rank-of-polynomial}]
    Let $\variety \subseteq \field$ be a subset,
    let $\degree \in \naturalnumbersset$, and let $\funcdef{\genpoly}{\field}{\basefield}$ be a polynomial of degree $= \degree$.
    The $\variety$-relative rank of $\genpoly$ is defined as follows:
    \[
        \relrank{\variety}{\genpoly} \definedas \min \set{\rank{\genpoly - \relativeremainder{\genpoly}} \suchthat
        \relativeremainder{\genpoly} \in \allpolyset{\leq \degree}{\field}{\basefield}, \restrictfunc{\relativeremainder{\genpoly}}{\variety} \equiv 0}
    \]
\end{definition}

\subsubsection{\titlevariety-measurable and The \titlevariety-Relative Regularization Process}
In this subsection we discuss the generalization of the regularization process to subsets $\variety \subseteq \field$ using the equivalent of rank-bias relation in $\variety$.
We name this tool \emph{the relative regularization process}.

Practically, we use this tool to show that given a specific question in mind, every $\funcdef{\onvarpoly}{\variety}{\basefield}$ has some polynomial $\funcdef{\genpoly}{\field}{\basefield}$ that behave ``similarly'' in respect to this question.
This allows us to pull properties of $\genpoly$ to better understand $\onvarpoly$.
The perfect candidate for such $\genpoly$ is a \emph{lift} of $\onvarpoly$.
\newline
In order to use $\genpoly$ to deduce properties of $\onvarpoly$, we use the well-studied properties of polynomials in $\field$ to acheive properties of $\genpoly$, and relate these to properties of $\onvarpoly$.
More specifically, assume that $\onvarpoly$ and $\genpoly$ are measurable in respect of a collection of polynomials $\genpolyset$ (each in its domain).
Our strategy is to use $\genpoly$ to deduce properties of $\Gamma_{\genpoly}$, and then use the properties of $\Gamma_{\genpoly}$ to deduce properties of $\onvarpoly$.

Now let us describe the extra challenge.
We start by following the ideas of the regularization process we described for $\field$.
Assuming the collection is not a collection of $\variety$-relative high rank, then there must exist a polynomial in the collection that has low \emph{relative} rank, which we denote by $\genpoly^\star$
\footnote{More precisely, some linear combination of polynomials has low relative rank.}
.
Note that in relative rank, this does not necessarily mean that $\genpoly^\star$ is of low rank, but that there exists another $\variety$-equivalent polynomial that has a low rank.
Thus, even if we remove the low-rank $\variety$-equivalent polynomial and add to the collection all the polynomials that decomposed it,
we cannot require that every function that was measurable by the old collection will still be measurable by the new collection:
even the polynomial we removed is not necessarily measurable by the new collection!
\newline
To allow such regularization process to still apply, we note that while $\genpoly$ might not be measurable in respect of the new collection, a $\variety$-equivalent polynomial of $\genpoly$ \emph{is} measurable with respect of it.
Therefore, we relax the notion of being measurable to being \emph{$\variety$-measurable}.
\newline
We say a function $\genfunc$ is $\variety$-measurable in respect of $\genpolyset$ if it can be determined by the polynomials of $\genpolyset$
\emph{up to a valid $\variety$-remainder}.
We first describe an incomplete definition, then present the challenge that rises with it, and finally present its resolution.
\begin{definition}[$\variety$-measurable, Incomplete Definition]
\footnote{This incomplete definition lacks the requirement of the \emph{validity} of the $\variety$-remainder}
We say a function $\genfunc$ is $\variety$-measurable
if there exists a function $\funcdef{\Gamma}{\basefield^c}{\basefield}$
and a $\variety$-remainder, i.e a function $\funcdef{\relativeremainder{\genfunc}}{\field}{\basefield}$ with $\restrictfunc{\relativeremainder{\genfunc}}{\variety} \equiv 0$
such that:
\[
    \forall a \in \field: \genfunc(a) = \Gamma(\genpoly_1(a),...,\genpoly_c(a)) + \relativeremainder{\genfunc}(a)
\]
\end{definition}

Previous works analyzing polynomials in $\field$ were able to deduce two things from $\genfunc$ being measurable by $\genpolyset$:
that the structure of $\Gamma$ is similar to the structure of $\genfunc$, and that a random input of $\Gamma$ behave similarly to a random input of $\genfunc$.
\newline
To study polynomials in $\variety$, we wish to connect $\onvarpoly$ to $\genpoly$ (which is a lift of $\onvarpoly$).
Thus, we think of $\genfunc = \genpoly$, and require two similar things.
Firstly, we want the structure of $\Gamma$ to be similar to the structure of $\genfunc$ (in this case, $\genpoly$), which we understand as $\genfunc$ is a polynomial in $\field$.
Secondly, we want a random input of $\Gamma$ to behave similarly to a random input of $\onvarpoly$, as $\onvarpoly$ is the polynomial we wish to understand.
The latter is easily achieved using the fact high $\variety$-relative rank implies equidistribution in $\variety$.
The former, however, might be damaged by the remainder as we defined it: we can only learn the structure of $\Gamma$ using the structure of $\genfunc - \relativeremainder{\genfunc} = \Gamma(\genpoly_1,...,\genpoly_c)$.
However, the structure of $\genfunc - \relativeremainder{\genfunc}$ can be very different from the structure of $\genfunc$,
as we did not require any structure of the $\variety$-remainder $\relativeremainder{\genfunc}$.
Thus, we can not deduce the structure of $\Gamma$ via the structure of $\genfunc$ using the incomplete definition described above.

To handle this issue, we add one more requirement regarding the $\variety$-remainder,
which ensures that the structure of $\genfunc$ can be understood via the structure of $\Gamma$:
\[
    \deg(\genfunc - \relativeremainder{\genfunc}) \leq \deg(\genfunc)
\]
If the $\variety$-remainder also has this property, we say it is a \emph{valid} $\variety$-remainder for $\genfunc$.
This can be summarized by the following (complete) definition:
\begin{definition}[$\variety$-measurable]
    We say a function $\genfunc$ is $\variety$-measurable
    if there exists a function $\funcdef{\Gamma}{\basefield^c}{\basefield}$
    and a \emph{valid} $\variety$-remainder, i.e a function $\funcdef{\relativeremainder{\genfunc}}{\field}{\basefield}$
    with $\restrictfunc{\relativeremainder{\genfunc}}{\variety} \equiv 0$ and $\deg(\genfunc - \relativeremainder{\genfunc}) \leq \deg(\genfunc)$
    such that:
    \[
        \forall a \in \field: \genfunc(a) = \Gamma(\genpoly_1(a),...,\genpoly_c(a)) + \relativeremainder{\genfunc}(a)
    \]
\end{definition}
We use this new definition the following way:
Instead of using $\genfunc$ to understand $\Gamma$, we use $\genfunc - \relativeremainder{\genfunc}$ to do so.
We choose $\genfunc - \relativeremainder{\genfunc}$ as it has the same structure as $\genfunc$, but it is ``closer'' to the function $\Gamma$ as $\genfunc - \relativeremainder{\genfunc} = \Gamma(\genpoly_1,...,\genpoly_c)$
\footnote{One can think of this step as "taking the right $\variety$-equivalent" in respect of $\genpolyset$.}
.
Finally, as $\Gamma$ behaves similarly to $\onvarpoly$ for random inputs, we can use $\Gamma$ to deduce properties regarding $\onvarpoly$.
\newline
With this in hand, let us finish describing the relative-regularization process.
The requirement on the validity of the $\variety$-remainder raises a new challenge in the $\variety$-relative regularization process:
we need to somehow control the structure of the $\variety$-remainder, even though this ``error''
is substituted in $\Gamma$ each time we wish to replace a polynomial in our collection.
We address this challenge using a Lemma proved in~\cite{DBLP:journals/corr/0001L15} called the ``faithful composition lemma'',
which allows us to deduce strong properties regarding the structure of $\Gamma$ given the collection was of a high (regular) rank in the first place.
Therefore, we add to each step of the relative-regularization process a (regular) regularization, which ensures $\Gamma$ is very structured.
This strong structure of $\Gamma$ is later used to control the error and deduce it is in the form of a valid $\variety$-remainder.
For the exact details, see Theorem~\ref{theorem:regularization-in-X}.
We conclude this by informally stating our main technical theorem, which is the relative regularization process we just described:
\begin{theorem}[Relative Regularization Process, Informal, See Theorem~\ref{theorem:regularization-in-X}]
Let $\rankval, \degree \in \naturalnumbersset$ be integers that represents a requested rank and degree respectively,
and let $\genpoly_1,...,\genpoly_c$ be a collection of polynomials of degree $\leq \degree$.
Then, there is another collection $\genpoly^{\prime}_1,...,\genpoly^{\prime}_{c^\prime}$ of polynomials of degree $\leq \degree$,
such that:
\begin{enumerate}
    \item Every function that is $\variety$-measurable in respect to the first collection is also $\variety$-measurable in respect to the new collection.
    \item The new collection is of $\variety$-relative rank $\geq \rankval$.
    \item The new collection is of bounded size, i.e $c^\prime \leq C_{\rankfunc, \degree, c}$.
\end{enumerate}
\end{theorem}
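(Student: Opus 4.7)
The plan is to refine the collection iteratively, using a lexicographic potential on the degree profile to guarantee termination. Starting from $\genpolyset^{(0)} \definedas (\genpoly_1,\ldots,\genpoly_c)$, at each step $t$ I check whether $\genpolyset^{(t)}$ already has $\variety$-relative rank $\geq \rankval$; if so, I return it. If not, there must exist a linear combination $\genpoly^\star = \sum_i \lambda_i \genpoly_i^{(t)}$ of homogeneous top degree $e \leq \degree$ whose $\variety$-relative rank is $< \rankval$, witnessed by a $\variety$-vanishing polynomial $\relativeremainder{\genpoly^\star}$ of degree $\leq e$ such that $\genpoly^\star - \relativeremainder{\genpoly^\star}$ has \emph{regular} rank $< \rankval$, and hence admits a decomposition $\genpoly^\star - \relativeremainder{\genpoly^\star} = \Gamma(\genpoly^\prime_1,\ldots,\genpoly^\prime_k)$ with $k < \rankval$ and $\deg(\genpoly^\prime_j) < e$. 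I then form $\genpolyset^{(t+1)}$ by removing one top-degree polynomial $\genpoly_{i_0}^{(t)}$ with $\lambda_{i_0} \neq 0$ and adjoining $\genpoly^\prime_1,\ldots,\genpoly^\prime_k$.

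The heart of the argument is preserving $\variety$-measurability across this swap. Suppose $\genfunc$ is $\variety$-measurable via $\genpolyset^{(t)}$ with witnesses $\Gamma_\genfunc$ and valid $\variety$-remainder $\relativeremainder{\genfunc}$. Solving the relation for $\genpoly_{i_0}^{(t)}$ and plugging $\genpoly^\star = \Gamma(\genpoly^\prime_1,\ldots,\genpoly^\prime_k) + \relativeremainder{\genpoly^\star}$ into $\Gamma_\genfunc$ produces a representation $\genfunc = \tilde{\Gamma}(\genpolyset^{(t+1)}) + \relativeremainder{\genfunc} + E$, where $E$ collects the terms picked up by substituting $\relativeremainder{\genpoly^\star}$ inside the outer combining function; since $\relativeremainder{\genpoly^\star}$ vanishes on $\variety$, so does $E$, and hence $\relativeremainder{\genfunc} + E$ is a $\variety$-remainder. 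The delicate point is the \emph{validity} requirement $\deg(\genfunc - (\relativeremainder{\genfunc}+E)) \leq \deg(\genfunc)$: the term $E$ arises from substituting a remainder into an a priori unconstrained outer function and could in principle blow up in degree. To control this, I interleave a standard (non-relative) regularization into each iteration, so that before the swap $\genpolyset^{(t)}$ has high regular rank; the faithful composition lemma of~\cite{DBLP:journals/corr/0001L15} then forces $\Gamma_\genfunc$ to act as a polynomial of controlled degree in its inputs, which bounds $\deg(E)$ and yields a valid combined remainder.

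Termination and the size bound follow from a lexicographic potential on the degree profile $(c_{\degree},c_{\degree-1},\ldots,c_1)$, where $c_e$ counts polynomials of exact degree $e$ in the current collection: the swap removes one top-degree polynomial of degree $e$ and adds fewer than $\rankval$ polynomials of strictly smaller degree, so the profile strictly decreases in lex order. Combined with the bounded blow-up induced by the interleaved regular regularization (which itself follows the standard Green-Tao template), this yields the desired bound $c^\prime \leq C_{\rankval,\degree,c}$. The main obstacle to watch out for is the second paragraph: the definition of $\variety$-relative rank grants only an \emph{arbitrary} remainder polynomial, so maintaining the strong degree bound of the $\variety$-measurable witness after substitution is not automatic, and it is precisely this tension between relative and regular rank that forces the relative-regularization process to be interlocked with a classical regularization and to lean on the faithful composition lemma.
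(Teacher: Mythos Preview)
Your proposal is correct and follows essentially the same approach as the paper: iterate with a lexicographic potential on the degree profile, interleave a standard (non-relative) regularization at each step so the current collection has high regular rank, then swap out a low-relative-rank combination using its decomposition, and invoke the faithful composition lemma to control the degree of the error picked up by substituting the $\variety$-remainder into the outer combining function. You have correctly identified the delicate point---ensuring the new $\variety$-remainder remains \emph{valid}---and the mechanism the paper uses to resolve it.
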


\subsubsection{List Decoding in \titlevariety via \titlevariety-Relative Regularization}
In this subsection, we demonstrate how to use the relative regularization process to achieve our main theorem: analysis of the list decoding radius of $\reedmullercodeex{\basefield}{\variety}{\degree}$.

We follow the line of proof of~\cite{bhowmick2014list}, but this time, we are interested in bounding the amount of polynomials \emph{in $\variety$} around every function \emph{in $\variety$}.
More specifically, we wish to show that there is a constant number of words that are $(\normalizedcodedistance{\basefield}{\degree} - \epsilon)$-close to any fixed function in $\variety$.

Let $\funcdef{\onvarfunc}{\variety}{\basefield}$ be a received word.
First, we apply a lemma proved in~\cite[Corollary 3.3]{bhowmick2014list}.
The lemma shows that there is a constant-sized (depending on $\epsilon$) collection of polynomials in $\variety$, denoted by $\onvarpolyset[3]$,
such that the distance of $\onvarfunc$ to \emph{any} polynomial can be approximated by the distance of $\onvarfunc$ to some function that is measurable by $\onvarpolyset[3]$ in $\variety$.
This means that instead of bounding the number of polynomials in the radius of $\onvarfunc$, one can bound the number of polynomials in the radius of some function measurable by $\onvarpolyset[3]$.
Thereby, every polynomial-specific measurable function can be thought of as a \emph{low complexity proxy} for $\onvarfunc$ in respect to the polynomial.

Next, we lift each polynomial from $\onvarpolyset[3]$ and apply the \emph{relative regularization process}.
This yields a new collection of polynomials in $\field$ that is constant sized and randomly-behaving (in $\field$).
Denote this new collection by $\genpolyset[3]^{\prime}$
\footnote{We use the same notations as the original proof for clearannce.}
.
Thereby, the question of list decoding is reduced to the following question:
We have a specific constant-sized randomly-behaving collection of polynomials $\genpolyset[3]^\prime = \set{\genpoly[3]_1^{\prime},...,\genpoly[3]^{\prime}_{c^\prime}}$
that was constructed using the function $\onvarfunc$.
We need to bound the amount of polynomials in $\variety$ that are $(\normalizedcodedistance{\degree}{\basefield} - \epsilon / 2)$-close to be measurable by this collection in $\variety$.
Note that the randomly-behaving property was achieved using the \emph{relative rank-bias property} of $\variety$.
Additionally, we note the collection $\genpolyset[3]^\prime$ is a collection of polynomials in $\field$ which we obtained by using the \emph{lift-enabler property} of $\variety$.

From there (and similarly to the analysis in $\field$),
the strategy is to show that polynomials that are that close to being measurable by the randomly-behaving collection $\genpolyset[3]^\prime$, are in fact \emph{measurable} by it.
This will bound the number of such polynomials by the amount of possible functions that are measurable by $\genpolyset[3]^\prime$, which is constant as the collection is of constant size.

Let $\funcdef{\onvarpoly}{\variety}{\basefield}$ be a polynomial of degree $\leq \degree$, and consider a lift of it $\funcdef{\genpoly}{\field}{\basefield}$.
Consider the collection $\genpolyset[3]^\prime \cup \set{\genpoly}$.
Surely, $\genpoly$ is measurable by this collection in $\field$.
Applying $\variety$-relative-regularization to this collection yields a new collection $\genpolyset[3]^{\prime\prime}$ that is equidistributed in $\variety$, such that every $\variety$-measurable function by the old collection is $\variety$-measurable by the new collection.
By a reason we have not explained in this brief explanation, we can ensure this collection is of the form $\genpolyset[3]^{\prime\prime} = \genpolyset[3]^\prime \cup \set{\genpoly[3]_{1}^{\prime\prime},...,\genpoly[3]_{c^{\prime\prime}}^{\prime\prime}}$.

As $\genpoly$ was $\variety$-measurable by $\genpolyset[3]^\prime \cup \set{\genpoly}$ (it was even measurable), $\genpoly$ is $\variety$-measurable by the new collection $\genpolyset[3]^{\prime\prime}$:
That is, $\genpoly$ is measurable by $\genpolyset[3]^{\prime\prime}$ up to a \emph{valid} remainder, denoted by $\relativeremainder{\genpoly}$.
\newline
This means there exists $\funcdef{\Phi}{\basefield^{c^\prime + c^{\prime\prime}}}{\basefield}$ such that:
\[
    \forall a \in \field: \genpoly(a) = \Phi(\genpoly[3]^\prime_1(a),...,\genpoly[3]^\prime_{c^\prime}(a), \genpoly[3]^{\prime\prime}_1(a),...,\genpoly[3]^{\prime\prime}_{c^{\prime\prime}}(a))) + \relativeremainder{\genpoly}(a)
\]

In $\field$, the proof would follow by studying the structure of the function $\Phi$ and use it to induce that $\Phi$ does not depend on its last $c^{\prime\prime}$ variables.
This implies that $\genpoly$ is measurable by the original collection $\genpolyset[3]^{\prime}$ which concludes the proof
\footnote{Note that in $\field$ there is no remainder, so the equation above (with the last $c^{\prime\prime}$ variables as constants) implies measurability by $\genpolyset[3]^{\prime}$.}
.

More accurately, the analysis in $\field$ used the fact that substituting \emph{randomly behaving} polynomials in $\Phi$ yields a structured function
\footnote{In our notations, this structured function is $\genpoly$, which is a polynomial of degree $\leq \degree$ and thus structured}
.
This is used to show that $\Phi$ as a function by itself, with inputs from $\basefield^{c^\prime + c^{\prime\prime}}$, is a very structured function.
The strong structure of $\Phi$, with the fact that $\Phi$ (with inputs substitued to be the functions of $\genpolyset[3]^{\prime\prime}$) is close to the function $\onvarfunc$,
are then combined to deduce that $\Phi$ does not depend on its last $c^{\prime\prime}$ variables.

This paradigm can not be extended effortlessly to our case.
In $\variety$, deducing that $\Phi$ is very structured requires a one-more major step.
This is because we do \emph{not} know any correspondence in the behavior of $\Phi$ (which we want to understand) with the behavior of $\genpoly$ (which we know is structured).
We only know there is a correspondence between $\Phi$ to another function $\genpoly - \relativeremainder{\genpoly}$, which apriori we do not know is structured!

Fortunately, the relative regularization process (Theorem~\ref{theorem:regularization-in-X}) mandates that the remainder of the measurement is \emph{valid}.
That is, if $\genpoly$ was structured (a polynomial of degree $\leq \degree$), then so does $\genpoly - \relativeremainder{\genpoly}$.
This is \emph{crucial}, as it allows us to use the relation between $\Phi$ and $\genpoly - \relativeremainder{\genpoly}$ to deduce that $\Phi$ is structured,
and continue the original outline of the proof of~\cite{bhowmick2014list}.
For more details in this regard, see Theorem~\ref{thm:list-decoding-RM-in-X}.

\subsection{Organization}\label{subsec:organization}
In Section~\ref{sec:preliminaries} we present some basic notations and conventions,
and define the preliminaries we have regarding high-order Fourier analysis in $\field$: polynomials, rank and regularization.
We later generalize each component we presented in Section~\ref{sec:preliminaries} to study polynomials in $\field$ to also study polynomials in $\variety$:
in Section~\ref{sec:polynomials_in_X} we present the set of polynomials in $\variety$ and present the \emph{lift-enabler property};
in Section~\ref{sec:relative-rank-bias-property} we present the \emph{$\variety$-relative rank-bias property};
and in Section~\ref{sec:regularization-relative-to-X} we present the $\variety$-measurable notion, and our main tool, which is the \emph{$\variety$-relative regularization process}.
Next, we present two applications regarding the distance parameters of Reed-Muller codes in $\variety$:
In Section~\ref{sec:radius-of-RM-over-X} we prove the inheritance of the \emph{distance} of the code;
and in Section~\ref{sec:list-decoding-reed-muller-over-X} we prove the inheritance of the \emph{list decoding distance} of it (which is much more involved).


    \section[Preliminaries]{Preliminaries}\label{sec:preliminaries}

\subsection{Basic Definitions and Notations}\label{subsec:definitions_and_notations}
We denote by $\naturalnumbersset$ the set set of integers, i.e natural numbers (excluding 0).
For an integer $k$ we denote $\sparens{k} \definedas \set{1,2,...,k}$.
We use $y = x \pm \epsilon$ to denote $y \in \sparens{x - \epsilon, x + \epsilon}$, and similarly $y = x \mp \lambda$ to denote $y \in \sparens{x + \lambda, x - \lambda}$ (usually when $\lambda < 0$).
\newline
Fix a prime field $\basefield = \basefield_{\genprime}$.
Denote by $\abs{\cdot}$ the natural map of $\basefield$ to $\set{1,...,\genprime - 1} \in \naturalnumbersset$.
We denote the character from $\basefield$ by $\charfunc{x} \definedas e^{2 \pi i \cdot \abs{x}}$

Generally speaking and unless stated otherwise, we use the following conventions:
We use $\blocklength \in \naturalnumbersset$ to denote the number of variables in Reed-Muller code.
We use $\degree$ to denote a degree (typically the degree of the polynomials in our code), and $\variety$ to denote the subset of $\field$ we work in i.e $\variety \subseteq \field$.
Properties of the subset $\variety$ will usually be denoted with $\tilde{\square}$.
We use $\genfunc[1], \genfunc[2], \genfunc[3]$ to denote general functions with domain $\field$, and $\onvarfunc[1], \onvarfunc[2], \onvarfunc[3]$ to denote functions with domain $\variety$.
We use $\genfuncset[1], \genfuncset[2], \genfuncset[3]$ and $\onvarfuncset[1], \onvarfuncset[2], \onvarfuncset[3]$ respectively to denote sets of such functions.
Similarly, we use $\genpoly[1], \genpoly[2], \genpoly[3]$ to denote polynomials with domain $\field$, and $\onvarpoly[1], \onvarpoly[2], \onvarpoly[3]$ polynomials with domain $\variety$.
We use $\genpolyset[1], \genpolyset[2], \genpolyset[3]$ and $\onvarpolyset[1], \onvarpolyset[2], \onvarpolyset[3]$ respectively to denote sets of such polynomials.

\subsection[Polynomials in \titlefield]{Polynomials in \titlefield}\label{subsec:polynomials_in_Fn}
We start by presenting a standard definition for a polynomial over a finite field.
\begin{definition}[Polynomial: Global Definition]
    Let $\degree \in \naturalnumbersset$ be a constant.
    A function $\funcdef{\genpoly}{\field}{\basefield}$ is called a \emph{polynomial of degree $\leq \degree$} if
    it is of the following form:
    \[
        \genpoly(x_1,...,x_{\blocklength}) =
            \sum_{0 \leq \degree_1,...,\degree_{n} : \sum_{i=1}^{\blocklength}{\degree_i} \leq \degree}
                {c_{\degree_1,...,\degree_n} \prod_{i=1}^{\blocklength} x_i^{\degree_i}}
    \]
    We denote the set of all polynomials of degree $\leq \degree$ by $\allpolyset{\leq \degree}{\field}{\basefield}$.
    The value $\degree$ in the definition above is called the \emph{global degree} of the function $\genpoly$, shorthand by the \emph{degree} of $\genpoly$,
    and it is denoted by $\deg(\genpoly) = \degree$.
    \newline
    Additionally, the set of all polynomials from $\field$ to $\basefield$ of degree $\leq \degree$ is denoted by:
    \[
        \allpolyset{\leq \degree}{\field}{\basefield}
    \]
\end{definition}
Next, we present a known equivalent definition for a polynomial using derivatives.
To do so, we first define a derivative in the case of finite fields.
\begin{definition}[Derivative]
    Given a function $\funcdef{\genfunc}{\field}{\basefield}$ and $a \in \field$,
    we define the derivative of $\genfunc$ in direction $a$ as a function $\funcdef{\directionderivative_{a}\genfunc}{\field}{\basefield}$
    defined as follows:
    \[
        \directionderivative_a \genfunc(x) \definedas \genfunc(x + a) - \genfunc(x)
    \]
\end{definition}
\begin{lemma}
    Let $\degree \in \naturalnumbersset$.
    A function $\funcdef{\genfunc}{\field}{\basefield}$ is a polynomial of degree $\leq \degree$
    if and only if $\directionderivative_{a}\genfunc$ is a polynomial of degree $\leq \degree - 1$ for all $a \in \field$.
\end{lemma}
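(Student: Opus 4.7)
The plan is to prove the two directions separately: the forward implication by a direct monomial-by-monomial calculation, and the reverse implication by induction on $\degree$ combined with an iterated-derivative (cube) characterization of low-degree polynomials.

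For the forward direction I would expand $\genfunc$ in its monomial basis, $\genfunc(x) = \sum c_{\degree_1,\ldots,\degree_\blocklength} \prod_i x_i^{\degree_i}$ with $\sum_i \degree_i \leq \degree$, and apply $\directionderivative_a$ termwise. For each individual monomial, the binomial theorem yields
\[
    \prod_i (x_i + a_i)^{\degree_i} - \prod_i x_i^{\degree_i},
\]
whose top total-degree term cancels exactly, leaving a polynomial in $x$ of total degree strictly less than $\sum_i \degree_i$, hence at most $\degree - 1$. Summing over monomials gives $\deg(\directionderivative_a \genfunc) \leq \degree - 1$.

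For the reverse direction I would induct on $\degree$. The base case $\degree = 0$ (with the convention that a polynomial of degree $\leq -1$ is the zero function) says $\directionderivative_a \genfunc \equiv 0$ for every $a$, which is equivalent to $\genfunc(x + a) = \genfunc(x)$ for all $x, a \in \field$ and hence forces $\genfunc$ to be constant. For the inductive step, assume the lemma for $\degree - 1$, and suppose $\directionderivative_a \genfunc$ is of degree $\leq \degree - 1$ for every direction $a$. Applying the already-proved forward direction to each $\directionderivative_a \genfunc$, any iterated derivative $\directionderivative_{y_1} \cdots \directionderivative_{y_{\degree}}(\directionderivative_a \genfunc)$ vanishes identically, i.e.\ every $(\degree + 1)$-fold iterated derivative of $\genfunc$ is zero.

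It then remains to convert the vanishing of all $(\degree + 1)$-fold iterated derivatives into a total-degree bound on $\genfunc$: since every function $\field \to \basefield$ is representable in the reduced monomial basis (with each variable having exponent at most $\abs{\basefield}-1$), I would examine a hypothetical highest-total-degree monomial $c \prod_i x_i^{\degree_i}$ of $\genfunc$ with $\sum_i \degree_i > \degree$, and argue that a suitable iterated derivative $\directionderivative_{e_{i_1}} \cdots \directionderivative_{e_{i_{\degree+1}}} \genfunc$ along standard basis directions (repeated according to the exponent vector of the top monomial) contributes a nonzero multiple of $c$ at $x = 0$, contradicting the vanishing. I expect this last monomial-isolation step to be the main subtlety; once it is in place (or invoked as the cube characterization of polynomials already alluded to in the introduction), both directions assemble immediately to give the lemma.
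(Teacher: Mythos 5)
The paper never proves this lemma: it is quoted as a standard fact about polynomials over prime fields (as is the local--global degree equivalence stated right after it), so there is no internal proof to compare yours against; your proposal has to stand on its own, and in outline it does. The forward direction (termwise cancellation of the top-degree part of each monomial under $\directionderivative_a$) is correct in any characteristic. In the reverse direction the induction wrapper does no work --- the inductive hypothesis is never used; what you actually use is the forward direction iterated to conclude that every $k$-fold derivative of $\genfunc$ with $k \geq \degree+1$ vanishes identically --- and the real content is, as you yourself flag, the monomial-isolation step.

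That step, as literally phrased, has a fixable defect. If the reduced representation of $\genfunc$ has a monomial $c\prod_i x_i^{\degree_i}$ of maximal total degree $D \definedas \sum_i \degree_i$ with $D > \degree+1$, you cannot ``repeat according to the exponent vector'' using only $\degree+1$ derivatives, and after such a partial set of derivatives the lower-degree monomials of $\genfunc$ need not vanish at $x=0$, so ``contributes a nonzero multiple of $c$ at $x=0$'' is not yet an argument that the whole derivative is nonzero there. The standard repair: since derivatives of the zero function are zero, vanishing of all $(\degree+1)$-fold derivatives forces vanishing of all $k$-fold derivatives for every $k \geq \degree+1$; now take exactly $D$ derivatives, namely $\degree_i$ of them along each coordinate direction $e_i$. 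Because the directions are coordinate vectors, this iterated derivative acts monomial-by-monomial as a product of univariate finite differences: any monomial of $\genfunc$ with some exponent $a_i < \degree_i$ contributes $0$, any monomial with $a_i \geq \degree_i$ for all $i$ has total degree $\geq D$ and hence, by maximality, is the chosen one, and the chosen monomial contributes the nonzero constant $c\prod_i \degree_i!$, since each $\degree_i \leq \abs{\basefield}-1$ in the reduced representation and $\basefield$ is prime (this is exactly where primality of the field, which the paper assumes throughout, is needed). With this adjustment your argument is complete, and it in fact also proves the paper's subsequent unproved lemma identifying local and global degree.
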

This leads us to a natural definition of a \emph{degree} of a function using derivatives.
\begin{definition}[Local Degree]
    For a function $\funcdef{\genfunc}{\field}{\basefield}$, we define its \emph{local degree},
    to be the least integer $\degree \in \naturalnumbersset$ such that
    for all $a_1,...,a_{\degree + 1}, x \in \field$:
    \[
        \directionderivative_{a_{\degree + 1}}...\directionderivative_{a_{1}} \genfunc(x) = 0
    \]
\end{definition}
In $\field$, the two definitions of degree coincide, and we get a single definition of a degree:
\begin{lemma}[Equivalance of definitions of a degree]\label{lemma:alternative-definition-for-polynomial-using-derivatives}
    Let $\funcdef{\genfunc}{\field}{\basefield}$ be a function, and let $\degree \in \naturalnumbersset$ be an integer.
    Then, the global degree of a $\genfunc$ \emph{equals} its local degree.
\end{lemma}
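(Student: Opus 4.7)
The plan is to prove the two inequalities separately, since the lemma equates the global degree (defined via monomial expansion) with the local degree (defined via iterated derivatives).

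The forward direction, \emph{global degree $\leq \degree$ implies local degree $\leq \degree$}, is straightforward. By linearity of $\directionderivative_a$, it suffices to verify the claim on monomials. For a monomial $m(x) = \prod_{i=1}^n x_i^{e_i}$ with $\sum_i e_i = k$, one expands $\directionderivative_a m(x) = \prod_i (x_i + a_i)^{e_i} - \prod_i x_i^{e_i}$; the leading term $\prod_i x_i^{e_i}$ cancels, so $\directionderivative_a m$ has total degree strictly less than $k$. Iterating $\degree + 1$ derivatives on any $\genfunc$ of global degree $\leq \degree$ therefore yields the zero function.

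The reverse direction, \emph{local degree $\leq \degree$ implies global degree $\leq \degree$}, is the substantive step. I would prove it via a Newton-type expansion along the standard basis $\set{e_1,\dots,e_\blocklength}$ of $\field$:
\[
    \genfunc(x_1,\dots,x_\blocklength) = \sum_{k_1,\dots,k_\blocklength \geq 0}\binom{x_1}{k_1}\cdots\binom{x_\blocklength}{k_\blocklength}\bigl(\directionderivative_{e_\blocklength}^{k_\blocklength}\cdots\directionderivative_{e_1}^{k_1}\genfunc\bigr)(0),
\]
where $\binom{x}{k}$ denotes the polynomial $\tfrac{1}{k!}x(x-1)\cdots(x-k+1)$ of degree $k$ in $x$. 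I would establish this identity by induction on $\blocklength$: the single-variable case is the classical discrete-Taylor (forward-difference) expansion, and the inductive step peels off one variable at a time by applying the identity to each ``slice'' of $\genfunc$. Once the identity is in hand, the assumption that $\genfunc$ has local degree $\leq \degree$ forces every iterated-derivative coefficient with $k_1 + \cdots + k_\blocklength > \degree$ to vanish, so the remaining sum exhibits $\genfunc$ as a polynomial of total degree $\leq \degree$.

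The main obstacle is the careful justification of the Newton identity over the finite field $\basefield$: one must verify that the formal Taylor-type expansion agrees with $\genfunc$ pointwise on all of $\field$, and handle the interpretation of $\binom{x}{k}$ as a polynomial function (not as a combinatorial count). An alternative route is a double induction on $(\blocklength, \degree)$ using the observation that if $\directionderivative_a \genfunc$ has global degree $\leq \degree - 1$ for every $a$, then $\genfunc$ has global degree $\leq \degree$; this reshuffles the difficulty into the one-variable case, which still needs the discrete-Taylor expansion. Either way, this one-variable base case is the technical heart of the argument.
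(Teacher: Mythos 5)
The paper states this lemma as a known fact and supplies no proof, so there is no in-paper argument to compare against; judged on its own merits, your sketch is a correct, standard route. The easy direction is fine as written: $\directionderivative_a$ strictly lowers the total degree of every monomial (the top term cancels), so global degree $\leq \degree$ forces all $(\degree+1)$-fold derivatives to vanish. The substantive direction is also sound, but the point you flag as the main obstacle is exactly the bookkeeping that must be pinned down: over $\basefield = \basefield_p$ the expression $\binom{x}{k} = \tfrac{1}{k!}x(x-1)\cdots(x-k+1)$ is a legitimate degree-$k$ polynomial only for $k \leq p-1$ (otherwise $k!$ is not invertible), so your Newton expansion must be truncated at $k_i \leq p-1$ in each variable. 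This is harmless: identifying $\basefield$ with $\{0,1,\dots,p-1\}$, the one-variable forward-difference identity $\genfunc(x) = \sum_{k=0}^{p-1}\binom{x}{k}\,(\directionderivative_{1}^{k}\genfunc)(0)$ holds pointwise because the integer binomial $\binom{x}{k}$ vanishes whenever $k > x$, and only the values of $\genfunc$ at $0,1,\dots,k \leq p-1$ enter, so no wrap-around issue with the cyclic structure of $\basefield$ arises; iterating coordinate by coordinate gives the multivariate identity with all $k_i \leq p-1$. With that truncation in place, local degree $\leq \degree$ kills every coefficient with $k_1+\cdots+k_\blocklength > \degree$, exhibiting $\genfunc$ as a polynomial of total degree $\leq \degree$, and the two implications together give equality of the two degrees. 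Note also that your argument is somewhat more general than the paper needs: in the paper's working regime $\degree < \abs{\basefield}$ all relevant $k$ satisfy $k < p$, so the binomials are unproblematic from the start.
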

\begin{remark}
    We sometimes refer to the requirement that the local degree of a function is $\leq \degree$,
    as the \emph{local criteria} of degree $\leq \degree$ polynomials.
\end{remark}

\subsection[Rank-bias in \titlefield]{Rank-bias in \titlefield}
We start by defining the notion of \emph{bias}, which is a measure of how the function is far from being equidistributed (see Appendix~\ref{sec:equidistribution-of-functions} for the exact details).
\begin{definition}[Bias]
    Let $\funcdef{\genfunc}{\field}{\basefield}$.
    The \emph{bias} of the function $\genfunc$ is defined in the following way:
    \[
        \bias{\genfunc} \definedas 1 / \abs{\field} \cdot \sum_{x \in \field}{\charfunc{\genfunc(x)}}
    \]
    Moreover, for a subset $\variety \subseteq \field$, we define the \emph{bias of $\genfunc$ in $\variety$} to be:
    \[
        \relbias{\variety}{\genfunc} \definedas 1 / \abs{\variety} \cdot \sum_{x \in \variety}{\charfunc{\genfunc(x)}}
    \]
\end{definition}
Next, we present a standard definition of rank of a polynomial, which is a notion that measures how \emph{structured} is the function.
Note that low rank implies the polynomial is highly structured.
Formally we have the following definition:
\begin{definition}[Rank of a Polynomial]\label{definition:rank}
    Given a constant $\degree \in \naturalnumbersset$ and a polynomial $\genpoly$,
    the \emph{$\degree$-rank} of $\genpoly$, denoted as $\drank{\degree}{\genpoly}$ is defined to be
    the smallest integer $\rankval$ such that $\genpoly$ can be computed given $\rankval$ polynomials of degree $< \degree$.
    In other wards, we say $\drank{\degree}{\genpoly} = \rankval$ if $\rankval$ is the smallest integer such that
    there exists $\rankval$ polynomials $\genpoly[2]_1,...,\genpoly[2]_\rankval \in \allpolyset{\leq \degree - 1}{\field}{\basefield}$
    and a function $\funcdef{\Gamma}{\field}{\basefield}$ such that:
    \[
        \genpoly(x) = \Gamma \parens{\genpoly[2]_1(x),...,\genpoly[2]_\rankval(x)}
    \]
    If $\degree = 1$, then $1$-rank is defined to be $\infty$ if $\genpoly$ is non constant, and $0$ otherwise.
    \newline
    Moreover, for a polynomial $\genpoly$ of degree $\deg(\genpoly) = \degree$ we denote $\rank{\genpoly} \definedas \drank{\degree}{\genpoly}$.
    \newline
    We call such function $\Gamma$ a \emph{decomposition} or a \emph{computation} of $\genpoly$ using lower-degree polynomials.
\end{definition}
Let us now define a factor.
Note that we focus our discussion to factors in $\field$,
but define the basic definitions over a general set $U$ so they will apply for factors over a general sets.
This is necessary as we will later use them also for other sets such as $\variety \subseteq \field$.
\begin{definition}[Factor]
    Let $U$ be a set.
    A \emph{factor over $U$}, denoted by $\factor$, is simply a partition of the set $U$.
    Each subset in the partition is called an \emph{atom}.
    \newline
    A collection of function $\funcdef{\genfunc_1,...,\genfunc_c}{U}{\basefield}$ defines a factor $\factor_{\genfunc_1,...,\genfunc_c}$ over $U$
    with atoms:
    \[
        \set{u \in U \suchthat f_1(u)=b_1, ..., f_c(u)=b_c}
    \]
    for all $b_1,...,b_c \in \basefield$.
    \newline
    Additionally, we use $\factor$ to also denote the map $\factor(u) \rightarrow (\genfunc_1(u),...,\genfunc_c(u))$.
\end{definition}

\begin{notation*}
    Let $\funcdef{\genfunc_1,...,\genfunc_c}{U}{\basefield}$ be a collection of functions.
    For a factor $\factor \definedas \factor_{\genfunc_1,...,\genfunc_c}$, we denote by $\abs{\factor}$ the amount of functions that define it, i.e. $\abs{\factor} = c$.
    Moreover, we denote $\norm{\factor} \definedas \abs{\basefield}^c$, which is the maximal amount of (possibly empty) atoms.
    Additionally, the rank of the factor is defined to be the rank of the polynomials that generate it.
\end{notation*}
\begin{definition}[Polynomial Factor]
    We say a factor $\factor$ over $\field$ is a \emph{polynomial factor} if it is defined by a collection of polynomials $\funcdef{\genpoly_1,...,\genpoly_c}{\field}{\basefield}$,
    i.e. $\factor = \factor_{\genpoly_1,...,\genpoly_c}$.
    The degree of the factor, denote as $\deg(\factor)$ is the maximal degree of the polynomials $\genpoly_1,...,\genpoly_c$.
\end{definition}
Note that the notion of degree (and polynomial) are defined only for functions over $\field$,
therefore this definition is well-defined only for $U = \field$.

\begin{definition}[Rank of a Factor]\label{definition:factor-rank}
    Let $\genpolyset$ be a collection of polynomials $\funcdef{\genpoly_1,...,\genpoly_c}{\field}{\basefield}$.
    The rank of the polynomial collection is defined as:
    \[
        \rank{\genpolyset} \definedas \min \set{\drank{\degree}{\sum_{i=1}^c{\lambda_i \genpoly_i}} \suchthat 0 \neq \vec{\lambda} \in \basefield^c, \degree = \max_{i\in\sparens{c}}{\deg(\lambda_i \genpoly_i)}}
    \]
    For a factor $\factor$ defined by a collection of polynomials $\genpolyset$, we define its rank to be the rank of the collection of polynomials defining it.
    For a non-decreasing function $\funcdef{\rankfunc}{\naturalnumbersset}{\naturalnumbersset}$, a factor $\factor$ is called $\rankfunc$-regular if its rank is at least $\rankfunc(\abs{\factor})$.
\end{definition}
\begin{note*}
    Note that in the definition above, the rank of each linear combination is calculated as the $\degree$-rank,
    where $\degree$ is the maximal degree of a polynomial that participates in the linear combination non-trivially.
    This is crucial as it ensures that a high rank factor do not have linear dependence in the largest-degree homogenous component of any of its polynomials.
\end{note*}

We now present a fundamental property of high rank polynomials, that was first proved by~\cite{green2007distribution} when $\degree < \abs{\basefield}$,
later extended to general fields by~\cite{kaufman2008worst}, and further extended also to large fields by~\cite{DBLP:journals/corr/0001L15}.
This property of high rank polynomials is that they they have low bias:
\begin{theorem}[Rank-bias in $\field$]\label{high-rank-implies-low-bias}
    Let $\basefield$ be a finite field.
    Let $\epsilon > 0$ and $\degree \in \naturalnumbersset$.
    There exists $r_{\ref{high-rank-implies-low-bias}} \definedas r_{\ref{high-rank-implies-low-bias}}(\basefield, \degree, \epsilon)$,
    such that for every degree-$\degree$ polynomial $\funcdef{\genpoly}{\field}{\basefield}$:
    if $\rank{\genpoly} \geq r_{\ref{high-rank-implies-low-bias}}$ then $\bias{\genpoly} < \epsilon$.
\end{theorem}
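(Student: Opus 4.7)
The plan is to prove Theorem~\ref{high-rank-implies-low-bias} by induction on the degree $\degree$, working with the contrapositive: if $\abs{\bias{\genpoly}} \geq \epsilon$, then $\rank{\genpoly}$ is bounded by some $r^\star = r^\star(\basefield, \degree, \epsilon)$. Setting $r_{\ref{high-rank-implies-low-bias}} \definedas r^\star + 1$ will then give the stated conclusion.

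The base case is $\degree = 1$: if $\genpoly$ is affine and non-constant, the character sum $\sum_x \charfunc{\genpoly(x)}$ vanishes by the orthogonality of additive characters, so $\bias{\genpoly} = 0 < \epsilon$. Taking $r^\star = 0$ handles this case, since the only possibility for $\abs{\bias{\genpoly}} \geq \epsilon$ at degree one is a constant polynomial, which has rank $0$.

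For the inductive step, I would use the standard squared-bias/derivative identity
\[
    \abs{\bias{\genpoly}}^2 = \expectation{a \in \field}{\bias{\directionderivative_a \genpoly}} \leq \expectation{a \in \field}{\abs{\bias{\directionderivative_a \genpoly}}},
\]
obtained by expanding the square, changing variables $x \mapsto y+a$ in the inner expectation, and taking absolute values. Since $\directionderivative_a \genpoly$ has degree $\leq \degree - 1$, its bias is governed by the inductive hypothesis. A Markov-type averaging using $\abs{\bias{\cdot}} \leq 1$ yields a set of directions $a$ of density at least $\epsilon^2 / 2$ for which $\abs{\bias{\directionderivative_a \genpoly}} \geq \epsilon^2 / 2$. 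By induction, each such $\directionderivative_a \genpoly$ has rank at most $r^\star(\basefield, \degree-1, \epsilon^2/2)$, and therefore admits a decomposition as a function of a bounded number of polynomials of degree $\leq \degree - 2$.

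The main and final step is to upgrade ``many directional derivatives of $\genpoly$ are structured'' to ``$\genpoly$ itself is structured''. I would follow the line of Green-Tao, Kaufman-Lovett, and Bhattacharyya-Fischer-Lovett: pick $\degree$ independent directions $a_1, \ldots, a_\degree$ inside the dense set of ``good'' directions so that the iterated derivatives are generic, and use the assumption $\degree < \abs{\basefield}$ so that $\degree!$ is invertible. Invertibility of $\degree!$ is what allows the top degree-$\degree$ homogeneous part of $\genpoly$ to be reconstructed from the symmetric multilinear form naturally associated to its iterated derivatives; a careful averaging then assembles the per-direction decompositions into a single decomposition of $\genpoly$ by boundedly many polynomials of degree $\leq \degree - 1$, producing the required bound on $\rank{\genpoly}$ purely in terms of $\basefield$, $\degree$, and $\epsilon$.

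The main obstacle is precisely this last reconstruction. The difficulty is that the low-rank decomposition of $\directionderivative_a \genpoly$ can depend on $a$ in a wild, non-algebraic way, so one cannot naively ``integrate'' the decompositions to recover one for $\genpoly$. The standard resolution proceeds by choosing compatible directions inside the dense good set and performing an explicit algebraic manipulation (valid only when $\degree < \abs{\basefield}$) that extracts a common structured part of $\genpoly$ from the derivatives. The resulting bound $r^\star$ is tower-type in $1/\epsilon$, but any bound independent of $\blocklength$ is sufficient for our purposes.
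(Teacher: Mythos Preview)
The paper does not prove Theorem~\ref{high-rank-implies-low-bias}; it is quoted as a known result from \cite{green2007distribution}, \cite{kaufman2008worst}, and \cite{DBLP:journals/corr/0001L15}, and used as a black box throughout. So there is no ``paper's own proof'' to compare against, and any correct sketch on your part would already go beyond what the paper does.

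That said, your sketch has a genuine gap at exactly the point you flag as the main obstacle. Steps~1--3 (base case, the identity $\abs{\bias{\genpoly}}^2 \leq \expectation{a}{\abs{\bias{\directionderivative_a \genpoly}}}$, and Markov plus the inductive hypothesis to get a dense set of $a$ with $\drank{\degree-1}{\directionderivative_a \genpoly}$ bounded) are fine and standard. But the passage from ``for a positive density of single directions $a$, the derivative $\directionderivative_a \genpoly$ is a function of boundedly many degree $\leq \degree-2$ polynomials \emph{depending on $a$}'' to ``$\genpoly$ itself is a function of boundedly many degree $\leq \degree-1$ polynomials'' is not a matter of ``careful averaging'' or ``picking $\degree$ independent directions''. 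The decompositions $\Gamma_a(\genpoly[2]_{a,1},\ldots,\genpoly[2]_{a,r})$ vary arbitrarily with $a$, and there is no direct mechanism to splice them into a single decomposition of $\genpoly$. The polarization identity you allude to (using invertibility of $\degree!$) recovers the top homogeneous part of $\genpoly$ from the \emph{$\degree$-fold} derivative $\directionderivative_{a_1}\cdots\directionderivative_{a_\degree}\genpoly$, which is a constant; it says nothing about how to integrate low-rank decompositions of \emph{single} derivatives.

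The actual arguments in the cited references proceed differently. Green--Tao iterate van~der~Corput all the way down to the $(\degree-1)$-fold derivative, obtaining a positive density of tuples $(a_1,\ldots,a_{\degree-1})$ for which the linear form $\directionderivative_{a_1}\cdots\directionderivative_{a_{\degree-1}}\genpoly$ is biased, hence constant; this says the associated symmetric multilinear form has large ``analytic rank'', and the remaining (hard) step is a structure theorem converting this into low partition/Schmidt rank. Kaufman--Lovett instead run an energy-increment/regularity argument. Neither matches the single-derivative inductive scheme you outline. If you want to keep an inductive skeleton, the honest fix is to iterate the squaring step $\degree-1$ times before invoking any rank statement, rather than stopping after one derivative.
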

\begin{remark}
    This property implies that a collection of polynomials that have high rank is \emph{equidistributed}.
    See Appendix~\ref{sec:equidistribution-of-functions} for more details in this regard.
\end{remark}

%
%
%

\subsection[Regularization in \titlefield]{Regularization in \titlefield}\label{subsec:regularization-in-Fn}
In this subsection we define the regularization process in $\field$.
Before doing so, let us present some definitions in this regard.
Note that we define the basic definitions over a general set $U$ so they will apply for factors over a general sets,
as this is necessary as we will later use them also for other sets such as $\variety \subseteq \field$.
\begin{definition}[Measureable]
    Let $U$ be a set, and let $A \subseteq U$.
    Let $\genfuncset = \set{\genfunc_1,...,\genfunc_c}$ be a collection of functions $\funcdef{\genfunc_i}{U}{\basefield}$.
    We say a function $\funcdef{\genfunc[2]}{U}{\basefield}$ is \emph{measurable in respect of $\genfuncset$ in $A$}, shorthand by \emph{$\genfuncset$-measurable in $A$},
    if there exists a function $\funcdef{\Gamma}{\basefield^c}{\basefield}$ such that:
    \[
        \forall a \in A: g(a) = \Gamma(\genfunc_1(a),...,\genfunc_c(a))
    \]
    When discussing the factor over $A$ defined by $\factor = \factor_{\genfunc_1,...,\genfunc_c}$,
    we also say $\genfunc[2]$ is \emph{measurable in resepct of $\factor$}.
    The function $\Gamma$ will be denoted as the \emph{measurement function} of $\genfunc[2]$ in respect of $\genfuncset$.
    Additionally, when $A = U$, we sometimes omit the specification of the domain, and say $\genfunc[2]$ is measurable in respect of $\genfuncset$.
    \newline
    Note that in this paper, we usually think of $U = \field$, and $A$ is either $\field$ or $\variety \subseteq \field$.
\end{definition}
\begin{remark}
    If $\genfunc[2]$ is $\genfuncset$-measurable in $A$, then every value of $\genfunc[2]$ in $A$
    can be determined by the values of $\genfunc_1,...,\genfunc_c$.
    In other words, the function $\genfunc[2]$ is constant inside every atom of the factor defined by $\genfuncset$.
\end{remark}

\begin{definition}[Syntactic Refinement]
    Let $\factor$ and $\factor^\prime$ be polynomial factors over $U$.
    We say a factor $\factor^\prime$ is a \emph{syntactic refinement} of the factor $\factor$, if the collection of functions defining $\factor$ is a subset of the set
    of functions defining $\factor^\prime$.
    We denote this property of $\factor^\prime$ by $\factor^\prime \synrefine \factor$.
\end{definition}
We now present a standard generalized definition of refinement, where we only require the atoms induced by the refined factors are sub-atoms of those that are induced by the original factor.
Note that in this refinement, we allow the refined factor to include completely different polynomials than the original factor.
\begin{definition}[Semantic Refinement]\label{def:semantic-refinement}
    Let $\factor$ and $\factor^\prime$ be polynomial factors on $U$ defined by $\genpolyset$ and $\genpolyset^{\prime}$ respectively.
    We say the factor $\factor^\prime$ is a \emph{semantic refinement} of the factor $\factor$ in $A \subseteq U$,
    if $x, y \in A$ with $\factor^\prime(x) = \factor^\prime(y)$ implies that $\factor(x)=\factor(y)$.
    We denote this property of $\factor^\prime$ by $\factor^\prime \semrefineex{A} \factor$.
    When $A = U$, we sometimes omit $A$ from the syntax and denote it with $\factor^{\prime} \semrefine \factor$
    \newline
    Note that $\factor^\prime \synrefine \factor$ implies $\factor^\prime \semrefineex{A} \factor$ for every $A \subseteq U$.
\end{definition}
\begin{remark}
    A handy property of semantic refinement is that if $\funcdef{\genfunc}{A}{\basefield}$ is $\genpolyset$-measurable,
    then it is also $\genpolyset^{\prime}$-measurable in $A$.
    Moreover, the other direction is also true:
    If every $\genpolyset$-measurable function $\funcdef{\genfunc}{A}{\basefield}$ in $A$ is also $\genpolyset^{\prime}$-measurable in $A$, then $\factor^{\prime} \semrefineex{A} \factor$.
\end{remark}

Next, we recall a lemma that was presented in~\cite[Theorem 4.1]{bhattacharyya2013locally}, that allows us, given a polynomial that is measurable by a high rank factor in $\field$,
to replace the polynomials in the measurement function to any collection of polynomials with smaller or equal degree, and preserve the degree of the original polynomial.
Note that we state the lemma under the constraint that $\degree < \abs{\basefield}$, but it is also valid for when $\degree \geq \abs{\basefield}$ with proper generalization of definitions to this case (See~\cite[Theorem 4.1]{bhattacharyya2013locally} for the exact statement).
\begin{lemma}[Preserving Degree in $\field$]\label{preserving-degree-starting-field}
Let $\degree>0$ an integer such that $\degree < \abs{\basefield}$, and let $\funcdef{\genpoly_1,...,\genpoly_c}{\field}{\basefield}$ be polynomials of degree at most $\degree$, that form a factor of rank $\geq  \rankfunc^{\ref{preserving-degree-starting-field}}(\basefield, \degree, c)$.
Assume that for $\funcdef{\Gamma}{\basefield^{c}}{\basefield}$, the function $\funcdef{\gamma}{\field}{\basefield}$ defined as $\gamma(a) \definedas \Gamma(\genpoly_1(a),...,\genpoly_c(a))$ is of $\deg(\gamma)=\degree^{\prime}$.
\newline
Then, for every collection of polynomials $\funcdef{\genpoly[2]_1,...,\genpoly[2]_c}{\field}{\basefield}$ that satisfy $\deg(\genpoly[2]_i) \leq \deg(\genpoly_{i})$,
the function $\gamma^{\prime}$ defined as $\gamma^{\prime}(a)=\Gamma(\genpoly[2]_1(a),...,\genpoly[2]_c(a))$ is a polynomial of $\deg(\gamma^{\prime})\leq\degree^{\prime}$.
\end{lemma}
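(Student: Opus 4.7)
The plan is to exploit the uniqueness of the ``reduced'' polynomial representation of $\Gamma$ (with exponents strictly less than $|\basefield|$) and then use the high-rank hypothesis to forbid unwanted cancellations in the expansion. First, I would write $\Gamma$ canonically as $\Gamma(y_1,\ldots,y_c) = \sum_{\vec{e}\in\{0,\ldots,p-1\}^c} c_{\vec{e}} \prod_{i=1}^c y_i^{e_i}$; since the inputs to $\Gamma$ lie in $\basefield_p$ this expression is unique, and it agrees with $\Gamma$ as a function on $\basefield^c$. Substituting $y_i = \genpoly_i(x)$ yields $\gamma(x) = \sum_{\vec{e}} c_{\vec{e}} \prod_i \genpoly_i(x)^{e_i}$, and since each monomial has total degree at most $\sum_i e_i \deg(\genpoly_i)$, one gets the syntactic upper bound $\deg(\gamma) \leq D$, where $D := \max_{\vec{e}\,:\,c_{\vec{e}}\neq 0} \sum_i e_i \deg(\genpoly_i)$. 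Substituting $\genpoly[2]_i$ in place of $\genpoly_i$ and using $\deg(\genpoly[2]_i) \leq \deg(\genpoly_i)$ gives the same syntactic bound $\deg(\gamma') \leq D$.

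The reduction is now to the key claim that, under the high-rank hypothesis on the factor, $\deg(\gamma) = D$ exactly (not merely $\leq D$). Once this is established, one reads off $\degree' = D$ and concludes $\deg(\gamma') \leq D = \degree'$, as desired. Note this is precisely where the assumption $\degree < |\basefield|$ is needed, via Lemma~\ref{lemma:alternative-definition-for-polynomial-using-derivatives}, so that the canonical reduced representation of $\Gamma$ is compatible with the notion of degree.

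To prove the key claim I would isolate the top-degree-$D$ homogeneous component of $\gamma$, which is exactly
\[
    H(x) = \sum_{\vec{e}\,:\,\sum_i e_i \deg(\genpoly_i) = D} c_{\vec{e}} \prod_i L_i(x)^{e_i},
\]
where $L_i$ denotes the top homogeneous component of $\genpoly_i$. It suffices to show $H \not\equiv 0$: otherwise some monomial contributions to $H$ of the same formal degree cancel, producing a nontrivial polynomial identity of bounded degree among the $L_i$'s. The high-rank hypothesis on the factor (set by $\rankfunc^{\ref{preserving-degree-starting-field}}$) forbids this, because by definition of factor rank, any nontrivial linear combination of the $\genpoly_i$'s at the top homogeneous level is of high $\degree$-rank, and this propagates to algebraic independence of $L_1,\ldots,L_c$ up to any prescribed bounded degree. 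Thus $H \not\equiv 0$ and $\deg(\gamma) = D$.

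The main obstacle is the last step: choosing $\rankfunc^{\ref{preserving-degree-starting-field}}(\basefield, \degree, c)$ large enough (as a function of $\basefield$, $\degree$, $c$) to rule out all possible polynomial relations among the $L_i$'s that could arise in an expansion of a $\Gamma$ with exponents up to $p-1$. I would handle this by an equidistribution argument leveraging Theorem~\ref{high-rank-implies-low-bias}: if some nontrivial $\tilde{\Gamma}$ of bounded degree satisfied $\tilde{\Gamma}(L_1,\ldots,L_c) \equiv 0$, then by a character sum / Fourier-analytic computation this would force some linear combination of the $\genpoly_i$'s (at the top homogeneous level) to have large bias, contradicting high rank via Theorem~\ref{high-rank-implies-low-bias}. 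Since the number of potentially problematic $\tilde{\Gamma}$'s is bounded in terms of $|\basefield|, \degree, c$, one can absorb all these constraints into a single rank threshold $\rankfunc^{\ref{preserving-degree-starting-field}}(\basefield, \degree, c)$.
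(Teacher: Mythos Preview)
The paper does not give its own proof of this lemma: it is quoted as a black box from \cite[Theorem 4.1]{bhattacharyya2013locally}, and the accompanying Faithful Composition Lemma (stated immediately after) is quoted from \cite[Lemma 4.17]{DBLP:journals/corr/0001L15}. So there is no ``paper's proof'' to compare against beyond these citations.

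That said, your sketch is essentially a correct reconstruction of the argument behind those cited results. Writing $\Gamma$ in its unique reduced form, defining the weighted degree $D = \max_{\vec e:\,c_{\vec e}\neq 0}\sum_i e_i\deg(\genpoly_i)$, and showing $D=\degree'$ is precisely the content of the Faithful Composition Lemma as stated in the paper (the condition ``$C_\alpha=0$ whenever $\sum_i \alpha_i\deg(\genpoly_i)>\degree'$'' is exactly $D\le \degree'$, and the reverse inequality is the trivial syntactic bound). Once $D=\degree'$ is established, the conclusion $\deg(\gamma')\le D=\degree'$ is immediate, exactly as you wrote.

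Two small remarks on the final step. First, the phrasing in your last paragraph is slightly inverted: rather than arguing that $\tilde\Gamma(L_1,\ldots,L_c)\equiv 0$ would \emph{force} large bias, the cleaner direction is that high rank of the factor (hence of the top homogeneous parts $L_i$, up to an additive constant) gives, via Theorem~\ref{high-rank-implies-low-bias} and Lemma~\ref{every-linear-combination-has-low-bias-implies-equidistribution}, that $(L_1(x),\ldots,L_c(x))$ is $\epsilon$-equidistributed; taking $\epsilon<\abs{\basefield}^{-c}$ makes every atom nonempty, so a nonzero reduced polynomial $\tilde\Gamma$ cannot vanish identically after substitution. Second, no union bound over ``all problematic $\tilde\Gamma$'' is needed: a single equidistribution threshold $\epsilon<\abs{\basefield}^{-c}$ handles every nonzero $\tilde\Gamma$ simultaneously, so the rank threshold $\rankfunc^{\ref{preserving-degree-starting-field}}(\basefield,\degree,c)$ comes directly from that one choice of $\epsilon$.
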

Next, we restate a useful lemma from~\cite[Lemma 4.17]{DBLP:journals/corr/0001L15} that shows that under the conditions
above, $\Gamma$ is as a low-degree polynomial (with even stronger conditions).
Formally, they showed:
\begin{lemma}[Faithful Composition]
    In the case discussed above, the structure of $\Gamma$ is as follows:
    \[
        \Gamma(z_1,...,z_{c_1})  =
        \sum_{\alpha \in \sparens{\basefieldsize - 1}^{c}} {{C_{\alpha}} \cdot {\prod_{i = 1}^{c}}{z_i^{\alpha_i}}}
    \]
    where $C_{\alpha} = 0$ whenever $\sum_{i = 1}^{c_1}(\alpha_i \cdot \deg(\genpoly_i)) > \degree^\prime$.
    \newline
    In other words, this means that $\Gamma$ as a function $\funcdef{\Gamma}{\basefield^{c}}{\basefield}$,
    is a polynomial of degree $\leq \degree^{\prime}$, even when substituting its $i$-th input by any polynomial of degree $\leq \deg(\genpoly_i)$.
\end{lemma}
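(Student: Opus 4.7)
The plan is to first realize $\Gamma$, a function from $\basefield^c$ to $\basefield$, as a polynomial in $c$ variables via Lagrange interpolation, obtaining the canonical form
\[
    \Gamma(z_1, \ldots, z_c) = \sum_{\alpha \in \sparens{\basefieldsize - 1}^c} C_\alpha \prod_{i=1}^c z_i^{\alpha_i}.
\]
This already matches the shape asserted in the lemma, so the task reduces to proving that $C_\alpha = 0$ whenever the weighted degree $D_\alpha \definedas \sum_i \alpha_i \deg(\genpoly_i)$ exceeds $\degree^\prime$.

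Next I would track top homogeneous components. Expanding $\gamma(x) = \sum_\alpha C_\alpha \prod_i \genpoly_i(x)^{\alpha_i}$, each summand is a polynomial of degree at most $D_\alpha$ (using $\degree < \abs{\basefield}$ to rule out Fermat-style degree wraparound on the individual factors), whose leading homogeneous component is $\prod_i \homopart{\genpoly_i}^{\alpha_i}$. Let $D^\star \definedas \max \set{ D_\alpha \suchthat C_\alpha \neq 0 }$. Denoting by $P^{(D)}$ the degree-$D$ homogeneous component of a polynomial $P$, equating degree-$D^\star$ components on both sides of $\gamma = \Gamma(\genpoly_1, \ldots, \genpoly_c)$ yields the polynomial identity
\[
    \sum_{\alpha \,:\, D_\alpha = D^\star} C_\alpha \prod_{i=1}^c \homopart{\genpoly_i}^{\alpha_i} = \gamma^{(D^\star)},
\]
whose right-hand side vanishes as soon as $D^\star > \degree^\prime$, since by hypothesis $\deg(\gamma) \leq \degree^\prime$.

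The crux is to turn this vanishing identity into a contradiction with the high-rank hypothesis on the factor $\set{\genpoly_1, \ldots, \genpoly_c}$. A nontrivial algebraic relation of bounded arity among the top parts $\homopart{\genpoly_1}, \ldots, \homopart{\genpoly_c}$ forces one of them to have bounded $\degree$-rank over the others. Concretely, picking any $\alpha^\star$ with $D_{\alpha^\star} = D^\star$ and $C_{\alpha^\star} \neq 0$, together with a coordinate $j$ such that $\alpha^\star_j \geq 1$, one solves the identity for a distinguished factor and expresses $\homopart{\genpoly_j}$ as a constant-size combination of products of polynomials of strictly smaller degree (built from $\homopart{\genpoly_1}, \ldots, \homopart{\genpoly_c}$ with the $j$-th exponent reduced). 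This bounds the $\degree$-rank of the element of $\mathrm{span}(\genpoly_1, \ldots, \genpoly_c)$ whose top homogeneous part is $\homopart{\genpoly_j}$ by a constant $R(\basefieldsize, \degree, c)$. Choosing $\rankfunc^{\ref{preserving-degree-starting-field}}(\basefield, \degree, c) > R(\basefieldsize, \degree, c)$ therefore produces the desired contradiction and forces $C_\alpha = 0$ for every $\alpha$ with $D_\alpha > \degree^\prime$.

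The main obstacle is the rank-extraction step. One has to verify carefully that in the solved expression for $\homopart{\genpoly_j}$, the individual constituent pieces really are polynomials of degree strictly less than $\deg(\genpoly_j)$, so that a genuine $\degree$-rank bound is produced rather than a trivial one. This is where $\degree < \abs{\basefield}$ is essential, both to ensure that the top homogeneous component of a product equals the product of the top homogeneous components and to keep the arity of the decomposition bounded by a function of $\basefieldsize, \degree, c$ alone. With this bookkeeping in place, the faithful composition structure follows as stated.
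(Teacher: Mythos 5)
You should first note that the paper itself offers no proof of this lemma: it is quoted verbatim from~\cite[Lemma 4.17]{DBLP:journals/corr/0001L15}, so your argument has to stand on its own, and as written it has two genuine gaps. The first is the step ``equating degree-$D^\star$ components''. The identity $\gamma(x)=\sum_\alpha C_\alpha\prod_i \genpoly_i(x)^{\alpha_i}$ is an identity of \emph{functions} on $\field$, and the right-hand side, expanded formally from the reduced representations of the $\genpoly_i$, is in general not reduced: although $\deg(\genpoly_i)\le\degree<\abs{\basefield}$, the products $\prod_i\genpoly_i^{\alpha_i}$ have formal degree as large as $c(\basefieldsize-1)\degree$ and can contain monomials with individual exponents $\ge\basefieldsize$, which collapse under $x^{\basefieldsize}=x$. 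Hence from $\deg(\gamma)\le\degree'<D^\star$ you may only conclude that the already-reduced monomials of $\sum_{\alpha:D_\alpha=D^\star}C_\alpha\prod_i\homopart{\genpoly_i}^{\alpha_i}$ cancel; the formal identity you write down (and even the functional relation $R(\homopart{\genpoly_1}(x),\ldots,\homopart{\genpoly_c}(x))\equiv 0$) does not follow unless $D^\star\le\basefieldsize-1$, which is far from the relevant range. The hypothesis $\degree<\abs{\basefield}$ controls the individual factors, which is exactly where wraparound is not the issue.

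The second gap is the rank-extraction step, which is the crux and is asserted rather than proved: from a relation $\sum_{\alpha}C_\alpha\prod_i\homopart{\genpoly_i}^{\alpha_i}=0$ one cannot ``solve for'' $\homopart{\genpoly_j}$ --- division by the remaining factors is not a polynomial operation, and a relation such as $h_1h_2-h_3h_4=0$ expresses no single $h_j$ in terms of lower-degree polynomials. What a nontrivial bounded-complexity relation genuinely contradicts is equidistribution, not a pointwise low-rank decomposition: a nonzero reduced $R$ vanishing identically on the tuple $(\homopart{\genpoly_1}(x),\ldots,\homopart{\genpoly_c}(x))$ would leave some atom of that factor empty, which is impossible at high rank by Theorem~\ref{high-rank-implies-low-bias} together with Lemma~\ref{every-linear-combination-has-low-bias-implies-equidistribution} (and the top-part factor inherits high rank from the paper's definition of factor rank). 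But that route is blocked here because of the first gap, and this is precisely why the established proofs of this lemma and of Lemma~\ref{preserving-degree-starting-field} do not argue via top homogeneous components at all: they use the derivative (cube) characterization of degree $\le\degree'$ together with equidistribution of high-rank factors over cubes, which certifies directly which monomials of $\Gamma$ may carry nonzero coefficients, and is where the lower bound $\rankfunc^{\ref{preserving-degree-starting-field}}(\basefield,\degree,c)$ actually enters. Without that ingredient (or a correct substitute handling both issues above), the proposal does not establish the lemma.
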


Finally, we restate the regularization process, that was first presented by~\cite[Lemma 2.3]{green2007distribution}.
The regularization process shows that every factor have a high-rank factor that semantically refines it,
without increasing the size of the factor too much (its new size is independent of $\blocklength$).
\begin{lemma}[Regularization in $\field$]\label{regularization-in-Fn-lemma}
    Let $\funcdef{\rankfunc}{\naturalnumbersset}{\naturalnumbersset}$ be a non-decreasing function and let $\degree \in \naturalnumbersset$.
    There exists $\funcdef{C_{\rankfunc, \degree}^{\ref{regularization-in-Fn-lemma}}}{\naturalnumbersset}{\naturalnumbersset}$ such that the following holds:
    Let $\factor$ be a factor on $\field$ defined by polynomials $\genpolyset = (\genpoly_1,...,\genpoly_c)$ where for all $i \in [c]$: $\funcdef{\genpoly_i}{\field}{\basefield}$ and $\deg(\genpoly_i) \leq \degree$.
    Then, there is an $\rankfunc$-regular factor $\factor^\prime$ defined by polynomials $\genpolyset[2] = (\genpoly[2]_1,...,\genpoly[2]_{c^\prime})$ where
    for all $i \in [c]$: $\funcdef{\genpoly[2]_i}{\field}{\basefield}$ and $\deg(\genpoly[2]_i) \leq \degree$ such that
    $\factor^\prime \semrefine \factor$ and $c^\prime \leq C_{\rankfunc, \degree}^{\ref{regularization-in-Fn-lemma}}(c)$.
    \newline
    Moreover, if $\factor \synrefine \bar{\factor}$ for some polynomial factor $\bar{\factor}$ with rank at least $\rankfunc(c^\prime)+c^\prime+1$,
    then we can require that $\factor^\prime \synrefine \bar{\factor}$.
\end{lemma}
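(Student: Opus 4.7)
}
The plan is to iterate a \emph{single-step refinement} that eliminates one low-rank linear dependence, and to track termination via a lexicographic complexity measure on degree counts. Given the current factor with defining collection $\genpolyset = (\genpoly_1,\ldots,\genpoly_{c})$, I first check whether $\rank{\genpolyset} \geq \rankfunc(c)$. If so, we are done. Otherwise, by Definition~\ref{definition:factor-rank}, there is a nontrivial linear combination $\sum_i \lambda_i \genpoly_i$ whose $\degree^\star$-rank is smaller than $\rankfunc(c)$, where $\degree^\star$ is the maximum degree among the $\genpoly_i$ with $\lambda_i \neq 0$. Pick any such index $k$ with $\deg(\genpoly_k)=\degree^\star$ and $\lambda_k \neq 0$. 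Then
\[
    \genpoly_k \;=\; \lambda_k^{-1}\Bigl(\sum_i \lambda_i \genpoly_i \;-\; \sum_{i \neq k} \lambda_i \genpoly_i\Bigr),
\]
and by the low-rank witness the first summand is computable from fewer than $\rankfunc(c)$ polynomials of degree $< \degree^\star$. Replace $\genpoly_k$ in the collection by these lower-degree polynomials; this produces a new collection $\genpolyset^{(1)}$ of size at most $c + \rankfunc(c) - 1$ such that $\factor_{\genpolyset^{(1)}} \semrefine \factor_{\genpolyset}$ (indeed, every $\genpoly_i$ is now $\genpolyset^{(1)}$-measurable, $\genpoly_k$ via the above formula and the others trivially).

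Next, I iterate. To see termination, assign to each intermediate collection the vector $\parens{n_{\degree}, n_{\degree-1}, \ldots, n_1}$ where $n_j$ counts polynomials of degree exactly $j$, and compare lexicographically. Each single step strictly decreases this vector: we remove one polynomial of degree $\degree^\star$ and add only polynomials of degree $< \degree^\star$, so the entry $n_{\degree^\star}$ drops while all higher-indexed entries are unchanged. Since this vector lies in a well-ordered set, the process halts after finitely many steps. The halting collection is $\rankfunc$-regular by construction. To bound the final size $c^\prime$, I would define $C_{\rankfunc,\degree}^{\ref{regularization-in-Fn-lemma}}$ recursively on the lexicographic complexity: at each stage the size grows by at most $\rankfunc(\cdot)-1$ and the complexity strictly drops, yielding a bound that depends only on $\rankfunc, \degree$ and the initial $c$, and is independent of $\blocklength$.

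For the ``moreover'' clause, suppose $\factor \synrefine \bar{\factor}$ with $\bar{\factor}$ defined by a collection $\bar{\genpolyset}$ of rank at least $\rankfunc(c^\prime)+c^\prime+1$. I run the same iteration but constrain each step so that no polynomial from $\bar{\genpolyset}$ is ever selected for replacement. I need to argue this constraint is consistent: if at some stage the low-rank linear combination $\sum_i \lambda_i \genpoly_i$ only involved polynomials of $\bar{\genpolyset}$ at the maximal degree $\degree^\star$, then restricting to the $\bar{\genpolyset}$-part would give a nontrivial dependence of rank below $\rankfunc(c^\prime)+c^\prime+1$ among polynomials of $\bar{\genpolyset}$ (after folding the remaining non-$\bar{\genpolyset}$ polynomials, whose count is at most $c^\prime$, into lower-rank data), contradicting the rank hypothesis on $\bar{\factor}$. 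Hence we can always pick a maximal-degree non-$\bar{\genpolyset}$ polynomial to replace, preserving the syntactic containment $\factor^\prime \synrefine \bar{\factor}$ throughout.

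The main obstacle I expect is the bookkeeping in the ``moreover'' part: quantitatively matching the ``$+c^\prime+1$'' slack in the rank hypothesis to the number of non-$\bar{\genpolyset}$ polynomials present at each stage, and verifying that subtracting these from a low-rank witness still yields a witness of sufficiently low rank to contradict regularity of $\bar{\factor}$. The unconstrained part of the argument is essentially a clean induction on the lexicographic measure; the quantitative size bound $C_{\rankfunc,\degree}^{\ref{regularization-in-Fn-lemma}}(c)$ is then obtained by unwinding the recursion implicit in the iteration.
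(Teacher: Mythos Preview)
The paper does not give its own proof of Lemma~\ref{regularization-in-Fn-lemma}; it merely restates it and cites \cite[Lemma 2.3]{green2007distribution}. Your proposal is the standard argument for this result, and it is correct: the iterative replacement of a low-rank maximal-degree polynomial by its lower-degree decomposition, with termination via the lexicographic degree-count vector, is exactly the approach in the literature. It is also the same skeleton the paper itself uses when proving the \emph{relative} regularization in Theorem~\ref{theorem:regularization-in-X} (which explicitly says it follows \cite{book}[Lemma 7.29]), so in that sense your argument matches what the paper does for the analogous statement it actually proves.

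Your handling of the ``moreover'' clause is also the standard one and matches the paper's argument for the relative version: if every maximal-degree polynomial in the offending linear combination lay in $\bar{\genpolyset}$, then isolating those terms would exhibit a nontrivial $\bar{\genpolyset}$-combination of $\degree^\star$-rank at most $\rankfunc(m)+m \leq \rankfunc(c')+c'$ (the decomposition contributes fewer than $\rankfunc(m)$ polynomials of degree $<\degree^\star$, and the remaining non-maximal terms contribute at most $m$ more), contradicting the assumed rank $\geq \rankfunc(c')+c'+1$. The ``$+c'+1$'' slack you flagged as the main obstacle is precisely what absorbs the at-most-$c'$ extra lower-degree polynomials; the bookkeeping goes through cleanly once you observe that at every stage the current collection has size $\leq c'$ and $\rankfunc$ is non-decreasing.
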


    \section[Polynomials in \titlevariety]{Polynomials in \titlevariety}\label{sec:polynomials_in_X}
In this section we wish to generalize the definition of degree-$\degree$ polynomials for functions $\funcdef{\onvarfunc}{\variety}{\basefield}$.
Note that we wish to define it using a property of $\onvarfunc$ that is intrinsic to $\variety$: given a function $\funcdef{\onvarfunc}{\variety}{\basefield}$,
we wish be able to determine its degree only using values of $\variety$, without considering any value outside of $\variety$ (such as values of $\field \setminus \variety$).
\newline
To define such property, we generalize the local definition of a degree that is defined for polynomials in $\field$.
We remind the reader that in $\field$, we said a function over $\field$ is a polynomial of degree $\leq \degree$ if and only if its $(\degree+1)$-derivative in every direction is $\equiv 0$.
Thus, in order to determine the $(\degree+1)$-derivative of a function in directions $y_1,...,y_{\degree+1}$, one needs to evaluate the function over all the points of the cube generated by $x,y_1,...,y_{\degree+1}$,
which is the set of points $\set{x + \sum_{i \in S}{y_i}}_{S \subseteq [\degree+1]}$.
This raises a challnge in extending this definition for functions defined over $\variety \subseteq \field$:
depending on $\variety$, the function $\funcdef{\onvarfunc}{\variety}{\basefield}$ is not be defined to all points in all the cubes of $\field$, because some of those points do not lie in $\variety$.
\newline
Therefore, to generalize the definition of a polynomial to $\variety$, we start by giving the formal definition and notation of the set of cubes in $\variety$:
\begin{definition}[Cubes]
    Let $k \in \naturalnumbersset$ be an integer and let $x, y_1,...,y_{k} \in \field$.
    We define the cube $\cube{x}{y_1,...y_{k}}$ as follows:
    \[
        \cube{x}{y_1,...,y_{k}} \definedas \set{x + \sum_{i \in S}{y_i}}_{S \subseteq [k]}
    \]
    We refer to $x$ as the \emph{offset} of the cube, and $y_1,...,y_{k}$ as the \emph{directions} of the cube.
    \newline
    Moreover, Let $\variety \subseteq \field$ be a subset.
    We define the \emph{set of cubes of $\variety$ of size $k$} as follows:
    \[
        \cubes{k}{\variety} \definedas
            \set{\cube{x}{y_1,...,y_{k}} \suchthat \forall S \subseteq[k]: (x + \sum_{i \in S}{y_i}) \in \variety}
    \]
\end{definition}

Using this definition, we can define a polynomial of degree $\leq \degree$ for subsets of $\field$:
\begin{definition}[Polynomials in $\variety$]
    Let $\degree \in \naturalnumbersset$ be an integer, and let $\variety \subseteq \field$.
    We say the degree of a function $\funcdef{\onvarfunc}{\variety}{\basefield}$ is $\degree$
    if $\degree$ is the smallest integer such that $\onvarfunc$ vanishes over all cubes of size $(\degree+1)$, i.e:
    \[
        \forall {\cube{x}{y_1,...,y_{\degree + 1}} \in \cubes{\degree+1}{\variety}}:
            \directionderivative_{y_{\degree + 1}}...\directionderivative_{y_1} \onvarpoly(x) = 0
    \]
    A function over $\variety$ of degree $\leq \degree$ is also called a \emph{polynomial of degree $\leq \degree$}.
    We denote the set of polynomials of degree $\leq \degree$ over $\variety$ by $\allpolyset{\leq \degree}{\variety}{\basefield}$.
\end{definition}
\begin{note*}
    For $\variety = \field$, the definition above coincides with the local definition of polynomials.
\end{note*}

\subsection[Lifting Polynomials]{Lifting Polynomials}\label{subsec:lifting-polynomials}
Our goal to achieve good properties for polynomials over $\variety$.
To do so, we wish to connect the desired properties of polynomials defined over $\variety$, to properties known for polynomials over $\field$.
Following such strategy raises a question: given a polynomial $\funcdef{\onvarpoly}{\variety}{\basefield}$, which polynomial over $\field$ should we consider to deduce properties of $\onvarpoly$?
To find such a polynomial over $\field$, it would have been useful that all polynomials over $\variety$ actually "came from" polynomials over $\field$.
More formally, it would have been useful that all polynomials $\funcdef{\onvarpoly}{\variety}{\basefield}$ would be equal to a restriction of some polynomial $\funcdef{\genpoly}{\field}{\basefield}$ of degree $\leq \degree$, to the set $\variety$.
This would give us a "good candidate" (or candidates) to polynomials over $\field$, that using their known properties, we could achieve the properties we desire for polynomials over $\variety$.
\newline
Generally speaking, the existence of such polynomial $\funcdef{\genpoly}{\field}{\basefield}$ is not trivial by itself, and it mapy depend on the polynomial $\onvarpoly$ and the set $\variety$.
In this subsection, we discuss sets $\variety \subseteq \field$ that have this property for every polynomial $\funcdef{\onvarpoly}{\variety}{\basefield}$.
Before formulating the notion above, we start by a simple remark:
\begin{remark}
    By the local criteria for $\field$, we have that a restriction of a polynomial of degree $\leq \degree$ over $\field$ to $\variety$ is a polynomial of degree $\leq \degree $ over $\variety$.
    Therefore, the other direction is true: every restriction of a polynomial over $\field$ to $\variety$ is a polynomial over $\variety$.
\end{remark}
%

Next, let us define subsets $\variety \subseteq \field$ that have the desired property, which we call \emph{$\degree$-lift-enabler variety}.
\begin{definition}[$\degree$-lift-enabler Subset]
    Let $\basefield$ be a field, and $\blocklength>0$ be an integer.
    For an integer $\degree > 0$, we say a subset $\variety\subseteq\field$ is \emph{$\degree$-lift-enabler} if for every $\degree^{\prime} \leq \degree$,
    for every polynomial $\onvarpoly \in \allpolyset{\degree^{\prime}}{\variety}{\basefield}$
    there exist a polynomial $\lift{\onvarpoly} \in \allpolyset{\degree^{\prime}}{\field}{\basefield}$ such that $\restrictfunc{\onvarpoly}{\variety}=\restrictfunc{\lift{\onvarpoly}}{\variety}$.
\end{definition}
\begin{remark}
    Using the local criterion of polynomials and the fact that that $\cubes{\degree+1}{\variety} \subseteq \cubes{\degree+1}{\field}$,
    one can see that for a polynomial $\funcdef{\onvarpoly}{\variety}{\basefield}$ with $\deg(\onvarpoly) = \degree$,
    every extension $\funcdef{\genpoly}{\field}{\basefield}$ with $\onvarpoly = \restrictfunc{\genpoly}{\variety}$
    holds the bound $\deg(\lift{\onvarpoly}) \geq \degree$.
    The other direction is not true in the general case, but it is specifically promised when the variety is $\degree$-lift-enabler.
\end{remark}

This definition naturally raises the following definition:
\begin{definition}[The Lift Operator]
    Let $\degree \in \naturalnumbersset$ be an integer.
    Let $\variety \subseteq \field$ be a $\degree$-lift-enabler subset.
    We define \emph{the $\degree$-lift operator} to be an operator $\funcdef{\lift{\square}}{\allpolyset{\leq \degree}{\variety}{\basefield}}{\allpolyset{\leq \degree}{\field}{\basefield}}$ the following way:
    \newline
    Let $\degree^\prime \leq \degree$.
    Given a polynomial $\funcdef{\onvarpoly}{\variety}{\basefield}$ of degree $\degree^\prime$, the operator $\lift{\square}$ returns a polynomial $\funcdef{\lift{\onvarpoly}}{\field}{\basefield}$ of degree $\degree^\prime$
    such that $\onvarpoly = \restrictfunc{\lift{\onvarpoly}}{\variety}$.
    Note that we did not require the lift to be unique.
    Thus, in case there are multiple valid lifts for a polynomial $\onvarpoly \in \allpolyset{\leq \degree}{\variety}{\basefield}$, the lift operator picks a single (consistent) one of them.
    Moreover, the lift always exists because the subset $\variety$ is $\degree$-lift-enabler.
    \newline
    In addition, for a collection $\onvarpolyset = (\onvarpoly_1,...,\onvarpoly_c)$ of polynomials $\onvarpoly_i \in \allpolyset{\leq \degree}{\variety}{\basefield}$,
    we denote $\lift{\onvarpolyset} \definedas (\lift{\onvarpoly_1},...,\lift{\onvarpoly_c})$
\end{definition}

In the following subsections, we give example to two concrete sets $\variety \subseteq \field$ that are $\degree$-lift-enablers.
Before doing so, we define an algebraic variety:
\begin{definition}[Algebraic Variety]
    For a collection of functions $\genfuncset \definedas \set{\genfunc_1,...\genfunc_c}$ such that $\funcdef{\genfunc_i}{\field}{\basefield}$,
    we denote $\zerofunc{\genfuncset} \definedas \set{x \in \field \suchthat \forall i: \genfunc_i(x) = 0}$.
    \newline
    If the collection is a collection of polynomials, we call $\zerofunc{\genfuncset}$ an \emph{algebraic variety}, shorthand by \emph{variety}.
    \newline
    The degree of the variety is defined to be the maximal degree of polynomials in the collection that defines it.
    The rank of the variety is defined to be the rank of the collection that defines the variety (as a collection).
\end{definition}

\subsection[High Rank Varieties of High Minimal Degree]{High Rank Varieities of High Minimal Degree}\label{subsec:high-rank-varieities-of-high-degree}
We now present a theorem proved in \cite[Corollary 1.10]{kazhdan2018polynomial}, that shows that high rank varieties are $\degree$-lift-enabler when the polynomials defining the variety are of degree $>\degree$:
\begin{theorem}\label{subsec:high-rank-varities-are-d-lift-enabler}
    Let $\basefield$ be a finite field, and let $\varietydeg$, $\varietypolycount>0$ representing parameters of a variety.
    Let $\degree < \varietydeg$ a positive integer representing a degree of a polynomial which we wish to lift.
    There exists $\varietyrankval=\varietyrankval(\basefield,\varietydeg,\varietypolycount)>0$ such that for for all $\blocklength \in \naturalnumbersset$, any variety $\variety = \zerofunc{\varpolyset} \subseteq \field$ for $\varpolyset = (\varpoly_1,...,\varpoly_{\varietypolycount})$
    such that $\rank{\varpolyset} > \varietyrankval$, degree $\deg(\varpolyset) = \varietydeg$, with all defining polynomials of degree $\deg(\varpoly_i)> \degree$,
    it holds that $\variety$ is a $\degree$-lift-enabler subset.
\end{theorem}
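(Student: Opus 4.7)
The plan is to reduce this theorem to \cite[Corollary 1.10]{kazhdan2018polynomial}, which the authors already signal is used as a black box. The main technical step is verifying that the local notion of degree-$\degree$ polynomial on $\variety$ used in this paper coincides with the notion of \emph{weakly polynomial function of degree $\leq \degree$} in the Kazhdan--Ziegler framework. Specifically, a function $\funcdef{\onvarpoly}{\variety}{\basefield}$ satisfies the cube-vanishing condition
\[
    \directionderivative_{y_{\degree^\prime+1}}\cdots\directionderivative_{y_1}\onvarpoly(x)=0 \quad \text{for every } \cube{x}{y_1,\ldots,y_{\degree^\prime+1}}\in\cubes{\degree^\prime+1}{\variety}
\]
exactly when $\onvarpoly$ is weakly polynomial of degree $\leq \degree^\prime$ on $\variety$ in their sense. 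Thus every $\onvarpoly\in\allpolyset{\leq\degree^\prime}{\variety}{\basefield}$ qualifies as input to their extension theorem.

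Next, I would verify the hypotheses of Corollary 1.10 translate correctly. Their theorem produces a global extension $\lift{\onvarpoly}\in\allpolyset{\leq\degree^\prime}{\field}{\basefield}$ of a weakly polynomial function of degree $\leq \degree^\prime$ on $\variety=\zerofunc{\varpolyset}$, provided that (i) every defining polynomial $\varpoly_i$ has degree strictly greater than $\degree^\prime$, and (ii) the collection $\varpolyset$ has rank exceeding a threshold depending only on $\basefield,\varietydeg,\varietypolycount$. Our hypotheses $\deg(\varpoly_i)>\degree\geq\degree^\prime$ and $\rank{\varpolyset}>\varietyrankval$ supply both conditions uniformly over all $\degree^\prime\leq\degree$, so we may take $\varietyrankval$ to be the maximum of the thresholds produced for degrees $0,1,\ldots,\degree$; this remains a function of only $\basefield,\varietydeg,\varietypolycount$ because $\degree<\varietydeg$ is bounded by the variety parameters.

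With the extension $\lift{\onvarpoly}$ in hand, the degree bound is automatic: since $\restrictfunc{\lift{\onvarpoly}}{\variety}=\onvarpoly$ has degree exactly $\degree^\prime$ on $\variety$ while $\lift{\onvarpoly}$ already has degree $\leq \degree^\prime$ on $\field$, the lift realises the correct degree. Iterating this argument over every $\degree^\prime\leq\degree$ yields the definition of $\degree$-lift-enabler.

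The main obstacle is not mathematical but definitional. The Kazhdan--Ziegler weakly polynomial condition is phrased in terms of affine lines and iterated differences inside $\variety$, whereas here it is phrased via cubes inside $\variety$; checking that these two cube/line formulations agree (and that the high-rank hypothesis guarantees enough cubes exist in $\cubes{\degree^\prime+1}{\variety}$ to detect the local degree) is the only nontrivial bookkeeping. Once the translation is in place, the theorem follows by a direct citation.
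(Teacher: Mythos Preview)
The paper does not supply its own proof of this theorem: it is stated as a direct citation of \cite[Corollary~1.10]{kazhdan2018polynomial} and used as a black box. Your proposal to reduce to that result, with the definitional translation between the cube-vanishing notion $\allpolyset{\leq\degree^\prime}{\variety}{\basefield}$ and the Kazhdan--Ziegler framework, is exactly the right way to justify the citation.

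One small correction: you assert the cube condition holds ``exactly when'' $\onvarpoly$ is weakly polynomial, but for the theorem you only need the forward implication (cube-vanishing on $\variety$ $\Rightarrow$ weakly polynomial of degree $\leq\degree^\prime$), which the paper itself flags as easy in the remark following the definition of weakly polynomial in Section~\ref{subsec:high-rank-varieties-on-a-large-field}. The reverse implication is not obvious in general and is not needed here, so you can drop that claim and the associated concern about whether ``enough cubes exist in $\cubes{\degree^\prime+1}{\variety}$ to detect the local degree''. With that simplification, your argument is a clean and complete justification of the citation.
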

\begin{remark}
    Under the conditions stated above, it was proved in ~\cite{kazhdan2018polynomial} that the lift is in fact \emph{unique}.
    Formally, if $\funcdef{\onvarpoly}{\variety}{\basefield}$ is a polynomial of degree $\leq \degree$,
    then there exists a \emph{unique} polynomial $\funcdef{\genpoly }{\field}{\basefield}$ such that $\restrictfunc{\genpoly}{\variety} \equiv \onvarpoly$.
\end{remark}

\subsection[High Rank Varieties on a Large Field]{High Rank Varieties on a Large Field}\label{subsec:high-rank-varieties-on-a-large-field}
In this subsection, we recall a theorem proved by~\cite[Theorem 1.7]{kazhdan2019extendingweaklypolynomialfunctions} regarding high rank varieties that are defined on "large" fields.
We note that the fields are large in respect of the degree $\degree$ one wish to lift, but still does not depend on $\blocklength$.

Next we define a weakly polynomial, which generalizes our definition of a polynomial in $\variety$, that was used in~\cite[Definition 1.1]{kazhdan2019extendingweaklypolynomialfunctions}:
\begin{definition}
    Let $\variety \subseteq \field$ be a set.
    We say a function $\funcdef{\genfunc}{\variety}{\basefield}$ is a \emph{weakly polynomial of degree $\leq \degree$}
    if for any affine subspace $L \subseteq \variety$, the restriction $\restrictfunc{\genfunc}{L}$ is a polynomial of degree $\leq \degree$.
\end{definition}
\begin{remark}
    By the local criteria of a polynomial, it is easy to see that every $\genpoly \in \allpolyset{\leq \degree}{\variety}{\basefield}$ is a weakly polynomial of degree $\leq \degree$.
\end{remark}
And now, we can present the lifting theorem for large fields, as proved in~\cite[Theorem 2.17]{kazhdan2020propertieshighranksubvarieties}.
\begin{theorem}\label{high-rank-varieties-over-large-fields-are-d-lift-enablers}\cite[Theorem 2.17]{kazhdan2020propertieshighranksubvarieties}
    Let $\degree, \varietydeg \in \naturalnumbersset$,
    and let $\basefield$ be a finite field such that $\abs{\basefield} > \degree \cdot \varietydeg$.
    There exists $\rankfunc_{\ref{high-rank-varieties-over-large-fields-are-d-lift-enablers}} = \rankfunc_{\ref{high-rank-varieties-over-large-fields-are-d-lift-enablers}}(\varietydeg, \degree)$
    such that for any variety $\variety \subseteq \field$ of maximal degree $\leq \varietydeg$ and rank $\geq \rankfunc_{\ref{high-rank-varieties-over-large-fields-are-d-lift-enablers}}$,
    have the following property:
    Every weakly polynomial function $\funcdef{\onvarpoly }{\variety}{\basefield}$ of degree $\leq \degree$
    can be lifted to a polynomial function $\funcdef{\genpoly}{\field}{\basefield}$ of degree $\leq \degree$.
\end{theorem}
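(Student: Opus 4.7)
The statement is attributed to Kazhdan and Ziegler, so the author will almost certainly invoke it as a black box; the sketch below is what I would try from scratch. The natural strategy is to induct on the degree $\degree$, using high rank to ensure that $\variety$ carries a rich family of affine subspaces which ``connect'' it combinatorially, and then to construct the lift pointwise by Lagrange interpolation along lines fully contained in $\variety$.

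First I would extract a geometric dichotomy from high rank: for rank large enough in terms of $\varietydeg$ and $\degree$, any affine line $L \subseteq \field$ intersects $\variety$ either in at most $\varietypolycount \cdot \varietydeg$ points or in all of $L$. This follows because, along $L$, each defining polynomial $\restrictfunc{\varpoly_i}{L}$ is either identically zero or a nonzero univariate polynomial of degree $\leq \varietydeg$, and hence has at most $\varietydeg$ roots. The assumption $\abs{\basefield} > \degree \cdot \varietydeg$ lets one separate the two regimes cleanly: any $L$ with $\abs{L \cap \variety} > \degree$ must satisfy $L \subseteq \variety$. An analogous statement for affine $k$-planes, obtained by bootstrapping the line case, shows that through every $x \in \field$ there pass many affine subspaces of dimension $\leq \degree+1$ that sit inside $\variety$; this is where the rank hypothesis is consumed most heavily, producing the function $\rankfunc_{\ref{high-rank-varieties-over-large-fields-are-d-lift-enablers}}$.

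Next I would define the candidate lift $\genpoly$ pointwise. For $x \in \field$, choose an affine line $L_x$ through $x$ with $\abs{L_x \cap \variety} > \degree$; by the dichotomy, $L_x \subseteq \variety$, and by the weakly polynomial hypothesis $\restrictfunc{\onvarpoly}{L_x}$ is a polynomial of degree $\leq \degree$. Set $\genpoly(x)$ to be the value of this univariate polynomial at $x$. Well-definedness under two choices $L_x$, $L_x^{\prime}$ is checked by embedding both lines in an affine $2$-plane $\Pi$ that lies inside $\variety$ (again guaranteed by the geometric lemma, now for $k=2$) and invoking the weakly polynomial property on $\Pi$, which forces both univariate extensions to agree with a common bivariate polynomial of degree $\leq \degree$ on $\Pi$, and in particular to agree at $x$.

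The main obstacle, and the technical core of the Kazhdan--Ziegler argument, is to verify that the globally defined $\genpoly$ is itself a polynomial of degree $\leq \degree$ on $\field$, i.e.\ that $\directionderivative_{y_{\degree + 1}} \cdots \directionderivative_{y_{1}} \genpoly(x) = 0$ for \emph{every} $x, y_1, \ldots, y_{\degree+1} \in \field$, not only for cubes lying inside $\variety$. This requires transporting the weakly polynomial identity from $\variety$ to arbitrary cubes in $\field$ via chains of affine subspaces contained in $\variety$, and forces a quantitative connectivity statement for affine $(\degree + 1)$-cubes relative to $\variety$. Making this precise is where the full strength of both the high rank hypothesis and the field size bound $\abs{\basefield} > \degree \cdot \varietydeg$ is used, and is the part I would expect to dominate the proof's complexity.
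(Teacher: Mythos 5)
The paper does not prove this statement at all: it is imported verbatim as a black box from Kazhdan--Ziegler~\cite[Theorem 2.17]{kazhdan2020propertieshighranksubvarieties}, so your opening guess about how the paper handles it is correct, and the only thing to evaluate is your from-scratch sketch.

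That sketch has a genuine flaw at its central construction. You define $\genpoly(x)$ for an arbitrary $x \in \field$ by choosing an affine line (or low-dimensional affine subspace) through $x$ that is \emph{contained in} $\variety$ and interpolating $\onvarpoly$ along it. For $x \notin \variety$ this is impossible: any affine subspace contained in $\variety$ consists only of points of $\variety$, so no such $L_x$ or plane $\Pi$ through $x$ exists, and high rank cannot help --- it produces many subspaces inside $\variety$ through points \emph{of} $\variety$, never through external points. Thus the pointwise definition of the lift fails exactly where a lift is actually needed, namely off the variety. The natural repair --- take a line through $x$ meeting $\variety$ in many points and interpolate from those values --- also breaks down: by your own dichotomy a line not contained in $\variety$ meets it in at most $\varietydeg$ points (note the threshold is $\varietydeg$, not $\degree$; your claim that $\abs{L \cap \variety} > \degree$ forces $L \subseteq \variety$ tacitly assumes $\degree \geq \varietydeg$, which is not part of the hypotheses), which in general is fewer than the $\degree + 1$ values needed, and even when it is enough, weak polynomiality gives no control over $\restrictfunc{\onvarpoly}{L \cap \variety}$ because weak polynomiality only constrains restrictions to affine subspaces lying inside $\variety$. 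The actual Kazhdan--Ziegler proof is structured quite differently (an induction on $\degree$ exploiting high rank through surjectivity/parameterization properties of the variety rather than pointwise interpolation), and the step you explicitly defer --- verifying that the globally defined function satisfies the degree-$\degree$ cube identities on all of $\field$ --- is precisely the technical core, so even granting the construction, the sketch would not constitute a proof.
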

\begin{note*}
    Note that we stated the theorem above to finite fields, but it is also valid for infinite algebraically closed fields.
\end{note*}
The theorem above implies the following corollary:
\begin{corollary}
    Let $\degree, \varietydeg \in \naturalnumbersset$,
    and let $\basefield$ be a finite field such that $\abs{\basefield} > \degree \cdot \varietydeg$.
    There exists $\rankfunc_{\ref{high-rank-varieties-over-large-fields-are-d-lift-enablers}} = \rankfunc_{\ref{high-rank-varieties-over-large-fields-are-d-lift-enablers}}(\varietydeg, \degree)$
    such that for any variety $\variety \subseteq \field$ of degree $\varietydeg$ and rank $\geq \rankfunc_{\ref{high-rank-varieties-over-large-fields-are-d-lift-enablers}}$
    is a $\degree$-lift-enabler.
\end{corollary}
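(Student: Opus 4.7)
The plan is to derive the corollary as a near-immediate consequence of Theorem~\ref{high-rank-varieties-over-large-fields-are-d-lift-enablers}; the only real work is to reconcile the two slightly different notions of ``polynomial of degree $\leq \degree$ on $\variety$'' appearing in the two statements. The theorem applies to \emph{weakly polynomial} functions, whereas the definition of $\degree$-lift-enabler quantifies over $\allpolyset{\leq \degree^\prime}{\variety}{\basefield}$ for every $\degree^\prime \leq \degree$ and demands a lift \emph{of the same degree}.

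First I would unwind the definition of $\degree$-lift-enabler: fix $\degree^\prime \leq \degree$ and any $\onvarpoly \in \allpolyset{\leq \degree^\prime}{\variety}{\basefield}$, and aim to produce a lift $\lift{\onvarpoly} \in \allpolyset{\leq \degree^\prime}{\field}{\basefield}$. By the remark immediately following the definition of weakly polynomial, $\onvarpoly$ is automatically a weakly polynomial function of degree $\leq \degree^\prime$: the local criterion applied to any affine subspace contained in $\variety$ forces the corresponding restriction of $\onvarpoly$ to be a degree-$\leq \degree^\prime$ polynomial there. I would then invoke Theorem~\ref{high-rank-varieties-over-large-fields-are-d-lift-enablers} with \emph{the degree parameter $\degree^\prime$} in place of $\degree$. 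This is legitimate: the hypothesis $\abs{\basefield} > \degree \cdot \varietydeg$ immediately implies $\abs{\basefield} > \degree^\prime \cdot \varietydeg$, and the theorem returns an extension to a polynomial on $\field$ of degree $\leq \degree^\prime$, which is precisely the degree bound required by the $\degree$-lift-enabler definition.

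The only subtlety is that the rank threshold supplied by Theorem~\ref{high-rank-varieties-over-large-fields-are-d-lift-enablers} depends on the degree parameter passed to it. Since $\degree^\prime$ ranges over the finite set $\set{1, \ldots, \degree}$, I would take the rank threshold of the corollary to be
\[
    \max_{1 \leq \degree^\prime \leq \degree} \rankfunc_{\ref{high-rank-varieties-over-large-fields-are-d-lift-enablers}}(\varietydeg, \degree^\prime),
\]
absorbing the finite maximum harmlessly into the notation $\rankfunc_{\ref{high-rank-varieties-over-large-fields-are-d-lift-enablers}}(\varietydeg, \degree)$. Any variety whose rank exceeds this maximum simultaneously meets the hypothesis of the theorem at every intermediate degree $\degree^\prime \leq \degree$, which delivers the lift required by every clause of the $\degree$-lift-enabler definition. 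There is no substantive technical obstacle here beyond this bookkeeping; the real content is carried entirely by Theorem~\ref{high-rank-varieties-over-large-fields-are-d-lift-enablers} together with the observation that every element of $\allpolyset{\leq \degree^\prime}{\variety}{\basefield}$ is a weakly polynomial of the same degree.
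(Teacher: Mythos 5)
Your proposal is correct and follows essentially the same route as the paper, which states the corollary as an immediate consequence of Theorem~\ref{high-rank-varieties-over-large-fields-are-d-lift-enablers} together with the preceding remark that every $\onvarpoly \in \allpolyset{\leq \degree}{\variety}{\basefield}$ is a weakly polynomial of the same degree. Your extra bookkeeping---invoking the theorem at each $\degree^\prime \leq \degree$ to get the correct degree bound on the lift, and taking the maximum of the resulting rank thresholds---is exactly the (implicit) content the paper leaves unspoken, so there is nothing to add.
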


    \section[Relative Rank-Bias Property]{Relative Rank-Bias Property}\label{sec:relative-rank-bias-property}
In this section, we generalize the relation between rank and bias that is known for $\field$ also for $\variety \subseteq \field$.
Specifically, in Lemma~\ref{high-rank-implies-low-bias}, it was shown that high rank factors have low bias in $\field$.
We wish to define an alternative definition of rank for $\variety \subseteq \field$, called \emph{$\variety$-relative rank},
such that high $\variety$-relative rank implies low bias in $\variety$.
This type of relation (and definition) was shown previously to a few sets;
in~\cite[Theorem 1.8]{lampert2021relative} for sets $\variety = \zerofunc{\genpolyset[2]}$ where $\genpolyset[2]$ is a collection of polynomials of high rank;
and in~\cite[Theorem 1.4]{gowers2022equidistributionhighrankpolynomialsvariables} for sets $\variety = S^\blocklength$ for $S \subset \basefield$.

To understand this notion, we first introduce a simple example that demonstrates the need for a different definition of rank to achieve equidistribution properties in subsets of $\field$.
\begin{example}
    Let $\variety = \set{x \in \field \suchthat x_1 = 0}$.
    Define $\funcdef{\genpoly}{\field}{\basefield}$ by $\genpoly(x) \definedas x_1$.
    \newline
    In $\field$, $\genpoly$ has rank $\infty$ as it can not be decomposed polynomials of degree $< 1$ (constants).
    Additionally, it is perfectly equidistributed.
    This is the simplest example of the rank-bias relation in $\field$.
    \newline
    However, when restricting $\genpoly$ to $\variety$, we get $\restrictfunc{\genpoly}{\variety} \equiv 0$.
    As $0$ is a constant function, it is the \emph{least} equidistributed possible in $\variety$.
    Therefore, we see that the way we defined rank in $\field$ does not imply the desired equidistribution in $\variety$:
    we found a polynomial with high rank (infinity) that has a very high bias \emph{in $\variety$} (the maximal).
\end{example}
The reason the definition of rank in $\field$ fails to capture equidistribution even on subsets that are really similar to $\field$ (isomorphic to $\basefield^k$), is because of the following reason:
Even though our polynomial $\genpoly$ does \emph{not} have a decomposition to a few lower-degree polynomials by itself,
there \emph{exists} a \emph{$\variety$-equivalent} polynomial that has such structured decomposition.
Here, by $\variety$-equivalent we mean a polynomial in $\field$ that is bounded by the same degree bound, and is equal to $\genpoly$ in $\variety$.
In the example described above, this equivalent polynomial is the constant function $0$, and its decomposition is the trivial one (any function decomposes a constant function).
An alternative perspective which we use throughout this paper to $\variety$-equivalence is that both polynomials are equal up to a \emph{valid $\variety$-remainder}: a bounded degree polynomial that is $\equiv 0$ in $\variety$.

Generally speaking, high $\variety$-relative rank may not imply low bias in $\variety$.
Therefore, this structure-randomness relation is not true for a general subset $\variety \subseteq \field$,
but is a \emph{property} of the subset $\variety$.
Thus, we say that a subset has the \emph{relative rank-bias property} if this relation holds,
i.e. if high $\variety$-relative rank implies equidistribution in $\variety$.

Let us now formally define our definition for relative rank, inspired by the two different definitions of relative rank presented in~\cite[Definition 1.6]{lampert2021relative}
and in~\cite[Definition 1.3]{gowers2022equidistributionhighrankpolynomialsvariables}:
\begin{definition}[Relative Rank of a Polynomial]\label{def:relative-rank-of-polynomial}
    Let $\variety \subseteq \field$ and let $\degree \in \naturalnumbersset$.
    For an integer $\degree \in \naturalnumbersset$ and a polynomial $\funcdef{\genpoly}{\field}{\basefield}$, we define its \emph{$\degree$-relative rank} in respect of $\variety$ as:
    \[
        \drelrank{\degree}{\variety}{\genpoly} \definedas \min \set{\drank{\degree}{\genpoly - \relativeremainder{\genpoly}} \suchthat
        \relativeremainder{\genpoly} \in \allpolyset{\leq \deg(\genpoly)}{\field}{\basefield}, \restrictfunc{\relativeremainder{\genpoly}}{\variety} \equiv 0}
    \]
    For a polynomial $\genpoly$ of degree $\deg(\genpoly) = \degree$ we denote $\relrank{\variety}{\genpoly} \definedas \drelrank{\degree}{\variety}{\genpoly}$.
\end{definition}
\begin{definition}[$\variety$-equivalent and $\variety$-remainder]
    Moreover, we say a polynomial is \emph{$\variety$-equivalent} to $\genpoly$ if its restriction to $\variety$ is $\equiv \restrictfunc{\genpoly}{\variety}$.
    We say it is \emph{valid $\variety$-equivalent to $\genpoly$} if it is $\variety$-equivalent to $\genpoly$ and its of the \emph{same} degree of $\genpoly$.
    \newline
    Similarly, we say a polynomial is \emph{$\variety$-remainder} of $\genpoly$ if its restriction to $\variety$ is \emph{$\equiv 0$}.
    We say it is \emph{valid $\variety$-remainder} if it is $\variety$-remainder of $\genpoly$ and its of degree \emph{smaller or equal} of the degree of $\genpoly$.
    \newline
    We typically denote such polynomial as $\relativeremainder{\genpoly}$.
\end{definition}
In other words, the $\degree$-relative rank of a polynomial $\genpoly$ is the smallest $\degree$-rank of all valid $\variety$-equivalents of $\genpoly$.

\begin{note}\label{note:comparison-to-gowers-rank}
    Note that~\cite[Definition 1.3]{gowers2022equidistributionhighrankpolynomialsvariables} defines rank in a substantially different way than our definition,
    and consequentially our results will not apply to the sets they presented.
    One of the main differences in the definition of rank occurs for $\degree = 1$.
    In the definition we use for rank, the rank of every (non-constant) degree-$1$ polynomial is $\infty$, where in the definition used in~\cite{gowers2022equidistributionhighrankpolynomialsvariables} it is a finite number (which is possibly very small).
    This difference is crucial, as for example, it makes regularization according to their definition not-trivially possible,
    where it is known to be possible when rank is defined by the definition we use (Lemma~\ref{regularization-in-Fn-lemma}).
\end{note}
\begin{definition}[Relative Rank of a Factor]
    Let $\variety \subseteq \field$.
    Let $\genpolyset$ be a set of polynomials $\genpolyset = \set{\genpoly_1,...,\genpoly_c}$.
    The rank of the polynomial set $\genpolyset$ relative to the subset $\variety$ is defined as:
    \[
        \relrank{\variety}{\genpolyset} \definedas \min \set{\drelrank{\degree}{\variety}{\sum_{i=1}^c{\lambda_i \genpoly_i}} \suchthat 0 \neq \vec{\lambda} \in \basefield^c, \degree = \max_{i \in \sparens{c}}{\deg(\lambda_i \genpoly_i)}}
    \]
    For a factor $\factor$ defined by a collection of polynomials, we define its relative rank relative to $\variety$ to be the relative rank of the collection of polynomials defining it, relative to the set $\variety$.
    \newline
    For a non-decreasing function $\funcdef{\rankfunc}{\naturalnumbersset}{\naturalnumbersset}$, a factor $\factor$ is called $\rankfunc$-$\variety$-regular if its relative rank in respect to $\variety$ is at least $\rankfunc(\abs{\factor})$.
\end{definition}

\subsection[Relative Rank-Bias Property]{Relative Rank-Bias Property}
\begin{definition}[Relative Rank-Bias property]
    Let $\basefield$ be a finite field, and let $\degree \in \naturalnumbersset$ be an integer.
    Let $\funcdef{\rankbiasfunc}{\realnumbersset^{+}}{\naturalnumbersset}$ be a function that represents the rank-bias relation for a fixed $\degree, \basefield$.
    \newline
    We say a set $\variety \subseteq \field$ has the \emph{$(\rankbiasfunc, \basefield, \degree)$-relative rank-bias property} if
    for every $\epsilon > 0$,
    for every polynomial $\genpoly$ of degree $\leq \degree$ with $\relrank{\variety}{\genpoly} \geq \rankbiasfunc(\epsilon)$ we have:
    \[
        \relbias{\variety}{\genpoly} < \epsilon
    \]
\end{definition}

As an immediate corollary of Lemma~\ref{high-rank-implies-low-bias} that shows that high rank implies low bias, we have that $\variety = \field$ has the relative rank-bias property.
\begin{corollary}[$\field$ has the relative rank-bias property]
    For every finite field $\basefield$ and $\degree \in \naturalnumbersset$, let $\funcdef{\rankbiasfunc}{\realnumbersset^{+}}{\naturalnumbersset}$ defined as $\rankbiasfunc(\epsilon) \definedas \rankval_{\ref{high-rank-implies-low-bias}}(\basefield, \degree, \epsilon)$.
    Then, we have that the set $\variety = \field$ has the $(\rankbiasfunc, \basefield, \degree)$-relative rank-bias property.
\end{corollary}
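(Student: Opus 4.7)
The plan is to observe that the $\variety$-relative notions collapse to the ordinary notions when $\variety = \field$, so the corollary reduces directly to Lemma~\ref{high-rank-implies-low-bias}. The key identification is that the only polynomial $\relativeremainder{\genpoly} \in \allpolyset{\leq \deg(\genpoly)}{\field}{\basefield}$ satisfying $\restrictfunc{\relativeremainder{\genpoly}}{\field} \equiv 0$ is the zero polynomial, since restriction to the full domain is the identity.

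First I would unfold Definition~\ref{def:relative-rank-of-polynomial} with $\variety = \field$: the set over which we minimize in
\[
    \relrank{\field}{\genpoly} = \min \set{\rank{\genpoly - \relativeremainder{\genpoly}} \suchthat \relativeremainder{\genpoly} \in \allpolyset{\leq \deg(\genpoly)}{\field}{\basefield},\ \restrictfunc{\relativeremainder{\genpoly}}{\field} \equiv 0}
\]
consists of the single element $\relativeremainder{\genpoly} \equiv 0$, so $\relrank{\field}{\genpoly} = \rank{\genpoly}$. Next I would compare the bias definitions: by inspection of the definition of $\relbias{\variety}{\genpoly}$, setting $\variety = \field$ yields exactly $\bias{\genpoly}$.

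With these two identifications, the claim becomes: if $\rank{\genpoly} \geq \rankbiasfunc(\epsilon) = \rankval_{\ref{high-rank-implies-low-bias}}(\basefield, \degree, \epsilon)$, then $\bias{\genpoly} < \epsilon$, which is precisely the statement of Lemma~\ref{high-rank-implies-low-bias} applied to $\genpoly$. Combining these three steps proves the $(\rankbiasfunc, \basefield, \degree)$-relative rank-bias property for $\variety = \field$.

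There is essentially no technical obstacle here; the content of the corollary is purely notational, confirming that the relative rank-bias property is a faithful generalization of the classical rank-bias phenomenon in $\field$. The only mild care needed is to verify the edge case in Definition~\ref{def:relative-rank-of-polynomial} where $\deg(\genpoly) = 1$ (so that $\rank{\cdot}$ is either $0$ or $\infty$), but even there the same argument goes through since $\relativeremainder{\genpoly} \equiv 0$ remains the only admissible remainder.
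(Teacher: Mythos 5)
Your proposal is correct and matches the paper's own proof: both observe that for $\variety = \field$ the only admissible remainder is the zero function, so $\relrank{\field}{\genpoly} = \rank{\genpoly}$ and $\relbias{\field}{\genpoly} = \bias{\genpoly}$, and then invoke Theorem~\ref{high-rank-implies-low-bias} directly. Your extra remarks (the explicit identification of the remainder set and the $\deg(\genpoly)=1$ edge case) are fine but add nothing beyond the paper's one-line argument.
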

\begin{proof}
    This is a simple usage of Lemma~\ref{high-rank-implies-low-bias}:
    Note that when $\variety = \field$, we have that $\relrank{\variety}{\genpoly} = \rank{\genpoly}$.
    Now, if $\genpoly$ is a polynomial of degree $\leq \degree$ and $\relrank{\variety}{\genpoly} = \rank{\genpoly} \geq \rankbiasfunc(\epsilon) = \rankbiasfunc_{\ref{high-rank-implies-low-bias}}(\basefield, \degree, \epsilon)$,
    then:
    \[
        \relbias{x \in \field}{\genpoly(x)} < \epsilon
    \]
\end{proof}

\subsection[Limited Relative Rank-Bias Property]{Limited-Relative Rank-Bias Property}\label{subsec:limited-relative-rank-bias-property}
Sometimes, however, we can not request $\variety$ to be such that high relative rank implies low bias for every $\epsilon > 0$,
but only for $\epsilon^\prime \geq \epsilon$ for some constant $\epsilon > 0$.
This leads to defining the \emph{limited relative rank-bias property}, which will be used to discuss such sets $\variety \subseteq \field$.
\newline
As we will later see, this definition raises naturally where $\variety$ is a high rank variety,
in which for the relative rank-bias property to hold for some $\epsilon > 0$, the rank of the variety should be greater than a value that is dependent of $\epsilon$.
Thus, to have the relative rank-bias property for a high rank variety but without requiring an infinitely large rank, we must limit the relative rank-bias property for $\epsilon^\prime \geq \epsilon$
We formulate the definition of this property as follows:
\begin{definition}[Limited Relative Rank-bias property]
    Let $\basefield$ be a finite field, let $\degree \in \naturalnumbersset$ be an integer, and let $\epsilon > 0$ be a constant.
    Let $\funcdef{\rankbiasfunc}{[\epsilon, \infty]}{\naturalnumbersset}$ be a function that represents the limited-relative-rank-bias relation.
    \newline
    We say a set $\variety \subseteq \field$ has the \emph{$(\rankbiasfunc, \basefield, \degree, \epsilon)$-limited-relative-rank-bias property} if
    for every $\epsilon^\prime \geq \epsilon$,
    for every polynomial $\genpoly$ of degree $\leq \degree$ with $\relrank{\variety}{\genpoly} \geq \rankbiasfunc(\epsilon^\prime)$ we have:
    \[
        \relbias{\variety}{\genpoly} < \epsilon^{\prime}
    \]
    As a convention, we denote by $\epsilonlimitedrankbias$ the $\epsilon$ such that the limited-relative-rank-bias property holds for $\variety$.
\end{definition}

\subsubsection[High Rank Varieities]{High Rank Varieities}
In this subsection, we are discussing specifically $\variety \subseteq \field$ that are in the form $\variety = \zerofunc{\genpolyset[2]}$ for a set of polynomials $\genpolyset[2]$ that form a high rank factor.
Let us present some known results of the relative rank-bias relation for high rank varieites:
In the scenario when we are working relative to $\variety$, the equivalent for Theorem~\ref{high-rank-implies-low-bias} is also known when we assume $\degree < char(\basefield)$, as shown in~\cite[Theorem 1.8]{lampert2021relative}:
\begin{theorem}[High relative rank implies low bias in high rank varieties]\label{high-relative-schmidt-rank-implies-low-relative-bias}
    Let $\basefield$ be a finite field and let $0 \leq \degree < char(\basefield)$.
    Let $\epsilon > 0$ be a constant, and let $\varietypolycount \in \naturalnumbersset$.
    There exist $\varietyrankval^{\ref{high-relative-schmidt-rank-implies-low-relative-bias}} = \varietyrankval^{\ref{high-relative-schmidt-rank-implies-low-relative-bias}}(\basefield, \degree, \varietypolycount, \epsilon)$
    and $\rankval^{\ref{high-relative-schmidt-rank-implies-low-relative-bias}} = \rankval^{\ref{high-relative-schmidt-rank-implies-low-relative-bias}}(\basefield, \degree, \epsilon)$ such that the following holds:
    \newline
    Let $\varpolyset = (\varpoly_1,...,\varpoly_{\varietypolycount})$ be a collection of polynomials of degrees $\leq \degree$ with $\rank{\varpolyset} \geq \varietyrankval^{\ref{high-relative-schmidt-rank-implies-low-relative-bias}}$ and let $\genpoly$ be a polynomial of degree $\leq \degree$.
    \newline
    Then, if $\relrank{\varpolyset}{\genpoly} \geq \rankval^{\ref{high-relative-schmidt-rank-implies-low-relative-bias}}$, we have:
    \[
        \relbias{\varpolyset}{\genpoly} < \epsilon
    \]
\end{theorem}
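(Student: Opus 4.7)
The plan is to apply a standard Gauss-sum / Fourier-type argument: expand the indicator of $\variety = \zerofunc{\varpolyset}$ using additive characters of $\basefield$, which converts the relative bias into a sum of ordinary biases over $\field$, and then bound each ordinary bias using Theorem~\ref{high-rank-implies-low-bias}.

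First I would use the identity
\[
    \mathbbm{1}_{\variety}(x) = \frac{1}{|\basefield|^{\varietypolycount}} \sum_{\vec{t} \in \basefield^{\varietypolycount}} \charfunc{\sum_{i=1}^{\varietypolycount} t_i \varpoly_i(x)},
\]
which is valid because the right-hand side sums $|\basefield|^{\varietypolycount}$ copies of $1$ when $x \in \variety$ and collapses to $0$ otherwise. Multiplying by $\charfunc{\genpoly(x)}$ and averaging over $\field$ gives
\[
    |\variety| \cdot \relbias{\variety}{\genpoly} = \frac{|\field|}{|\basefield|^{\varietypolycount}} \sum_{\vec{t}} \bias{\genpoly + \sum_{i} t_i \varpoly_i},
\]
while setting $\genpoly \equiv 0$ produces the same identity for $|\variety|$ alone. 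It therefore suffices to bound, by some small $\epsilon' > 0$, every ordinary bias $\bias{\sum_i t_i \varpoly_i}$ for $\vec{t} \neq 0$ appearing in the denominator, and every ordinary bias $\bias{\genpoly + \sum_i t_i \varpoly_i}$ appearing in the numerator.

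Next I would split into cases on the degree of $\sum_i t_i \varpoly_i$. For any nonzero $\vec{t}$, the definition of factor rank (Definition~\ref{definition:factor-rank}) forces $\sum_i t_i \varpoly_i$ to have $\degree'$-rank at least $\rank{\varpolyset}$ for $\degree' = \max_{i : t_i \neq 0} \deg(\varpoly_i)$, and the note after that definition rules out leading-term cancellation, so the polynomial actually has degree $\degree'$ and rank at least $\rank{\varpolyset}$. For the numerator, if $\deg(\sum_i t_i \varpoly_i) \leq \deg(\genpoly)$, then $-\sum_i t_i \varpoly_i$ is a valid $\variety$-remainder for $\genpoly$, so $\drank{\deg(\genpoly)}{\genpoly + \sum_i t_i \varpoly_i} \geq \relrank{\variety}{\genpoly}$, and high relative rank once more rules out leading-term cancellation, giving an actual rank at least $\relrank{\variety}{\genpoly}$. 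If instead $\deg(\sum_i t_i \varpoly_i) > \deg(\genpoly)$, then adding the strictly lower-degree polynomial $\genpoly$ to $\sum_i t_i \varpoly_i$ changes the top-degree rank by at most one, giving rank at least $\rank{\varpolyset} - 1$. Setting both $\varietyrankval^{\ref{high-relative-schmidt-rank-implies-low-relative-bias}}$ and $\rankval^{\ref{high-relative-schmidt-rank-implies-low-relative-bias}}$ above $r_{\ref{high-rank-implies-low-bias}}(\basefield, \degree, \epsilon') + 2$, Theorem~\ref{high-rank-implies-low-bias} yields the required bias bound $\epsilon'$ in all cases, using the hypothesis $\degree < \mathrm{char}(\basefield)$ for the cleanest form of that theorem.

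Combining the estimates, the denominator satisfies $|\variety| = (|\field|/|\basefield|^{\varietypolycount})(1 \pm |\basefield|^{\varietypolycount} \epsilon')$ and the numerator has absolute value at most $|\field| \epsilon'$, so
\[
    |\relbias{\variety}{\genpoly}| \leq \frac{|\basefield|^{\varietypolycount} \epsilon'}{1 - |\basefield|^{\varietypolycount} \epsilon'},
\]
and choosing $\epsilon' = \epsilon / (2 |\basefield|^{\varietypolycount})$ proves the theorem. The main obstacle is the case split on the degree of $\sum_i t_i \varpoly_i$ coupled with the observation that a high (relative) rank bound on some $\degree$-rank in fact forbids leading-term cancellation and therefore upgrades to a bound on the actual-degree rank, which is what lets us invoke the unrestricted rank--bias theorem uniformly in all cases.
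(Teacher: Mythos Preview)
The paper does not prove this theorem: it is quoted verbatim as \cite[Theorem 1.8]{lampert2021relative} and used as a black box, so there is no in-paper proof to compare against. Your character-sum argument is the standard route and is essentially correct; in particular the case split is sound, and the key observation that a large $\degree'$-rank forbids leading-term cancellation (so the $\degree'$-rank equals the actual rank) is exactly what is needed to invoke Theorem~\ref{high-rank-implies-low-bias} uniformly.

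There is one quantitative discrepancy. With $\epsilon' = \epsilon/(2|\basefield|^{\varietypolycount})$, the rank threshold you need in Case~1 is $r_{\ref{high-rank-implies-low-bias}}(\basefield,\degree,\epsilon')$, which depends on $\varietypolycount$. Hence your argument produces $\rankval^{\ref{high-relative-schmidt-rank-implies-low-relative-bias}} = \rankval^{\ref{high-relative-schmidt-rank-implies-low-relative-bias}}(\basefield,\degree,\epsilon,\varietypolycount)$, whereas the theorem as stated (and the explicit bounds in the remark following it, $\rankval = A(1+s)^B$) asserts that $\rankval^{\ref{high-relative-schmidt-rank-implies-low-relative-bias}}$ is independent of $\varietypolycount$. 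Getting rid of that dependence requires the more refined analysis in \cite{lampert2021relative}; the elementary Fourier expansion cannot see it because the number of Case~1 values of $\vec{t}$ can genuinely be of order $|\basefield|^{\varietypolycount}$. For the downstream applications in this paper the gap is harmless, since $\variety$ (and hence $\varietypolycount$) is fixed and the dependence would simply be absorbed into the function $\rankbiasfunc$; but your proof as written establishes a statement strictly weaker than the one displayed.
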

\begin{note*}
    Note that the original statements in~\cite{lampert2021relative} are stated for a different definition of rank, noted as \emph{schmidt rank}.
    In the appendix~\ref{sec:comparing-ranks} we compare the two different definitions, and show that our definition of rank is comprehensive
    enough in a sense that a polynomial with high rank also has high schmidt rank.
    Additionally, we show that for a given $\rankval \in \naturalnumbersset$, the lower bound of rank required for a polynomial to be of schmidt rank $\geq \rankval$,
    is only $c \cdot \rankval$ for some constant $c \in \naturalnumbersset$.
\end{note*}
\begin{remark}
    Note that in the original statement of theorem~\ref{high-relative-schmidt-rank-implies-low-relative-bias} as stated in~\cite[Theorem 1.8]{lampert2021relative},
    there are good bounds on the rank needed for $\varpolyset$ and $\genpoly$ for the theorem to hold.
    \newline
    Specifically, there exist constants $A(\degree), B(\degree)$ such that for an error $\epsilon = \abs{\basefield^{-s}}$,
    if $\varietyrankval^{\ref{high-relative-schmidt-rank-implies-low-relative-bias}} = A(\varietypolycount + s)^B$ and $\rankval^{\ref{high-relative-schmidt-rank-implies-low-relative-bias}} = A(1 + s)^B$,
    then we have:
    \[
        \relbias{\zerofunc{\varpolyset}}{\genpoly} < \abs{\basefield}^{-s}
    \]
    \newline
    In our proof, it is enough that the bounds on $\rankval$ and $\varietyrankval$ are independent of $\blocklength$, thus we omit the exact bounds stated above and use the statement as stated in Theroem~\ref{high-relative-schmidt-rank-implies-low-relative-bias}.
\end{remark}

\begin{remark}\label{in-high-relative-rank-implies-low-relative-bias-epsilon-increasing-rank-requirment-decreasing}
Note that both $\rankval^{\ref{high-relative-schmidt-rank-implies-low-relative-bias}}(\basefield, \degree, \epsilon)$
and $\varietyrankval^{\ref{high-relative-schmidt-rank-implies-low-relative-bias}}(\basefield, \degree, \varietypolycount, \epsilon)$
are decreasing when $\epsilon$ is increasing.
This means for example, that for all $\epsilon^\prime \geq \epsilon$,
a variety that satisfies the theorem's rank condition for $\epsilon$ also satisfies the theorem's rank condition for $\epsilon^\prime$.
Therefore, a polynomial with rank $\geq \rankval^{\ref{high-relative-schmidt-rank-implies-low-relative-bias}}(\basefield, \degree, \epsilon^\prime)$
will have a bias $< \epsilon^\prime$.
\end{remark}
As a corollary of Theorem~\ref{high-relative-schmidt-rank-implies-low-relative-bias} and Corollary~\ref{relative-schimdt-rank-equals-schmidt-rank-if-the-variety-is-of-high-degree}, we have that
high rank varieties has the limited-relative-rank-bias property.
Formally, we have:
\begin{corollary}[High Rank Varieties Have the Limited-Relative Rank-Bias Property]\label{high-rank-variety-has-limited-rank-relative-bias-property}
    Let $\basefield$ be a finite field, and let $\varietydeg \in \naturalnumbersset$ such that $0 < \varietydeg < \abs{\basefield}$.
    Let $\epsilonlimitedrankbias > 0$ be a constant which represents the desired relative rank-bias limit.
    There exists $\funcdef{\rankbiasfunc_{\ref{high-rank-variety-has-limited-rank-relative-bias-property}}}{[\epsilonlimitedrankbias, \infty]}{\naturalnumbersset}$ with $\rankbiasfunc_{\ref{high-rank-variety-has-limited-rank-relative-bias-property}} \definedas \rankbiasfunc_{\ref{high-rank-variety-has-limited-rank-relative-bias-property}}(\basefield, \varietydeg)$
    such that the following holds:
    \newline
    Let $\varietypolycount \in \naturalnumbersset$ be an integer.
    There exists $\varietyrankval_{\ref{high-rank-variety-has-limited-rank-relative-bias-property}} \definedas \varietyrankval_{\ref{high-rank-variety-has-limited-rank-relative-bias-property}}(\basefield, \varietydeg, \varietypolycount, \epsilonlimitedrankbias)$ such that
    for every $\varpolyset = (\varpoly_1,...,\varpoly_{\varietypolycount})$ polynomial factor of degree $\leq \varietydeg$ with $\rank{\varpolyset} \geq \varietyrankval$, we have:
    \newline
    The variety $\variety = \zerofunc{\varpolyset}$ has the $(\rankbiasfunc_{\ref{high-rank-variety-has-limited-rank-relative-bias-property}}, \basefield, \varietydeg, \epsilonlimitedrankbias)$-limited-relative-rank-bias property.
\end{corollary}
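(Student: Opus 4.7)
The plan is to reduce the corollary directly to Theorem~\ref{high-relative-schmidt-rank-implies-low-relative-bias}, which already gives an essentially identical rank-bias statement for varieties but phrased in the language of \cite{lampert2021relative} using \emph{schmidt rank} rather than the standard rank from Definition~\ref{definition:rank}. Two preparatory items need to be addressed: (i) translating between the two rank notions via the comparison mentioned in Appendix~\ref{sec:comparing-ranks}; and (ii) identifying $\relbias{\varpolyset}{\genpoly}$ with $\relbias{\variety}{\genpoly}$, which is immediate from $\variety = \zerofunc{\varpolyset}$.

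First, I would fix the promised ambient constant by setting
\[
    \varietyrankval_{\ref{high-rank-variety-has-limited-rank-relative-bias-property}}(\basefield, \varietydeg, \varietypolycount, \epsilonlimitedrankbias) \definedas \varietyrankval^{\ref{high-relative-schmidt-rank-implies-low-relative-bias}}(\basefield, \varietydeg, \varietypolycount, \epsilonlimitedrankbias),
\]
which by Remark~\ref{in-high-relative-rank-implies-low-relative-bias-epsilon-increasing-rank-requirment-decreasing} is simultaneously sufficient to satisfy the variety-rank hypothesis of Theorem~\ref{high-relative-schmidt-rank-implies-low-relative-bias} for \emph{every} $\epsilon^\prime \geq \epsilonlimitedrankbias$. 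Then I would take the universal constant $c$ from the appendix's comparison of rank with schmidt rank and define
\[
    \rankbiasfunc_{\ref{high-rank-variety-has-limited-rank-relative-bias-property}}(\epsilon^\prime) \definedas c \cdot \rankval^{\ref{high-relative-schmidt-rank-implies-low-relative-bias}}(\basefield, \varietydeg, \epsilon^\prime),
\]
so that any $\relrank{\variety}{\genpoly}$ lower bound of this form upgrades to a schmidt-rank lower bound of at least $\rankval^{\ref{high-relative-schmidt-rank-implies-low-relative-bias}}(\basefield, \varietydeg, \epsilon^\prime)$.

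With these choices in place, the remainder of the argument is a chain of two invocations: given $\genpoly$ of degree $\leq \varietydeg$ with $\relrank{\variety}{\genpoly} \geq \rankbiasfunc_{\ref{high-rank-variety-has-limited-rank-relative-bias-property}}(\epsilon^\prime)$, the appendix's comparison converts the hypothesis into a schmidt-rank bound, and Theorem~\ref{high-relative-schmidt-rank-implies-low-relative-bias} then yields $\abs{\relbias{\varpolyset}{\genpoly}} < \epsilon^\prime$; since $\variety = \zerofunc{\varpolyset}$, the left-hand side equals $\abs{\relbias{\variety}{\genpoly}}$, which is the desired conclusion.

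The only real obstacle is definitional bookkeeping. One must verify that the ``relative'' minimization in our definition of $\relrank{\variety}{\genpoly}$ (a minimum of standard rank over all valid $\variety$-equivalents of $\genpoly$) passes correctly through the pointwise rank-vs-schmidt-rank comparison to give a lower bound on the \emph{relative} schmidt rank used in Theorem~\ref{high-relative-schmidt-rank-implies-low-relative-bias}. This is a straightforward commutation between the two minima — if every individual $\variety$-equivalent $\genpoly - \relativeremainder{\genpoly}$ with low schmidt rank would also have low rank (via $c$), then no such equivalent exists, so the relative schmidt rank is large — and once it is established the corollary follows immediately with no new quantitative work.
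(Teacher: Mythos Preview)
Your proposal is correct and follows essentially the same route as the paper: set $\varietyrankval$ to the corresponding constant from Theorem~\ref{high-relative-schmidt-rank-implies-low-relative-bias}, use Remark~\ref{in-high-relative-rank-implies-low-relative-bias-epsilon-increasing-rank-requirment-decreasing} to cover all $\epsilon'\ge\epsilonlimitedrankbias$ simultaneously, and read off the bias bound. The only cosmetic difference is that the paper states Theorem~\ref{high-relative-schmidt-rank-implies-low-relative-bias} already in terms of the paper's own $\relrank{\variety}{\cdot}$ (with the schmidt-rank comparison absorbed into that restatement via Appendix~\ref{sec:comparing-ranks}), so in the paper's proof one simply takes $\rankbiasfunc(\epsilon)=\rankval^{\ref{high-relative-schmidt-rank-implies-low-relative-bias}}(\basefield,\varietydeg,\epsilon)$ without your extra factor~$c$; your version with the explicit conversion is equally valid.
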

\begin{proof}
    Let $\basefield$ be a finite field, and let $\varietydeg \in \naturalnumbersset$ such that $0 < \varietydeg < \abs{\basefield}$.
    Let $\epsilonlimitedrankbias > 0$.
    We choose:
    \[
        \rankbiasfunc_{\ref{high-rank-variety-has-limited-rank-relative-bias-property}}(\epsilon) \definedas
        \rankval_{\ref{high-relative-schmidt-rank-implies-low-relative-bias}}(\basefield, \varietydeg, \epsilon)
    \]
    Note that for every $\epsilon$ in its domain, $\rankbiasfunc_{\ref{high-rank-variety-has-limited-rank-relative-bias-property}}$ does not depend on $\epsilonlimitedrankbias$.
    Let $\varietypolycount \in \naturalnumbersset$.
    Now, we choose:
    \[
        \varietyrankval_{\ref{high-rank-variety-has-limited-rank-relative-bias-property}}(\basefield, \varietydeg, \varietypolycount, \epsilonlimitedrankbias) \definedas
        \varietyrankval_{\ref{high-relative-schmidt-rank-implies-low-relative-bias}}(\basefield, \varietydeg, \varietypolycount, \epsilonlimitedrankbias)
    \]
    Using Theorem~\ref{high-relative-schmidt-rank-implies-low-relative-bias} that shows high rank implies low bias in $\variety$,
    and the assumption that $\epsilon \geq \epsilonlimitedrankbias$ (specifically Remark~\ref{in-high-relative-rank-implies-low-relative-bias-epsilon-increasing-rank-requirment-decreasing})
    concludes the proof.
\end{proof}

    \section[Regularization Relative to \titlevariety]{Regularization Relative to \titlevariety}\label{sec:regularization-relative-to-X}

In this section, we generalize the definitions and statements regarding factors and regularization in $\field$,
to their corresponding definitions and statements to relative rank in respect of $\variety \subseteq \field$.
\newline
Note that in oppose to the previous chapter that we discussed a general $U$ and $A \subseteq U$,
in this chapter we discuss only $U = A = \field$.
This is done for clearance and to avoid defining definitions we will not use in our main proof.
\begin{definition}[Measurable Relative to $\variety$]
    Let $\genfuncset = \set{\genfunc_1,...,\genfunc_c}$ be a set of functions $\funcdef{\genfunc_i}{\field}{\basefield}$.
    We say a function $\funcdef{\genfunc[2]}{\field}{\basefield}$ is \emph{measurable in respect of $\genfuncset$ relative to $\variety$},
    or \emph{$\variety$-relative $\genfuncset$-measurable},
    if there exists a function $\funcdef{\relativeremainder{\genfunc[2]}}{\field}{\basefield}$ with $\restrictfunc{\relativeremainder{\genfunc[2]}}{\variety} \equiv 0$
    and a function $\funcdef{\Gamma}{\basefield^c}{\basefield}$ such that:
    \[
        \forall a \in A: \genfunc[2](a) = \Gamma(\genfunc_1(a),...,\genfunc_c(a)) + \relativeremainder{\genfunc[2]}(a)
    \]
    And:
    \[
        \deg(\genfunc[2] - \relativeremainder{\genfunc[2]}) \leq \deg(\genfunc[2])
    \]
    We sometimes refer to $\Gamma$ as the \emph{$\variety$-relative measurement function}.
\end{definition}

\begin{note*}
    Note that if $\deg(\genfunc[2] - \relativeremainder{\genfunc[2]}) \leq \deg(\genfunc[2])$ as discussed above,
    then the same bound  also bounds the degree of the remainder, i.e. $\deg(\relativeremainder{\genfunc[2]}) \leq \deg(\genfunc[2])$.
    Therefore $\relativeremainder{\genfunc[2]}$ is a valid $\variety$-remainder of $\genfunc[2]$.
    Moreover, this requirement is equivalent to the definition above,
    as if $\deg(\relativeremainder{\genfunc[2]}) \leq \deg(\genfunc[2])$, then we also have $\deg(\genfunc[2] - \relativeremainder{\genfunc[2]}) \leq \deg(\genfunc[2])$.
\end{note*}
\begin{note*}
    Also note that without the bound on the degree of the remainder,
    being measurable relative to $\variety$ is in fact equivalent for being a measurable in $A = \variety$.
    This is true because under these conditions, the remainder $\relativeremainder{\genfunc[2]}$ has no constraints but $\restrictfunc{\relativeremainder{\genfunc[2]}}{\variety} \equiv 0$,
    thus the condition left on the measurement is just being a measurement to $\genfunc[2]$ in $\variety$.
\end{note*}
\begin{remark}
    If $\genfunc[2]$ is a function that it is $\genfuncset$-measurable relative to $\variety$,
    then every value of $\genfunc[2]$ can be determined by the values of $\genfunc_1,...,\genfunc_c$ up to a remainder $\relativeremainder{\genfunc[2]}$ of degree $\leq \degree$.
    Thus, perhaps we do not know that the function $\genfunc[2]$ is constant inside every atom of $\genfuncset$ as in a regular semantic refinement,
    but we do know that there exists a function $(\genfunc[2] - \relativeremainder{\genfunc[2]})$ that equals to $\genfunc[2]$ on $\variety$, is constant on every atom of $\genfuncset$ and it is a function with a bounded degree i.e. $\deg(\genfunc[2] - \relativeremainder{\genfunc[2]})\leq \deg(\genfunc[2])$.
\end{remark}

Next, we present a new type of refinement, which is a relaxation of semantic refinement.
This relaxation will allow us to discuss the corresponding claim of the polynomial regularity lemma (Lemma~\ref{regularization-in-Fn-lemma}) for relative rank (instead of rank).
\begin{definition}[Semantic Refinement Relative to $\variety$]
    Let $\factor$ and $\factor^\prime$ be polynomial factors on $\field$, defined by sets of polynomials $\genpolyset, \genpolyset^{\prime}$ respectively,
    and let $\degree \in \naturalnumbersset$.
    We say a factor $\factor^\prime$ is a \emph{semantic refinement relative to $\variety$} of the factor $\factor$,
    or \emph{$\variety$-relative semantic refinement},
    if the following holds:
    Every function $\funcdef{\genfunc}{\field}{\basefield}$ that is $\genpolyset$-measurable relative to $\variety$,
    is also $\genpolyset^{\prime}$-measurable relative to $\variety$.
    If the definition above holds, we denote $\factor^{\prime} \relsemrefineex{\variety} \factor$.
\end{definition}
\begin{note*}
    It is easy to see that this relation is transitive, i.e. if
    $\factor^{\prime} \relsemrefineex{\variety} \factor$ and
    $\factor^{\prime\prime} \relsemrefineex{\variety} \factor^{\prime}$,
    then $\factor^{\prime\prime} \relsemrefineex{\variety} \factor$.
\end{note*}
\begin{remark}
    In $\variety$, semantic refinements relative to $\variety$ behave the same as regular semantic refinements in the perspective of being measurable:
    every function that is $\genpolyset$-measurable in $\variety$ is also $\genpolyset^{\prime}$-measurable in $\variety$.
    However, the two definitions behave differently in the perspective of being measurable in $\field$.
    Specifically, in relative semantic refinements,
    if $\genfunc[2]$ is a $\genpolyset$-measurable function it is not necessarily $\genpolyset^{\prime}$-measurable.
    However, it is measurable up to a remainder $\relativeremainder{\genfunc[2]}$ of degree $\leq \deg(\genfunc[2])$ such that $\restrictfunc{\relativeremainder{\genfunc[2]}}{\variety} \equiv 0$.
\end{remark}
\begin{corollary}\label{relative-semantic-refinement-is-restricted-semantic-refinement}
    If $\factor^{\prime} \relsemrefineex{\variety} \factor$, then in $\variety$ it is a regular semantic refinement, i.e. $\factor^{\prime} \semrefineex{\variety} \factor$.
\end{corollary}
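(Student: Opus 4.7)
The plan is to verify the definition of $\semrefineex{\variety}$ directly from $\relsemrefineex{\variety}$ by applying the hypothesis to carefully chosen test functions. Fix arbitrary $x, y \in \variety$ with $\factor'(x) = \factor'(y)$; the goal is to show $\factor(x) = \factor(y)$, which amounts to showing $\genpoly_i(x) = \genpoly_i(y)$ for every polynomial $\genpoly_i$ in the collection $\genpolyset = (\genpoly_1, \ldots, \genpoly_c)$ defining $\factor$.

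For each fixed $i$, the idea is to feed the hypothesis with the test function $\genpoly_i : \field \to \basefield$ itself. First, observe that $\genpoly_i$ is trivially $\variety$-relative $\genpolyset$-measurable: take the measurement function $\Gamma$ to be projection onto the $i$-th coordinate, and take the zero function as the $\variety$-remainder. The zero function vanishes on $\variety$ and satisfies the degree bound $\deg(\genpoly_i - 0) \leq \deg(\genpoly_i)$ trivially, so this is a legitimate witness.

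Applying the hypothesis $\factor' \relsemrefineex{\variety} \factor$, it follows that $\genpoly_i$ is $\variety$-relative $\genpolyset'$-measurable, where $\genpolyset' = (\genpoly'_1, \ldots, \genpoly'_{c'})$ denotes the polynomials defining $\factor'$. That is, there exist $\Gamma'$ and some $\relativeremainder{\genpoly_i}$ with $\restrictfunc{\relativeremainder{\genpoly_i}}{\variety} \equiv 0$ such that
\[
\genpoly_i(a) = \Gamma'(\genpoly'_1(a), \ldots, \genpoly'_{c'}(a)) + \relativeremainder{\genpoly_i}(a) \qquad \forall a \in \field.
\]
Restricting to $a \in \variety$ kills the remainder, yielding $\genpoly_i(a) = \Gamma'(\genpoly'_1(a), \ldots, \genpoly'_{c'}(a))$ for all $a \in \variety$. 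Since $\factor'(x) = \factor'(y)$ is exactly the statement that $\genpoly'_j(x) = \genpoly'_j(y)$ for all $j$, and since $x,y \in \variety$, the right-hand side of this identity agrees at $x$ and $y$, forcing $\genpoly_i(x) = \genpoly_i(y)$ as required.

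I do not anticipate any real obstacle here: the argument is pure definition-chasing. The only subtle point is verifying that the zero function qualifies as a valid $\variety$-remainder for $\genpoly_i$ (i.e.\ that the degree condition appearing in the definition of $\variety$-relative measurable is satisfied), which is immediate. Interestingly, the degree bound on the remainder $\relativeremainder{\genpoly_i}$ produced by the hypothesis plays no role whatsoever in this proof, since restriction to $\variety$ annihilates the remainder regardless of its degree.
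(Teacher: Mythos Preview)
Your proof is correct and matches the paper's approach: the paper does not give an explicit proof but states the corollary as an immediate consequence of the preceding remark (that any function $\genpolyset$-measurable in $\variety$ is also $\genpolyset'$-measurable in $\variety$), together with the earlier remark that this measurability-preservation characterizes $\semrefineex{\variety}$. Your argument simply unpacks these two remarks by taking the test functions to be the defining polynomials $\genpoly_i$ themselves.
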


Next, we present a new regularization process that allows us to increase the \emph{relative} rank of a factor without increasing the size of the factor too much (independent of $\blocklength$).
This regularization process generalizes the regularization process in $\field$, which was first presented by~\cite[2.3]{green2007distribution}.
We call this type of regularization process a \emph{relative-regularization process} relative to $\variety$, shorthand by $\variety$-regularization
For a specific function $\rankfunc$, we will sometimes call applying this lemma a \emph{$\rankfunc$-$\variety$-regularization}.
Note that to allow such a relative-regularization process to hold, we must use the relaxed definition of semantic refinement that is presented above.
\begin{theorem}\label{theorem:regularization-in-X}
    Let $\funcdef{\rankfunc}{\naturalnumbersset}{\naturalnumbersset}$ be a non-decreasing function and let $\degree \in \naturalnumbersset$.
    There exists $\funcdef{C_{\rankfunc, \degree}^{\ref{theorem:regularization-in-X}}}{\naturalnumbersset}{\naturalnumbersset}$ such that the following holds:
    Let $\factor$ be a factor defined by polynomials $\genpolyset = (\genpoly_1,...,\genpoly_c)$ where for all $i \in [c]$: $\funcdef{\genpoly_i}{\field}{\basefield}$ and $\deg(\genpoly_i) \leq \degree$.
    Then, there is an $\rankfunc$-$\variety$-regular factor $\factor^\prime$ defined by polynomials $\genpolyset^{\prime} = (\genpoly^{\prime}_{1},...,\genpoly^{\prime}_{c^\prime})$ where
    for all $i \in [c]$: $\funcdef{\genpoly^{\prime}_i}{\field}{\basefield}$ and $\deg(\genpoly^{\prime]}_{i}) \leq \degree$ such that
    $\factor^\prime \relsemrefineex{\variety} \factor$ and $c^\prime \leq C_{\rankfunc, \degree}^{\ref{theorem:regularization-in-X}}(c)$.
    \newline
    Moreover, if $\factor \synrefine \bar{\factor}$ for some polynomial factor $\bar{\factor}$ with
    relative rank of at least $\rankfunc(c^\prime)+c^\prime+1$ and rank of at least ${\rankfunc_{\ref{preserving-degree-starting-field}}(\basefield, \degree, c^{\prime})} + c^\prime + 1$,
    then we can require that $\factor^\prime \synrefine \bar{\factor}$.
\end{theorem}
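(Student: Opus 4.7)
The plan is to iteratively apply a reduction step that strictly decreases a lexicographic progress measure on the defining polynomials, while maintaining a sufficiently high (standard) rank via parallel applications of Lemma~\ref{regularization-in-Fn-lemma}, so that the faithful composition lemma (Lemma~\ref{preserving-degree-starting-field}) can be invoked to control the degree of remainders. Define $\Phi(\factor) \definedas (c_{\degree}, c_{\degree - 1}, \ldots, c_1)$ where $c_d$ is the number of defining polynomials of degree exactly $d$, ordered lexicographically with high degrees first. Set $\factor^{(0)} = \factor$, and maintain at each step a factor $\factor^{(t)}$ defined by $\genpolyset^{(t)} = (\genpoly^{(t)}_1, \ldots, \genpoly^{(t)}_{c^{(t)}})$ with $\factor^{(t)} \relsemrefineex{\variety} \factor$ and standard rank $\geq \rankfunc^{\star}(c^{(t)})$ for an auxiliary $\rankfunc^{\star}$ large enough that Lemma~\ref{preserving-degree-starting-field} applies at degree $\degree$ and size $c^{(t)}$.

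The reduction step: if $\factor^{(t)}$ is not $\rankfunc$-$\variety$-regular, choose a nontrivial $\vec{\lambda}$ with $\genpoly^{\star} \definedas \sum_i \lambda_i \genpoly^{(t)}_i$ of maximal degree $\degree^{\prime} \leq \degree$ satisfying $\drelrank{\degree^{\prime}}{\variety}{\genpoly^{\star}} < \rankfunc(c^{(t)})$, and by definition of relative rank write $\genpoly^{\star} = \relativeremainder{\genpoly^{\star}} + \Gamma_{\star}(\genpoly[2]_1, \ldots, \genpoly[2]_s)$ with $\restrictfunc{\relativeremainder{\genpoly^{\star}}}{\variety} \equiv 0$, $\deg(\relativeremainder{\genpoly^{\star}}) \leq \degree^{\prime}$, $\deg(\genpoly[2]_j) < \degree^{\prime}$, and $s < \rankfunc(c^{(t)})$. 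Pick $i_0$ with $\lambda_{i_0} \neq 0$ and $\deg(\genpoly^{(t)}_{i_0}) = \degree^{\prime}$, and form $\genpolyset^{(t+1)}$ by removing $\genpoly^{(t)}_{i_0}$ and adjoining $\genpoly[2]_1, \ldots, \genpoly[2]_s$; then $\Phi$ strictly decreases, since the count at degree $\degree^{\prime}$ drops by $1$ while counts at strictly higher degrees are unchanged. Since $\Phi$ lies in a well-order with bounds depending only on $\rankfunc, \degree, c$, the loop terminates with $c^{\prime}$ bounded by the required $C^{\ref{theorem:regularization-in-X}}_{\rankfunc, \degree}(c)$.

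The main obstacle is verifying $\factor^{(t+1)} \relsemrefineex{\variety} \factor^{(t)}$, that is, producing a \emph{valid} new remainder for every $\variety$-relative measurable function. Given $\genfunc$ with measurement $\Gamma_{\genfunc}$ and valid remainder $\relativeremainder{\genfunc}$ against $\genpolyset^{(t)}$, use the equation for $\genpoly^{\star}$ to solve $\genpoly^{(t)}_{i_0} = \tilde{\genpoly}_{i_0} + \lambda_{i_0}^{-1}\relativeremainder{\genpoly^{\star}}$, where $\tilde{\genpoly}_{i_0}$ is a function of $\genpolyset^{(t+1)}$ (built from $\Gamma_{\star}$ and the surviving $\genpoly^{(t)}_i$). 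Define the new measurement $\Gamma^{\prime}_{\genfunc}$ by substituting $\tilde{\genpoly}_{i_0}$ for $\genpoly^{(t)}_{i_0}$ inside $\Gamma_{\genfunc}$, and set
\[
    \relativeremainder{\genfunc}^{\prime} \definedas \sparens{\Gamma_{\genfunc}(\ldots, \tilde{\genpoly}_{i_0} + \lambda_{i_0}^{-1}\relativeremainder{\genpoly^{\star}}, \ldots) - \Gamma_{\genfunc}(\ldots, \tilde{\genpoly}_{i_0}, \ldots)} + \relativeremainder{\genfunc}.
\]
The bracketed difference vanishes on $\variety$ (since $\relativeremainder{\genpoly^{\star}}$ does), so $\restrictfunc{\relativeremainder{\genfunc}^{\prime}}{\variety} \equiv 0$. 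For the degree bound, note $\deg(\tilde{\genpoly}_{i_0}) \leq \degree^{\prime} = \deg(\genpoly^{(t)}_{i_0})$ because $\deg(\relativeremainder{\genpoly^{\star}}) \leq \degree^{\prime}$. Applying Lemma~\ref{preserving-degree-starting-field} to $\Gamma_{\genfunc}$ on the high-rank collection $\genpolyset^{(t)}$, and substituting the no-larger-degree $\tilde{\genpoly}_{i_0}$ for $\genpoly^{(t)}_{i_0}$, yields $\deg(\Gamma^{\prime}_{\genfunc}(\genpolyset^{(t+1)})) \leq \deg(\genfunc - \relativeremainder{\genfunc}) \leq \deg(\genfunc)$, so $\relativeremainder{\genfunc}^{\prime}$ is a valid $\variety$-remainder; this is the crucial place where the auxiliary regular rank is spent.

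After each reduction, apply Lemma~\ref{regularization-in-Fn-lemma} to $\genpolyset^{(t+1)}$ to restore the standard rank bound $\rankfunc^{\star}(c^{(t+1)})$; the resulting syntactic refinement is in particular a $\variety$-relative semantic refinement, and does not affect $\Phi$. The moreover clause is handled as in the classical proof: the hypotheses of relative rank $\geq \rankfunc(c^{\prime}) + c^{\prime} + 1$ and regular rank $\geq \rankfunc^{\ref{preserving-degree-starting-field}}(\basefield, \degree, c^{\prime}) + c^{\prime} + 1$ on $\bar{\factor}$ ensure that no reduction-triggering combination can depend nontrivially on the defining polynomials of $\bar{\factor}$ (otherwise its relative rank would exceed $\rankfunc(c^{(t)})$), so those polynomials are never removed; the moreover clause of Lemma~\ref{regularization-in-Fn-lemma} preserves them syntactically across the re-regularization phases, giving $\factor^{\prime} \synrefine \bar{\factor}$ as required.
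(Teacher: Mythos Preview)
Your proposal follows the same strategy as the paper: iterate a reduction step that strictly decreases a lexicographic degree-profile, interleaved with standard regularization so that Lemma~\ref{preserving-degree-starting-field} is always available to control remainder degrees. Your handling of the crucial refinement step is in fact cleaner than the paper's: you apply Lemma~\ref{preserving-degree-starting-field} directly to the substitution $\genpoly^{(t)}_{i_0} \mapsto \tilde{\genpoly}_{i_0}$ to bound $\deg(\genfunc - \relativeremainder{\genfunc}^{\prime})$, whereas the paper expands $\Gamma_{\genfunc}$ monomially via the faithful composition lemma and checks each monomial by hand. Both arguments are valid; yours is shorter.

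Two corrections. First, standard regularization (Lemma~\ref{regularization-in-Fn-lemma}) produces a \emph{semantic} refinement, not a syntactic one, and it can change $\Phi$---but only by not increasing it, so termination is unaffected. Second, your justification for the moreover clause is wrong as stated: a reduction-triggering combination \emph{can} depend nontrivially on $\bar{\factor}$-polynomials (e.g.\ $L_1 + \genpoly$ may have low relative rank even when $L_1 \in \bar{\factor}$ has high relative rank, provided $\genpoly \notin \bar{\factor}$). What the relative-rank hypothesis on $\bar{\factor}$ actually buys is that not \emph{all} maximal-degree terms in such a combination can come from $\bar{\factor}$: otherwise, moving the strictly lower-degree terms into the decomposition side exhibits a linear combination of $\bar{\factor}$-polynomials alone with $\degree^{\star}$-relative rank at most $\rankfunc(c^{(t)}) + c^{(t)}$, contradicting the hypothesis. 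Hence one can always choose $i_0$ of maximal degree outside $\bar{\factor}$, and the polynomials of $\bar{\factor}$ are never removed. With this fix the argument goes through.
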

\begin{proof}
    We follow the lines of the proof given by~\cite{book}[Lemma 7.29], but here, we wish to increase the \emph{relative} rank of the factor instead of its rank.
    We present an iterative process, which will eventually lead us to a factor of size $c^{\prime}$ with relative rank higher than $\rankfunc(c^\prime)$,
    that is a semantic refinement relative to $\variety$.
    Let $\degree \in \naturalnumbersset$,
    and let $\factor$ be a polynomial factor defined by $\genpolyset = (\genpoly_1,...,\genpoly_c)$ such that $\funcdef{\genpoly_i}{\field}{\basefield}$ of degree $\leq \degree$.
    Define $M(\factor) \definedas (M_{\degree},...,M_1) \in \naturalnumbersset^{\degree}$,
    where $M_i$ denotes the number of polynomials in $\genpolyset$ that have degree exactly $i$.
    Thus, $\sum_{i=1}^{\degree}M_i = c$.
    We define the lexicographical order on $\naturalnumbersset^{\degree}$ where $M > M^{\prime}$ if and only if $M_i > M^{\prime}_i$ for some $1 \leq i \leq \degree$,
    and $M_j = M^{\prime}_j$ for all $j > i$.
    This proof will be by transfinite induction on $M$ under the lexicographical order.
    Next we describe a step of the regularization process.
    \newline
    Let $\factor$ be a polynomial factor defined by $\genpolyset = (\genpoly_1,...,\genpoly_c)$.
    Note that this is an abuse of notations: the factor $\factor$ and the set $\genpolyset$ refer to the original factor in the first step, and also to the current factor in the middle of the relative-regularization process.
    If $\factor$ is $\rankfunc$-$\variety$-regular, then we are done.
    Otherwise, we change $\factor$ as follows:
    First, we denote $\rankfunc_{\ref{preserving-degree-starting-field}}^{\basefield, \degree}(c) \definedas \rankfunc_{\ref{preserving-degree-starting-field}}(\basefield, \degree, c)$,
    and we $\rankfunc_{\ref{preserving-degree-starting-field}}$-regularize $\genpolyset$ using lemma~\ref{regularization-in-Fn-lemma}
    to get a set of polynomials $\genpolyset_1 = (\genpoly^1_1,...,\genpoly^1_{c_1})$ of degree $\leq \degree$,
    which defines a factor $\factor_1$ and has a rank $\geq \rankfunc_{\ref{preserving-degree-starting-field}}^{\basefield, \degree}(c_1)$.
    Note that $M(\factor_1) \leq M(\factor)$.
    Then, again, if somehow $\factor_1$ is now $\rankfunc$-$\variety$-regular, we are done.
    \newline
    Otherwise, by definition, there exists some linear combination of the polynomials in $\genpolyset_{1}$ that
    has $\degree^\star$-relative rank less than $\rankfunc(c_1)$,
    where $\degree^\star$ is the maximal degree that participates in the linear combination.
    Let $\vec{\genpoly}(x) = \sum_{i=0}^{c_1}{\lambda_i \genpoly^{1}_i(x)}$ where $\vec{0} \neq \vec{\lambda}\in \basefield^{c_1}$,
    be the linear combination with $\drelrank{\degree^\star}{\variety}{\vec{\genpoly}} \leq \rankfunc(c_1)$ where $\degree^\star \definedas \max_{i \in \sparens{c_1}}{\deg(\lambda_i \genpoly^1_i)}$.
    By definition of relative rank, there exists $\relativeremainder{\genpoly} \in \allpolyset{\leq \deg(\vec{\genpoly})}{\field}{\basefield}$ with $\restrictfunc{\relativeremainder{\genpoly}}{\variety} \equiv 0$ such that
    $\drank{\degree^\star}{\vec{\genpoly} - \relativeremainder{\genpoly}} \leq \rankfunc(c_1)$.
    Note that $\deg(\relativeremainder{\genpoly}) \leq \degree^\star$.
    By definition of $\degree^\star$-rank, we have that we can decompose $\vec{\genpoly} - \relativeremainder{\genpoly}$ as a function of $\rankfunc(c_1)$ polynomials of degree $\leq \degree^\star - 1$.
    In other words, there exist a measurement function $\funcdef{\vec{\Gamma}}{\basefield^{\rankfunc(c_1)}}{\basefield}$ and polynomials $\genpoly[2]_1,...,\genpoly[2]_{\rankfunc(c_1)}$
    with $\deg(\genpoly[2]_i) \leq \degree^\star - 1$ such that:
    \[
        \forall a \in \field: \vec{\genpoly}(a) - \relativeremainder{\genpoly}(a) = \vec{\Gamma} \parens {\genpoly[2]_1(a),...,\genpoly[2]_{\rankfunc(c_1)}(a)}
    \]
    Now, let $\genpolyset^{\star} \subseteq \genpolyset_1$ be the set of all such maximal-degree polynomials,
    and let $i^{\star}$ be chosen such that $\genpoly^{1}_{i^\star} \in \genpolyset^{\star}$.
    Note that the set $\genpolyset^\star$ is non empty, as by definition, $\degree^{\star}$ is the maximal degree of polynomial in the expression $\sum_{i=1}^{c_1}{\lambda_{i} \genpoly^{1}_i}$ such that $\lambda_i \neq 0$.
    \newline
    For the next step, define the polynomial factor $\factor_2$ be the polynomial factor defined by the set:
    \[
        \genpolyset_2 \definedas \genpolyset_1 \setminus \set{\genpoly^1_{i^\star}} \cup \set{\genpoly[2]_1,...,\genpoly[2]_{\rankfunc(c_1)}}
    \]
    Finally, the factor $\factor_2$ will be the factor returned from the relative-regularization step.
    \newline
    It is easy to see that if the process above halts, we get a $\rankfunc$-$\variety$-regular factor.
    Now, we prove the first part of the lemma by showing the following claims:
    \begin{claim}
        The factor generated from the regularization above is of bounded size: a bound that may depend on $\rankfunc, \degree, c$, but does not depend on $\blocklength$.
        Formally, we claim that there exists $\funcdef{C^{\ref{theorem:regularization-in-X}}_{\rankfunc, \degree}}{\naturalnumbersset}{\naturalnumbersset}$
        such that we have $c^{\prime} \leq C^{\ref{theorem:regularization-in-X}}_{\rankfunc, \degree}(c)$.
    \end{claim}
    \begin{proof}
        It is enough to prove the following:
        \begin{enumerate}
            \item~\label{relative-regularization-step-factor-size-is-bounded}
            In each step, the amount of polynomials there are in $\genpolyset_1, \genpolyset_2$ are bounded by a bound that depend only on $\rankfunc, \degree, c$ (independent of $\blocklength$).
            \item~\label{relative-regularization-amount-of-steps-is-bounded}
            The number of steps of the relative-regularization process is also bounded by a bound that depends only on $\rankfunc, \degree, c$ (independent of $\blocklength$).
        \end{enumerate}
        The combination of these two will obtain the desired bound of the amount of polynomials in the last-step regularized factor, which is $C^{\ref{theorem:regularization-in-X}}_{\rankfunc, \degree}(c)$.
        Note that the bound on the last-step relative-regularized factor in not simply the multiplication of the two bounds,
        but a recursively-substitution of the bound in~\ref{relative-regularization-step-factor-size-is-bounded},
        a bounded amount of times (bounded by the bound in~\ref{relative-regularization-amount-of-steps-is-bounded}).
        \newline
        For~\ref{relative-regularization-step-factor-size-is-bounded}, we first notice that the number of polynomials in the regular regularization process is bounded,
        specifically we have $\abs{\genpolyset^1} = c_{1} \leq C^{\ref{regularization-in-Fn-lemma}}_{\rankfunc_{\ref{preserving-degree-starting-field}}, \degree}(c)$.
        Moreover, the polynomial factor $\factor_2$ is generated by adding at most $\rankfunc(c_1)$ polynomials to the factor, and thus we have $\abs{\genpolyset_2} \leq c_1 + \rankfunc(c_1)$ which is also bounded by substituting the bound on $c_1$.
        \newline
        For~\ref{relative-regularization-amount-of-steps-is-bounded}, we use the transfinite induction on $M$ we mentioned earlier to show that the process must halt after a bounded number of steps.
        Formally, we show that there exist $M^{\prime}$ which depends only on $M(\factor)$ such that $M(\factor_2) \leq M^{\prime} < M(\factor)$.
        This will bound the number of steps by a value that depend only on $M(\factor)$, which depends only on $\rankfunc, \degree, c$.
        To do so, we first notice that the regular regularization does not increase the value of $M$, i.e. $M(\factor_1) \leq M(\factor)$.
        Thus, we can focus on the second part of the relative-regularization.
        In this part, we replace a single degree $\degree^{\star}$ polynomial by at most $\rankfunc(c_1)$ polynomials of degree $\leq \degree^{\star} - 1$.
        Therefore, by choosing $M^{\prime} \definedas (M_{\degree}, ..., M_{\degree^{\star}+1}, M_{\degree^{\star}}-1, M_{\degree^{\star}-1}+\rankfunc(c_1),...,M_1+\rankfunc(c_1))$
        we get that $M(\factor_2) \leq M^{\prime} < M(\factor_1) \leq M(\factor)$, which concludes~\ref{relative-regularization-amount-of-steps-is-bounded}.
    \end{proof}
    \begin{claim}
        The factor generated from the regularization above is a $\variety$-relative semantic refinement of the original factor, i.e $\factor^\prime \relsemrefineex{\variety} \factor$.
    \end{claim}
    \begin{proof}
        It is enough to show that in each step, the factors generated by the relative-regularization process are semantic refinements relative to $\variety$ of the previous step's factor.
        Specifically, we show $\factor_{2} \relsemrefineex{\variety} \factor_{1} \relsemrefineex{\variety} \factor$ and the claim will follow from transitivity of relative semantic refinements.
        \newline
        We start by proving $\factor_1 \relsemrefineex{\variety} \factor$.
        Let $\funcdef{\genfunc}{\field}{\basefield}$ be a function that is $\genpolyset$-measurable relative to $\variety$.
        We denote $\degree_{\genfunc} \definedas \deg(\genfunc)$.
        By definition, there exists $\funcdef{\Gamma}{\basefield^c}{\basefield}$,
        $\funcdef{\relativeremainder{\genfunc}}{\field}{\basefield}$ where $\deg(\relativeremainder{\genfunc}), \deg(\genfunc - \relativeremainder{\genfunc}) \leq \degree_{\genfunc}$ and $\restrictfunc{\relativeremainder{\genfunc}}{\variety} \equiv 0$, such that:
        \[
            \forall a \in \field: \genfunc(a) = \Gamma(\genpoly_1(a),...,\genpoly_c(a)) + \relativeremainder{\genfunc}(a)
        \]
        Clearly, the function $\Gamma(\genpoly_1(a),...,\genpoly_c(a))$ is $\genpolyset$-measurable in $\field$,
        and because we have $\factor \semrefine \factor_1$, it is also $\genpolyset_1$-measurable in $\field$.
        Thus there exists $\funcdef{\Gamma_1}{\basefield^{c_1}}{\basefield}$ such that:
        \[
            \forall a \in \field: \genfunc(a) = \Gamma_1(\genpoly^{1}_{1}(a),...,\genpoly^{1}_{c_1}(a)) + \relativeremainder{\genfunc}(a)
        \]
        And therefore we have $\factor_1 \relsemrefineex{\variety} \factor$.
        \newline
        Now, we prove $\factor_2 \relsemrefineex{\variety} \factor_1$.
        Let $\funcdef{\genfunc}{\field}{\basefield}$ be a function that is $\genpolyset_1$-measurable relative to $\variety$.
        Again, we denote $\degree_{\genfunc} \definedas \deg(\genfunc)$,
        and by definition there exists $\funcdef{\Gamma_1}{\basefield^c}{\basefield}$,
        $\funcdef{\relativeremainder{\genfunc_1}}{\field}{\basefield}$ where $\deg(\genfunc - \relativeremainder{\genfunc_1}) \leq \degree_{\genfunc}$ and $\restrictfunc{\relativeremainder{\genfunc_1}}{\variety} \equiv 0$, such that:
        \begin{equation} \label{eq:f-decomposition-a}
            \forall a \in \field: \genfunc(a) = \Gamma_1(\genpoly^{1}_{1}(a),...,\genpoly^{1}_{c_1}(a)) + \relativeremainder{\genfunc}_1(a)
        \end{equation}
        Note that we also have $\deg(\relativeremainder{\genfunc_1}) \leq \degree_{\genfunc}$.
        We will refer this equation, and its simplifications we do throughout the proof, as \emph{the $\genpolyset_1$-decomposition of $\genfunc$}.
        \newline
        We wish to show that there exists $\funcdef{\Gamma_2}{\basefield^{c_2}}{\basefield}$ and
        $\funcdef{\relativeremainder{\genfunc}_2}{\field}{\basefield}$ where $\deg(\genfunc - \relativeremainder{\genfunc}_2) \leq \degree_{\genfunc}$ and $\restrictfunc{\relativeremainder{\genfunc}_2}{\variety} \equiv 0$, such that:
        \[
            \forall a \in \field: \genfunc(a) = \Gamma_2 \parens {\genpoly^1_1(a),...\genpoly^1_{i^\star - 1}(a), \genpoly^1_{i^\star + 1}(a),..., \genpoly^1_c(a), \genpoly[2]_1(a),...,\genpoly[2]_{\rankfunc(c_1)}(a)} + \relativeremainder{\genfunc}_2(a)
        \]
        We will do so using the $\genpolyset_1$-decomposition of $\genfunc$.
        Note that showing $\deg(\genfunc - \relativeremainder{\genfunc}) \leq \degree_{\genfunc}$ is equivalent of showing $\deg(\relativeremainder{\genfunc}_2) \leq \degree_{\genfunc}$.
        \newline
        First, by the way we built $\genpolyset_2$, using the same notations in the regularization step, we have:
        \[
            \forall a \in \field: \genpoly_{i^\star}^1(a) =
                \vec{\Gamma} \parens {\genpoly[2]_1(a),...,\genpoly[2]_{\rankfunc(c_1)}(a)}
                + \relativeremainder{\genpoly}(a)
                - \sum_{i \neq i^\star}{\genpoly_{i}^1(a)}
        \]
        Next, we substitute the value of $\genpoly_{i^\star}^1$ in the $\genpolyset_1$-decomposition of $\genfunc$ (\ref{eq:f-decomposition-a}),
        and get another decomposition of $\genfunc$ that does not depend on $\genpoly_{i^\star}^1$.
        Specifically we have:
        \begin{align} \label{eq:f-decomposition-b}
            \forall a \in \field: \genfunc(a) &=
        \Gamma_1 \parens
            {\genpoly^{1}_{1}(a)
                ,...,
                \parens{
                    \vec{\Gamma} \parens {\genpoly[2]_1(a),...,\genpoly[2]_{\rankfunc(c_1)}(a)}
                    + \relativeremainder{\genpoly}(a)
                    - \sum_{i \neq i^\star}{\genpoly_{i}^1(a)}}
                ,...,
                \genpoly^{1}_{c_1}(a))}\\
            &+ \relativeremainder{\genfunc}_1(a)
        \end{align}
        We wish to use the equation above to show that $\genfunc$ is $\genpolyset_2$-measurable relative to $\variety$.
        However, in order to show that the equation above is in the desired structure that proves that $\genfunc$ is $\genpolyset_2$-measurable,
        the expression inside $\Gamma_1$ must not depend on $\relativeremainder{\genpoly}$ because $\relativeremainder{\genpoly} \notin \genpolyset_2$.
        Note that this is enough as the rest of the polynomials in the expression above are in $\genpolyset_2$,
        and therefore without $\relativeremainder{\genpoly}$ the expression is $\genpolyset_2$-measurable.
        \newline
        To do so, we start by simplifying some of the notations.
        We denote:
        \[
            \vec{\genpoly}_2(a) \definedas \vec{\Gamma} \parens {\genpoly[2]_1(a),...,\genpoly[2]_{\rankfunc(c_1)}(a)} - \sum_{i \neq i^\star}{\genpoly_{i}^1(a)}
        \]
        This is the part of the sum that decomposes $\genpoly_{i^\star}^1(a)$ that is $\genpolyset_2$-measurable,
        thus the following equality applies:
        \[
            \genpoly_{i^\star}^1(a) = \vec{\genpoly}_2(a) + \relativeremainder{\genpoly}(a)
        \]
        where $\deg(\vec{\genpoly_2}), \deg(\relativeremainder{\genpoly}) \leq \degree^{\star}$.
        Using this notation, we write the $\genpolyset_1$-decomposition of $\genfunc$ (\ref{eq:f-decomposition-b}), and get:
        \begin{equation} \label{eq:f-decomposition-c}
            \forall a \in \field: \genfunc(a) = \Gamma_1 \parens
            {\genpoly^{1}_{1}(a)
                ,...,
                \parens{
                    \vec{\genpoly}_2(a)
                    - \relativeremainder{\genpoly}(a)}
                ,...,
                \genpoly^{1}_{c_1}(a))}
            + \relativeremainder{\genfunc}_1(a)
        \end{equation}
        Now, we use the following key observation:
        $\rank{\genpolyset_1} \geq \rankfunc_{\ref{preserving-degree-starting-field}}(\basefield, \degree, c_1)$,
        and as $\deg(\Gamma_1(\genpoly_1^1,...,\genpoly^1_{c_1})) \leq \degree_{\genfunc}$
        we can use Lemma~\ref{preserving-degree-starting-field} to achieve that $\Gamma_1$ is a polynomial of the form:
        \begin{equation*} \label{eq:regularization-gamma-is-a-polynomial}
            \Gamma_1(z_1,...,z_{c_1})  =
            \sum_{\alpha \in \sparens{\basefieldsize - 1}^{c_1}}
            {C_{\alpha} \cdot {\prod_{i = 1}^{c_1}}{z_i^{\alpha_i}}}
        \tag{$\star$}
        \end{equation*}
        where $C_{\alpha} = 0$ whenever $\sum_{i = 1}^{c_1}(\alpha_i \cdot \deg(\genpoly_i^1)) > \degree_{\genfunc}$.
        \newline
        Next, we substitute the polynomial structure of $\Gamma_1$ \eqref{eq:regularization-gamma-is-a-polynomial}
        in the $\genpolyset_1$-decomposition of $\genfunc$~\eqref{eq:f-decomposition-c},
        and observe what happens to each summand monomial with non-zero coefficients of $\Gamma_1$ in the expression after the substitution.
        \newline
        We will show that each such monomial is either $\genpolyset_2$-measurable,
        or a sum of a $\genpolyset_2$-measurable function with a valid $\variety$-remainder, i.e. a polynomial of degree $\leq \degree_{\genfunc}$ that is $\equiv 0$ in $\variety$.
        Note that if this is true for each monomial,
        every linear combination of such monomials is also a sum of $\genpolyset_2$-measurable function with a valid $\variety$-remainder.
        Thus, this will also be true for the entire decomposition of $\genfunc$, as it is a linear combination of such monomials summed with a valid remainder $\relativeremainder{\genfunc}_1$.
        This will conclude the proof.
        \newline
        Let $\alpha = (\alpha_1,...,\alpha_{c_1})$ be a vector of degrees that represents such a monomial.
        If $\alpha_{i^\star} = 0$, then the monomial is in the form:
        \[
            \prod_{i \in [c_1]}{{\genpoly_i}^{\alpha_i}} =
            \prod_{i \in [c_1] \setminus \set{i^{\star}}}{{\genpoly_i}^{\alpha_i}}
        \]
        and therefore it is clearly $\genpolyset_2$-measurable as all the polynomials in the expression above are in $\genpolyset_2$.
        \newline
        Next, if $\alpha_{i^\star} \neq 0$, then the monomial is in the form:
       \begin{equation} \label{eq:monomial-of-gamma}
            \prod_{i \in [c_1]}{{\genpoly_i}^{\alpha_i}} =
            (\vec{\genpoly_2} + \relativeremainder{\genpoly})^{\alpha_{i^\star}} \cdot
                \parens{\prod_{i \in [c_1] \setminus \set{i^{\star}}}{{\genpoly_i}^{\alpha_i}}}
       \end{equation}
        where $\sum_{i \in [c_1]}(\alpha_i \cdot \deg(\genpoly_i^1)) \leq \degree_{\genfunc}$.
        As $\deg(\vec{\genpoly_2} + \relativeremainder{\genpoly}) = \deg(\genpoly_{i^\star}) =\degree^{\star}$, we have:
        \[
            \deg \parens{\prod_{i \in [c_1] \setminus \set{i^{\star}}}{{\genpoly_i}^{\alpha_i}}} =
                \sum_{i \in [c_1] \setminus {i^{\star}}}(\alpha_i \cdot \deg(\genpoly_i^1))
                \leq \degree_{\genfunc} - \alpha_{i^\star} \cdot \degree^{\star}
        \]
        Now, we open the left brackets in (\ref{eq:monomial-of-gamma}), i.e $(\vec{\genpoly_2} + \relativeremainder{\genpoly})^{\alpha_{i^\star}}$.
        This enables us to separate the monomial to the part that only depend on $\vec{\genpoly_2}$ summed with a polynomial with bounded degree multiplied by $\relativeremainder{\genpoly}$ (and therefore a valid remainder).
        To be more specific, the monomial is in the form:
        \[
            (\vec{\genpoly_2} + \relativeremainder{\genpoly})^{\alpha_{i^\star}} = \vec{\genpoly_2}^{\alpha_{i^\star}} + \relativeremainder{\genpoly_{\alpha}}
        \]
        for some polynomial $\relativeremainder{\genpoly_{\alpha}}$ such that:
        \begin{enumerate}
            \item $\relativeremainder{\genpoly_{\alpha}}$ is of degree
                    $\deg(\relativeremainder{\genpoly_{\alpha}}) \leq \max \set{\deg(\vec{\genpoly_2}), \deg(\relativeremainder{\genpoly)}} \cdot \alpha_{i^\star} \leq \alpha_{i^\star} \cdot \degree^{\star}$
            \item $\relativeremainder{\genpoly_{\alpha}}$ is a multiple of $\relativeremainder{\genpoly}$, and therefore $\restrictfunc{\relativeremainder{\genpoly_{\alpha}}}{\variety} \equiv 0$
        \end{enumerate}
        Therefore, by substituting the left brackets back to the equation (\ref{eq:monomial-of-gamma}) and as $\vec{\genpoly_2}$ and $\genpoly_i$ for $i \neq i^{\star}$ are $\genpolyset_2$-measurable,
        one can see that the monomial is a sum of a $\genpolyset_2$-measurable polynomial with a valid remainder.
        Specifically, the remainder $\equiv 0$ in $\variety$, and its degree is $\leq \alpha_{i^\star} \cdot \degree^{\star} + \degree_{\genfunc} - \alpha_{i^\star} \cdot \degree^{\star} = \degree_{\genfunc}$.
        This concludes the proof of the claim.
        \end{proof}
    Now, it remains to prove the second part of the Theorem~{\ref{theorem:regularization-in-X}}.
    \begin{claim}
        If $\factor \synrefine \bar{\factor}$ for some polynomial factor $\bar{\factor}$ with
        relative rank of at least $\rankfunc(c^\prime)+c^\prime+1$ and rank of at least ${\rankfunc_{\ref{preserving-degree-starting-field}}(\basefield, \degree, c^{\prime})} + c^\prime + 1$,
        then we can require that $\factor^\prime \synrefine \bar{\factor}$.
    \end{claim}
    \begin{proof}
        We will show claim step-by-step.
        We denote by $\genpolyset, \bar{\genpolyset}, \genpolyset_1, \genpolyset_2$ the polynomial sets that generate the factors $\factor, \bar{\factor}, \factor_1, \factor_2$.
        Note that $\factor_1, \factor_2$ are the factors in the current step of the regularization process, and thus change in each step of the proof.
        We show that in each step, if $\factor \synrefine \bar{\factor}$ for some polynomial factor $\bar{\factor}$ with
        relative rank of at least $\rankfunc(c^\prime)+c^\prime+1$ and rank of at least ${\rankfunc_{\ref{preserving-degree-starting-field}}(\basefield, \degree, c^{\prime})} + c^\prime + 1$,
        then we can require that $\factor_1 \synrefine \bar{\factor}$, and also that $\factor_2 \synrefine \bar{\factor}$.
        \newline
        For the first part, we have $\factor_1 \synrefine \bar{\factor}$ by a simple usage of the second part of lemma~\ref{regularization-in-Fn-lemma},
        as:
        \[
            \rank{\bar{\genpolyset}}
            > \rankfunc_{\ref{preserving-degree-starting-field}}(\basefield, \degree, c^{\prime}) + c^\prime + 1
            \geq \rankfunc_{\ref{preserving-degree-starting-field}}(\basefield, \degree, c_1) + c_1 + 1
        \]
        \newline
        Now we prove the second part.
        We show that in the current regularization step, we could replace $\genpoly^1_{i^\star} \in \genpolyset_1$ such that $\genpoly^1_{i^\star} \notin \bar{\genpolyset}$.
        Note that this is possible whenever $\genpolyset^\star\cap \bar{\genpolyset} \neq \emptyset$ as the choice of $i^\star$ is arbitrary in polynomials which are in $\genpolyset^\star$.
        \newline
        Assume that is not possible and the factor $\genpolyset_1$ is still not $\rankfunc$-$\variety$-regular.
        Then, we have a linear combination $\vec{\genpoly}(x) \definedas \sum_{i=0}^{c_1}{\lambda_{i}\genpoly^1_i(x)}$ with $\drelrank{\degree^\star}{\variety}{\vec{\genpoly}} \leq \rankfunc(c_1)$
        where $\degree^\star = \max_{i \in \sparens{c_1}} {\deg(\lambda_i \genpoly^1_i)}$.
        We denote by $I^{\star} \subseteq [c_1]$ the set of indexes of such maximal-degree polynomials.
        By this notation, our assumption states that for all $i \in I^{\star}$ we have $\genpoly^1_i \in \bar{\genpolyset}$.
        Additionally, note that for all $i \notin I^{\star}$ we have $\deg(\genpoly^1_i) < \degree^{\star}$.
        Therefore, as the linear combination is of $\degree^\star$-relative rank $\leq \rankfunc(c_1)$,
        there exists a polynomial $\relativeremainder{\genpoly}$ of degree $\leq \deg(\vec{\genpoly}) \leq \degree^\star$ with $\restrictfunc{\relativeremainder{\genpoly}}{\variety} \equiv 0$
        such that $\drank{\degree^\star}{\vec{\genpoly} - \relativeremainder{\genpoly}} \leq \rankfunc(c_1)$.
        In other words, there exist a measurement function $\funcdef{\vec{\Gamma}}{\basefield^{\rankfunc(c_1)}}{\basefield}$ and polynomials $\genpoly[2]_1,...,\genpoly[2]_{\rankfunc(c_1)}$
        with $\deg(\genpoly[2]_i) \leq \degree^\star$ such that:
        \[
            \forall a \in \field: \vec{\genpoly}(a) - \relativeremainder{\genpoly}(a) = \vec{\Gamma} \parens {\genpoly[2]_1(a),...,\genpoly[2]_{\rankfunc(c_1)}(a)}
        \]
        By a simple calculation we have:
        \[
            \forall a \in \field: \sum_{i \in I^{\star}}{\genpoly^1_i(a)} - \relativeremainder{\genpoly}(a) =  \vec{\Gamma} \parens {\genpoly[2]_1(a),...,\genpoly[2]_{\rankfunc(c_1)}(a)} +  \sum_{i \notin I^\star}{\genpoly^1_i(a)}
        \]
        and by this we found a linear combination of polynomials in $\bar{\genpolyset}$ with maximal degree $\degree^\star$,
        that has $\degree^\star$-relative-rank $\leq \rankfunc(c_1) + c_1 + 1$.
        This is a contradiction to our assumptions on $\bar{\factor}$, which completes the proof of the claim.
    \end{proof}
    This completes the proof of the lemma.
\end{proof}

    \section[Radius of Reed-Muller over \titlevariety]{Radius of Reed-Muller over \titlevariety}\label{sec:radius-of-RM-over-X}
We recall that the normalized distances of Reed-Muller codes over $\field$ and over $\variety$
are denoted by $\normalizedcodedistance{\basefield}{\degree}$ and $\normalizedcodedistanceex{\basefield}{\variety}{\degree}$ respectively.
We present a theorem that shows that Reed-Muller codes over a subset $\variety \subseteq \field$ that is $\degree$-lift-enabler and has the (limited) relative rank-bias property,
has (approximately) an \emph{equal} normalized distance as Reed-Muller codes over $\field$.
\begin{theorem}\label{thm:distance-of-RM-in-X}
    There exist a function $\epsilon_1(\basefield, \degree, \rankbiasfunc, \epsilonlimitedrankbias)$  such that the following holds:
    Let $\basefield$ be a finite field, and let $\degree \in \naturalnumbersset$ be an integer that represents a degree.
    Let $\epsilonlimitedrankbias > 0$, and let $\funcdef{\rankbiasfunc}{[\epsilonlimitedrankbias, \infty]}{\naturalnumbersset}$ be a limited-relative-rank-bias function.
    \newline
    Let $\variety \subseteq \field$ be a set with the following properties
    \begin{enumerate}
        \item $\variety$ is $\degree$-lift-enabler with a lift operator $\lift{\square}$.
        \item $\variety$ has the $(\rankbiasfunc, \basefield, \degree, \epsilonlimitedrankbias)$-relative-rank-bias property.
    \end{enumerate}
    Then, for $\epsilon_1 \definedas \epsilon_1(\basefield, \degree, \rankbiasfunc, \epsilonlimitedrankbias)$ we have that for all $\blocklength \in \naturalnumbersset$:
    \[
        \normalizedcodedistanceex{\basefield}{\variety}{\degree} \geq \
        \normalizedcodedistance{\basefield}{\degree} - \epsilon_1
    \]
\end{theorem}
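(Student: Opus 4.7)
Fix a nonzero $\onvarpoly \in \allpolyset{\leq \degree}{\variety}{\basefield}$; I aim to show $\pr{x \in \variety : \onvarpoly(x) \neq 0} \geq 1 - \degree/\abs{\basefield} - \epsilon_1$. The plan is to lift $\onvarpoly$ to $\field$, invoke the $\variety$-relative regularization process to build an equidistributed proxy factor, and then reduce the zero-count on $\variety$ to a Schwartz--Zippel computation on $\basefield^{c^\prime}$ for a low-degree polynomial $\Gamma$ on that small ambient space.

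First, since $\variety$ is $\degree$-lift-enabler, let $\genpoly = \lift{\onvarpoly}$, a polynomial in $\allpolyset{\leq \degree}{\field}{\basefield}$ of the same degree as $\onvarpoly$, with $\restrictfunc{\genpoly}{\variety} = \onvarpoly$. I apply Theorem~\ref{theorem:regularization-in-X} to the singleton factor $\set{\genpoly}$ with a rank function $\rankfunc$ chosen large enough (depending on $\basefield, \degree, \rankbiasfunc, \epsilonlimitedrankbias$) so that the resulting $\variety$-relative semantic refinement $\genpolyset^{\prime} = (\genpoly^{\prime}_1, \ldots, \genpoly^{\prime}_{c^\prime})$ has $\variety$-relative rank above $\rankbiasfunc(\epsilon/c^\prime)$ for a small $\epsilon$ (chosen in terms of $\epsilon_1$); the size $c^\prime$ is bounded by $C^{\ref{theorem:regularization-in-X}}_{\rankfunc, \degree}(1)$, which is independent of $\blocklength$. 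By $\variety$-measurability, there are $\funcdef{\Gamma}{\basefield^{c^\prime}}{\basefield}$ and a valid $\variety$-remainder $\relativeremainder{\genpoly}$ (so $\restrictfunc{\relativeremainder{\genpoly}}{\variety} \equiv 0$ and $\deg(\genpoly - \relativeremainder{\genpoly}) \leq \degree$) with $\genpoly(x) = \Gamma(\genpoly^\prime_1(x), \ldots, \genpoly^\prime_{c^\prime}(x)) + \relativeremainder{\genpoly}(x)$ for all $x \in \field$.

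Next, the $(\rankbiasfunc, \basefield, \degree, \epsilonlimitedrankbias)$-relative-rank-bias property applied to every nonzero linear combination of the $\genpoly^\prime_i$ gives low $\variety$-bias for every nontrivial character sum built from $\genpolyset^\prime$; a standard Fourier argument (as referenced in Appendix~\ref{sec:equidistribution-of-functions}) converts this into approximate equidistribution of the tuple $(\genpoly^\prime_1(x), \ldots, \genpoly^\prime_{c^\prime}(x))$ on $\variety$. Hence
\[
    \pr{x \in \variety : \onvarpoly(x) = 0}
    = \pr{x \in \variety : \Gamma(\genpoly^\prime_1(x), \ldots, \genpoly^\prime_{c^\prime}(x)) = 0}
    = \pr{b \in \basefield^{c^\prime} : \Gamma(b) = 0} \pm \epsilon_1,
\]
where the first equality uses $\restrictfunc{\relativeremainder{\genpoly}}{\variety} \equiv 0$ and the second uses equidistribution.

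Finally, I bound $\pr{b \in \basefield^{c^\prime} : \Gamma(b) = 0}$ via Schwartz--Zippel on the small ambient cube $\basefield^{c^\prime}$. The faithful composition lemma (applied since $\genpolyset^{\prime}$ has sufficiently high \emph{regular} rank, which I ensure by enlarging the requested rank inside the $\variety$-regularization call) shows that $\Gamma$ is a polynomial in $c^\prime$ variables whose $\deg(\genpoly^\prime_i)$-weighted degree is bounded by $\deg(\genpoly - \relativeremainder{\genpoly}) \leq \degree$, and in particular its total degree is at most $\degree$. Moreover $\Gamma \not\equiv 0$ because $\genpoly - \relativeremainder{\genpoly}$ agrees with $\onvarpoly$ on $\variety$ and $\onvarpoly \not\equiv 0$. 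Thus $\pr{b \in \basefield^{c^\prime} : \Gamma(b) = 0} \leq \degree/\abs{\basefield}$, and rearranging gives $\pr{x \in \variety : \onvarpoly(x) \neq 0} \geq 1 - \degree/\abs{\basefield} - \epsilon_1$. The main obstacle is calibrating parameters so that (a) the regularization output has enough \emph{relative} rank to force equidistribution within error $\epsilon_1$, (b) its \emph{regular} rank is still large enough to trigger the faithful composition lemma, and (c) both rank demands remain compatible with the limit $\epsilonlimitedrankbias$ in the rank-bias hypothesis; this is handled by choosing $\rankfunc$ to simultaneously majorize $\rankbiasfunc$ and $\rankfunc_{\ref{preserving-degree-starting-field}}$ on the relevant inputs.
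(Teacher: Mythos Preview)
Your proposal is correct and follows the same high-level strategy as the paper: lift $\onvarpoly$, apply the $\variety$-relative regularization of Theorem~\ref{theorem:regularization-in-X} to the singleton factor $\{\lift{\onvarpoly}\}$, use the valid-remainder guarantee to get a degree-$\leq\degree$ object, and use equidistribution to control $\prex{x\in\variety}{\onvarpoly(x)=0}$.

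The one genuine difference is where you cash in the degree bound. The paper sets $\genpoly':=\genpoly-\relativeremainder{\genpoly}$, observes $\deg(\genpoly')\le\degree$ directly from the valid-remainder property, and applies the Reed--Muller distance in $\field$ to $\genpoly'$; it then compares $\prex{a\in\field}{\genpoly'(a)=0}$ with $\prex{x\in\variety}{\genpoly'(x)=0}$ using equidistribution of $\genpolyset^{\prime}$ in \emph{both} $\field$ and $\variety$ (hence its rank function majorizes both $\rankbiasfunc$ and $\rankfunc_{\ref{high-rank-implies-low-bias}}$). You instead pass directly to $\Gamma$ on $\basefield^{c'}$: you only need equidistribution on $\variety$, but you need the faithful composition lemma (Lemma~\ref{preserving-degree-starting-field}) to certify $\deg(\Gamma)\le\degree$ before invoking Schwartz--Zippel on $\basefield^{c'}$. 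Both trades are sound: since $\rank{\genpolyset'}\ge\relrank{\variety}{\genpolyset'}$, the high relative rank you request already forces enough regular rank for faithful composition, and your parameter remark (c) correctly flags the $\epsilonlimitedrankbias$ compatibility issue that the paper handles in its Claim on the choice of $\epsilon_1$. Net effect: your route avoids the second equidistribution comparison at the cost of invoking faithful composition; the paper's route avoids faithful composition here at the cost of equidistributing in $\field$ as well. Either way the bound $\prex{x\in\variety}{\onvarpoly(x)=0}\le 1-\normalizedcodedistance{\basefield}{\degree}+\epsilon_1$ follows.
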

\begin{proof}
    We wish to do a reduction of our question regarding the radius of Reed-Muller in $\variety$ to the same question about Reed-Muller in $\field$.
    Let $\basefield$ be a finite field, and let $\degree \in \naturalnumbersset$ be an integer that represents a degree.
    Let $\epsilonlimitedrankbias > 0$, and let $\funcdef{\rankbiasfunc}{[\epsilonlimitedrankbias, \infty]}{\naturalnumbersset}$ be a limited-rank-relative-bias function.
    Let $\epsilon_1 \definedas \epsilon_1(\basefield, \degree, \rankbiasfunc, \epsilonlimitedrankbias)$ be a function we will specify later.
    Let $\variety \subseteq \field$ be a set with the properties defined above.
    \newline
    Moreover, let $\epsilon > \epsilon_1$ be some positive value.
    We will show that:
    \[
        \normalizedcodedistanceex{\basefield}{\variety}{\degree} > \
        \normalizedcodedistance{\basefield}{\degree} - \epsilon
    \]
    This will be enough as if the above holds for every $\epsilon > \epsilon_1$, we get that in fact
    $\normalizedcodedistanceex{\basefield}{\variety}{\degree} \geq \normalizedcodedistance{\basefield}{\degree} - \epsilon_1$.
    \newline
    For start, we note a simple observation: as Reed-Muller over $\variety$ is a linear code, we have
    \[
        \normalizedcodedistanceex{\basefield}{\variety}{\degree} =
        \min \set{\prex{x \in \variety}{\onvarpoly(x) \neq 0} \suchthat \onvarpoly \in \allpolyset{\leq \degree}{\variety}{\basefield}}
    \]
    Now, let $\onvarpoly \in \allpolyset{\leq \degree}{\variety}{\basefield}$ be a polynomial over $\variety$,
    and denote $\degree_{\onvarpoly} \definedas \deg(\onvarpoly)$.
    We wish to lower-bound the value of $\prex{x \in \variety}{\onvarpoly(x) \neq 0}$.
    To do so, we will equivalently upper-bound the value of $\prex{x \in \variety}{\onvarpoly(x) = 0}$.
    Precisely, to complete the proof all we need to show is:
    \[
        \prex{x \in \variety}{\onvarpoly(x) = 0} \leq 1 - \normalizedcodedistance{\basefield}{\degree} + \epsilon
    \]
    \newline
    Now we begin the proof itself.
    First, we lift the polynomial $\onvarpoly$ and get a polynomial $\funcdef{\lift{\onvarpoly}}{\field}{\basefield}$
    such that $\restrictfunc{\lift{\onvarpoly}}{\variety} \equiv \onvarpoly$ and $\deg(\lift{\onvarpoly}) = \degree_{\onvarpoly}$.
    Next, denote by $\factor_{\lift{\onvarpoly}}$ the factor defined by the set of single polynomial $\genpolyset = \set {\lift{\onvarpoly}}$.
    Trivially, the polynomial $\lift{\onvarpoly}$ is measurable in respect of $\genpolyset$.
    \newline
    We define the rank function:
    \[
        \rankfunc(m) \definedas \max \set{
            \rankbiasfunc \parens {\dfrac{\epsilon / 2}{\abs{\basefield}^m}},
            \rankfunc_{\ref{high-rank-implies-low-bias}} \parens{\basefield, \degree, \dfrac{\epsilon / 2}{\abs{\basefield}^m}}}
    \]
    Then, we $\rankfunc$-$\variety$-regularize $\genpolyset$ using Lemma~{\ref{theorem:regularization-in-X}}.
    This gives us a $\rankfunc$-$\variety$-regular factor $\factor^\prime$, which is defined by a set of polynomials $\genpolyset^{\prime} \definedas \set{\genpoly^\prime_1,...,\genpoly^\prime_{c^\prime}}$
    of degree $\leq \degree$ such that $\factor^\prime \relsemrefine{\variety}\factor_{\lift{\onvarpoly}}$ with $\relrank{\variety}{\genpolyset^\prime} \geq \rankfunc$
    and with bounded amount of polynomials defining it i.e,$c^\prime \leq C_{\rankfunc, \degree}^{\ref{theorem:regularization-in-X}}(1)$.
    Therefore, from definition we have that $\lift{\onvarpoly}$ is $\genpolyset^\prime$-measurable relative to $\variety$.
    Thus, there exists a measurement function $\funcdef{\Gamma}{\basefield^{c^\prime}}{\basefield}$
    and a remainder $\funcdef{\relativeremainder{\Gamma}}{\field}{\basefield}$ with $\restrictfunc{\relativeremainder{\Gamma}}{\variety} \equiv 0$
    and degree bounded by $\degree_{\onvarpoly}$, such that:
    \[
        \forall a \in \field:
        \lift{\onvarpoly}(a) =
        \Gamma(\genpoly^\prime_1(a),...,\genpoly^\prime_{c^\prime}(a))
        + \relativeremainder{\Gamma}(a)
    \]
    Next, we denote $\genpoly^\prime \definedas \lift{\onvarpoly} - \relativeremainder{\Gamma}$.
    By definition of remainder function, we have that $\restrictfunc{\genpoly^\prime}{\variety} \equiv \onvarpoly$.
    Additionally, note that $\genpoly^\prime$ is a polynomial over $\field$ of degree $\deg(\genpoly^\prime) = \degree_{\onvarpoly} \leq \degree$,
    and hence by the definition of $\normalizedcodedistance{\basefield}{\degree}$:
    \begin{equation}\label{eq:polynomials-in-fn-are-bounded-away-from-zero-with-high-probability}
    \prex{a \in \field}{\genpoly^\prime(a) = 0} \leq 1 - \normalizedcodedistance{\basefield}{\degree}
    \end{equation}
    For the next step, we claim that $\genpoly^\prime$ equals $0$ in $\field$ approximately with the same probability it equals $0$ in $\variety$.
    Note that this is the heart of the proof: it allows use properties known in $\field$ to new properties in $\variety$.
    This is formulated as follows:
    \begin{claim}\label{claim:p-and-P-have-the-same-approximation}
        We have:
        \[
            \abs{\prex{a \in \field}{\genpoly^\prime(a)= 0} -
            \prex{x \in \variety}{\genpoly^\prime(x) = 0}} \leq  \epsilon
        \]
    \end{claim}
    \begin{proof}
        Denote $S \definedas \basefield^{c^\prime}$, and for all $s \in S$, denote:
        \[
            p_1(s) \definedas \prex{a \in \field}{(\genpoly^{\prime}_1(a),...,\genpoly^{\prime}_{c^\prime}(a)) = s}
        \]
        As of our choice of $\rankfunc$, we have $\rank{\genpolyset^{\prime}} \geq \rankfunc_{\ref{high-rank-implies-low-bias}} \parens {\basefield, \degree, \dfrac{\epsilon/2}{\abs{\basefield}^{c^\prime}}}$.
        By combining Lemma~{\ref{high-rank-implies-low-bias}} with Lemma~{\ref{every-linear-combination-has-low-bias-implies-equidistribution}},
        we have that $p_1$ is ($\epsilon/2\abs{S}$)-equidistributed, i.e:
        \[
            p_1(s) = \dfrac{1 \pm \epsilon/2}{\abs{S}}
        \]
        Similarly, denote:
        \[
            p_2(s) \definedas \prex{x \in \variety}{(\genpoly^{\prime}_1(x),...,\genpoly^{\prime}_{c^\prime}(x)) = s}
        \]
        As of our choice of $\rankfunc$, we have $\relrank{\variety}{\genpolyset^\prime} \geq \rankbiasfunc(\epsilon / 2\abs{S})$.
        Now, we wish to use the relative rank-bias relation with Lemma~\ref{every-linear-combination-has-low-bias-implies-equidistribution}
        to conclude similarly that $p_2$ is ($\epsilon/2\abs{S}$)-equidistributed, i.e:
        \[
            p_2(s) = \dfrac{1 \pm \epsilon/2}{\abs{S}}
        \]
        However, in order to so, we must first ensure that $(\epsilon/2\abs{S}) \geq \epsilonlimitedrankbias$.
        This is done by choosing a correct $\epsilon_1$, and formulated in the following claim:
        \begin{claim}
            One can choose $\epsilon_1 \definedas \epsilon_1(\basefield, \degree, \rankbiasfunc, \epsilonlimitedrankbias)$ such that if $\epsilon \geq \epsilon_1$ we have that $\epsilon/2\abs{S} \geq \epsilon_1$.
        \end{claim}
        \begin{proof}
            We need that:
            \[
                \dfrac{\epsilon}{2 \abs{\basefield}^{c^\prime}} \geq \epsilonlimitedrankbias
            \]
            As $c^\prime \leq C^{\ref{theorem:regularization-in-X}}_{\rankfunc, \degree}(1)$,
            for the term above to hold it is enough that the following will be true:
            \[
                \epsilon \geq \epsilonlimitedrankbias \cdot 2 \abs{\basefield}^{C^{\ref{theorem:regularization-in-X}}_{\rankfunc, \degree}(1)}
            \]
            and as $\rankfunc$ and thus also $C^{\ref{theorem:regularization-in-X}}_{\rankfunc, \degree}(1)$ are independent of $\blocklength$,
            we can pick $\epsilon_1 = \epsilon_1(\basefield, \degree, \rankbiasfunc, \epsilonlimitedrankbias)$ and get what we aimed for.
        \end{proof}
        Now, under that assumption of $\epsilon_1$ written above, we have that $p_2$ is ($\epsilon/2\abs{S}$)-equidistributed.
        This allows us to use the similar distributions of $\genpolyset^\prime$ in $\field$ and in $\variety$ to conclude
        that $\genpoly^\prime$ behaves similar in $\field$ and in $\variety$:
        \begin{flalign*}
            \prex{a \in \field}{\genpoly^\prime(a)= 0}
            &=\sum_{s \in S} {p_1(s) \cdot \existfunc{\Gamma(s) = 0}} \\
            &=\sum_{s \in S} {p_2(s) \cdot \existfunc{\Gamma(s) = 0}} \pm \epsilon \\
            &=\prex{x \in \variety}{\genpoly^\prime(x)= 0} \pm \epsilon
        \end{flalign*}
        which concludes the proof of the claim.
    \end{proof}

    Finally, as $\restrictfunc{\genpoly^\prime}{\variety} \equiv \onvarpoly$, we have that $\prex{x \in \variety}{\genpoly^\prime(x) = 0} = \prex{x \in \variety}{\onvarpoly(x) = 0}$.
    Thus, the claim above combining with~\eqref{eq:polynomials-in-fn-are-bounded-away-from-zero-with-high-probability}
    shows that the probability we wished to bound is bounded as we aimed for:
    \[
        \prex{x \in \variety}{\onvarpoly(x) = 0} \leq 1 - \normalizedcodedistance{\basefield}{\degree} + \epsilon
    \]
    This concludes the proof of the theorem.
\end{proof}
\begin{remark}
    Under the same conditions,
    the distance of Reed-Muller codes in $\variety$ is also bounded \emph{from above} by the distance of Reed-Muller codes in $\field$,
    and we have:
    \[
        \normalizedcodedistanceex{\basefield}{\variety}{\degree} \leq \normalizedcodedistance{\basefield}{\degree} + \epsilon_1
    \]
    \begin{proof}
        Let $\funcdef{\genfunc}{\field}{\basefield}$ be the polynomial in $\field$ with the \emph{smallest} distance from $0$ as possible, that is $\normalizedcodedistance{\basefield}{\degree}$.
        Denote $\onvarpoly \definedas \restrictfunc{\genpoly}{\variety}$.
        Note that $\onvarpoly$ is a polynomial in $\variety$.
        Now repeat the proof using these two polynomials, and by Claim~\ref{claim:p-and-P-have-the-same-approximation}, we have that a random input of $\genpoly$ yields $0$
        (approximately) the same as a random input of $\onvarpoly$ yields $0$.
        Thus as we have $\prex{x \in \field}{\genpoly(x) = 0} = 1 - \normalizedcodedistance{\basefield}{\degree}$
        we also get:
        \[
            \prex{x \in \variety}{\onvarpoly(x) = 0} \geq 1 - \normalizedcodedistance{\basefield}{\degree} - \epsilon_1
        \]
        This bounds \emph{from above} the distance of Reed-Muller code in $\variety$ and we have:
        \[
            \normalizedcodedistanceex{\basefield}{\variety}{\degree} \leq \normalizedcodedistance{\basefield}{\degree} + \epsilon_1
        \]
    \end{proof}
\end{remark}
\begin{corollary}
    If we assume $\variety$ has the limited-relative rank-bias property to \emph{any extent} (or just the relative rank-bias property),
    then the theorem above proves an exact equality $\normalizedcodedistanceex{\basefield}{\variety}{\degree} = \normalizedcodedistance{\basefield}{\degree}$.
\end{corollary}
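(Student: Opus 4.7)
The plan is to combine the lower bound from Theorem~\ref{thm:distance-of-RM-in-X} with the upper bound from the remark immediately following it, and then drive $\epsilonlimitedrankbias$ to zero. The hypothesis that $\variety$ has the limited-relative rank-bias property to any extent (equivalently, the full relative rank-bias property) ensures that we may instantiate Theorem~\ref{thm:distance-of-RM-in-X} with the \emph{same} witnessing function $\rankbiasfunc$ for arbitrarily small $\epsilonlimitedrankbias > 0$, which is the only quantifier we will vary.

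First I would observe that, for fixed $\basefield$, $\degree$, and $\rankbiasfunc$, the two bounds together yield
\[
\abs{\normalizedcodedistanceex{\basefield}{\variety}{\degree} - \normalizedcodedistance{\basefield}{\degree}} \leq \epsilon_1(\basefield, \degree, \rankbiasfunc, \epsilonlimitedrankbias)
\]
for every admissible $\epsilonlimitedrankbias > 0$, while the left-hand side is independent of $\epsilonlimitedrankbias$. So it suffices to prove that $\epsilon_1 \to 0$ as $\epsilonlimitedrankbias \to 0$.

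To verify this, I would unpack how $\epsilon_1$ is constructed inside the proof of Theorem~\ref{thm:distance-of-RM-in-X}. Once a target error $\epsilon > 0$ is chosen, the rank function $\rankfunc$ there depends only on $(\basefield, \degree, \rankbiasfunc, \epsilon)$, and hence so does the regularization constant $C \definedas C^{\ref{theorem:regularization-in-X}}_{\rankfunc, \degree}(1)$. The sole role played by $\epsilonlimitedrankbias$ is the concluding compatibility check $\epsilon \geq 2 \abs{\basefield}^{C} \cdot \epsilonlimitedrankbias$, which for any prescribed $\epsilon$ is satisfied once $\epsilonlimitedrankbias$ is small enough. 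Thus, given any $\eta > 0$, the plan is to first pick $\epsilon \in (0, \eta)$ (which pins down $\rankfunc$ and $C$), then choose $\epsilonlimitedrankbias$ small enough that the inequality holds, yielding $\epsilon_1 \leq \epsilon < \eta$; letting $\eta \to 0$ forces equality.

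The main subtlety I anticipate is the apparent circularity here, since $C$ depends on $\epsilon$ while the compatibility inequality governing admissible $\epsilon$ itself depends on $C$. The resolution is the correct order of quantifiers: fix $\epsilon$ first, which pins down $C$; then shrink $\epsilonlimitedrankbias$. This is precisely the step that consumes the full (unrestricted) rank-bias hypothesis, as opposed to any single instance of the limited version, and is exactly what the corollary's phrasing ``to any extent'' is designed to supply.
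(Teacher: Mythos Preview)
Your proposal is correct and matches the paper's intent: the corollary is stated without proof, as an immediate consequence of Theorem~\ref{thm:distance-of-RM-in-X} together with the subsequent remark, and your argument spells out precisely that implicit reasoning. Your careful handling of the order of quantifiers (fix $\epsilon$ first, which determines $\rankfunc$ and $C$, then shrink $\epsilonlimitedrankbias$) correctly resolves the apparent circularity and is exactly the justification the paper leaves to the reader.
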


    \section[List Decoding Reed Muller Over \titlevariety]{List Decoding Reed Muller Over \titlevariety}\label{sec:list-decoding-reed-muller-over-X}
In this section, we prove our main theorem:
we prove the list decoding radius of Reed-Muller codes in is $\variety$ \emph{at least}
the list decoding radius of Reed-Muller codes in $\field$,
assuming $\variety$ is lift-enabler and has the relative rank-bias property.
We start by presenting formally the list decoding radius in $\variety$.
\begin{definition}[List Decoding in $\variety$]
    Let $\basefield$ be a finite field.
    Let $\degree, \blocklength \in \naturalnumbersset$, and let $\variety \subseteq \field$.
    \newline
    We define the reed muller list-decoding count in $\variety$ at distance $\tau$ as follows:
    \[
        \listpolycount{\basefield}{\variety}{\degree}{\tau} \definedas
        \max_{\funcdef{\genfunc}{\variety}{\basefield}}
            {\abs{\set{\genpoly \in \allpolyset{\leq \degree}{\variety}{\basefield} \suchthat {\dist{\genpoly, \genfunc} \leq \tau}}}}
    \]
    Additionally, we define $\listdecodingradiusex{\basefield}{\variety}{\degree}$ to be the \emph{list decoding radius}, which is
    the maximum $\tau$ for which $\listpolycount{\basefield}{\variety}{\degree}{\tau - \epsilon}$ is bounded by a \emph{constant} depending only on $\epsilon, \abs{\basefield}, \degree$.
\end{definition}

We recall that it was shown in~\cite[Theorem 1]{bhowmick2014list} that the list decoding radius of Reed Muller is $\normalizedcodedistance{\basefield}{\degree}$.
To be more precise, it was shown that for every $\epsilon > 0$, the list-decoding count is constant (independent of $\blocklength$) in distance $\tau = \normalizedcodedistance{\basefield}{\degree} - \epsilon$.
Formally, they have shown the following theorem:
\begin{theorem}[List Decoding RM in $\field$]\label{list-decoding-RM-in-Fn}
    There exists a function $c(\basefield, \degree, \epsilon)$ such that the following holds:
    Let $\basefield$ be a finite field, let $\epsilon > 0$, and let $\degree, \blocklength \in \naturalnumbersset$.
    Then, we have:
    \[
        \listpolycount{\basefield}{\field}{\degree}{\normalizedcodedistance{\basefield}{\degree} - \epsilon}
        \leq c(\basefield, \degree, \epsilon)
    \]
\end{theorem}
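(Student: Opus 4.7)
The plan is to follow the higher-order Fourier-analytic approach of~\cite{bhowmick2014list} and reduce the list-decoding count to counting degree-$\leq \degree$ polynomials that are \emph{measurable} by a constant-sized, equidistributed collection. Fix a received word $\funcdef{\genfunc}{\field}{\basefield}$ and $\epsilon > 0$; the goal is to bound the number of $\genpoly \in \allpolyset{\leq \degree}{\field}{\basefield}$ with $\dist{\genpoly, \genfunc} \leq \normalizedcodedistance{\basefield}{\degree} - \epsilon$ independently of $\blocklength$. First I would construct from $\genfunc$ alone a constant-sized collection $\genpolyset = (\genpoly_1,\ldots,\genpoly_c)$ of polynomials of degree $\leq \degree$ that serves as a low-complexity proxy for $\genfunc$, in the sense that for every degree-$\leq \degree$ polynomial its distance to $\genfunc$ is approximated (up to $\epsilon/4$, say) by its distance to some $\genpolyset$-measurable function; this is essentially~\cite[Corollary 3.3]{bhowmick2014list}. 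I would then apply Lemma~\ref{regularization-in-Fn-lemma} with a rank function $\rankfunc$ large enough that the output collection $\genpolyset^\prime$ of size $c^\prime$ is equidistributed (Lemma~\ref{high-rank-implies-low-bias}) and supports faithful composition (Lemma~\ref{preserving-degree-starting-field}), while still semantically refining $\genpolyset$.

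Given a candidate $\genpoly$ in the list, I would run a second regularization on $\genpolyset^\prime \cup \set{\genpoly}$, choosing the first-round $\rankfunc$ so large that the ``moreover'' clause of Lemma~\ref{regularization-in-Fn-lemma} applies and the output can be taken of the form $\genpolyset^{\prime\prime} = \genpolyset^\prime \cup \set{\genpoly^{\prime\prime}_1,\ldots,\genpoly^{\prime\prime}_{c^{\prime\prime}}}$, of very high rank and hence jointly equidistributed. Then $\genpoly$ is measurable with respect to $\genpolyset^{\prime\prime}$ via some $\funcdef{\Phi}{\basefield^{c^\prime+c^{\prime\prime}}}{\basefield}$, and the Faithful Composition Lemma forces $\Phi$ to be a polynomial of weighted degree $\leq \degree$ in its inputs. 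At the same time, the proxy step together with $\factor^\prime \semrefine \factor$ yields a $\genpolyset^\prime$-measurable function $\funcdef{\tilde\Gamma}{\basefield^{c^\prime}}{\basefield}$ with $\dist{\genpoly, \tilde\Gamma(\genpoly^\prime_1,\ldots,\genpoly^\prime_{c^\prime})} \leq \normalizedcodedistance{\basefield}{\degree} - \epsilon/2$.

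The main obstacle---and the crux of the argument---will be to deduce from these two facts that $\Phi$ does not genuinely depend on its last $c^{\prime\prime}$ coordinates. My plan is to exploit the near-independence of $(\genpoly^{\prime\prime}_1(x),\ldots,\genpoly^{\prime\prime}_{c^{\prime\prime}}(x))$ from $(\genpoly^\prime_1(x),\ldots,\genpoly^\prime_{c^\prime}(x))$ implied by the high rank of $\genpolyset^{\prime\prime}$: conditioning on a typical $\genpolyset^\prime$-atom, the vector $(\genpoly^{\prime\prime}_1(x),\ldots,\genpoly^{\prime\prime}_{c^{\prime\prime}}(x))$ is approximately uniform over $\basefield^{c^{\prime\prime}}$. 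If $\Phi$ truly depended on those coordinates, then on a nontrivial set of atoms the slice $z \mapsto \Phi(y,z) - \tilde\Gamma(y)$ would be a nonzero polynomial of degree $\leq \degree$ in $z$, hence nonzero on at least a $\normalizedcodedistance{\basefield}{\degree}$ fraction of $z$; averaging over atoms gives $\dist{\genpoly, \tilde\Gamma(\genpoly^\prime_1,\ldots,\genpoly^\prime_{c^\prime})} \geq \normalizedcodedistance{\basefield}{\degree} - o(1)$, contradicting the closeness $\leq \normalizedcodedistance{\basefield}{\degree} - \epsilon/2$. Hence $\Phi$ depends only on its first $c^\prime$ inputs, so $\genpoly$ is $\genpolyset^\prime$-measurable; since $c^\prime$ depends only on $\basefield,\degree,\epsilon$ and there are at most $\abs{\basefield}^{\abs{\basefield}^{c^\prime}}$ such measurement functions, we obtain the claimed constant bound on $\listpolycount{\basefield}{\field}{\degree}{\normalizedcodedistance{\basefield}{\degree} - \epsilon}$.
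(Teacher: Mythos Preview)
This theorem is not proved in the paper; it is recalled from~\cite[Theorem~1]{bhowmick2014list}. Your sketch faithfully reconstructs the Bhowmick--Lovett argument and in fact matches, step for step, the skeleton the paper uses in its own proof of the $\variety$-generalization (Theorem~\ref{thm:list-decoding-RM-in-X}): low-complexity approximation via Lemma~\ref{every-function-can-be-approximated-by-a-few-functions}, a first regularization to obtain $\genpolyset^\prime$, a second regularization of $\genpolyset^\prime\cup\set{\genpoly}$ with the ``moreover'' clause so that the output is $\genpolyset^\prime\cup\set{\genpoly^{\prime\prime}_1,\ldots,\genpoly^{\prime\prime}_{c^{\prime\prime}}}$, faithful composition to control $\Phi$, and finally the conclusion that $\Phi$ does not depend on its last $c^{\prime\prime}$ inputs.

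The only soft spot is your justification of that last step. Slicing in $z$ gives, on each atom $y$ for which $\Phi(y,\cdot)$ is nonconstant, a disagreement fraction $\geq\normalizedcodedistance{\basefield}{\degree}$; but ``$\Phi$ depends on $z$'' does not force \emph{every} atom to have a nonconstant slice, so plain averaging over atoms does not yield $\geq\normalizedcodedistance{\basefield}{\degree}-o(1)$. The paper (following~\cite{bhowmick2014list}) handles this by passing to \emph{disjoint} variable copies via Lemma~\ref{preserving-degree-starting-field} and then invoking Lemma~\ref{lemma-schwarz-zippel-for-comparing-polynomial-to-function-with-less-variables}, whose proof is exactly the refinement your heuristic is missing: one fixes a $z$-variable that actually appears in $\Phi$, bounds the density of atoms on which its leading coefficient vanishes by $(\degree-j^\star)/\abs{\basefield}$ via Schwartz--Zippel, and on the remaining atoms the slice has degree exactly $j^\star$, giving disagreement $\geq 1-j^\star/\abs{\basefield}$; the two pieces combine to $\geq 1-\degree/\abs{\basefield}=\normalizedcodedistance{\basefield}{\degree}$. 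You can equally well run that argument directly in the atom space $\basefield^{c^\prime}\times\basefield^{c^{\prime\prime}}$ without the disjoint-variables detour, but the slice heuristic as you stated it does not by itself close the gap.
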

Additionally, we recall a lemma that was presented in~\cite[Corollary 3.3]{bhowmick2014list}, and was used in the analysis of the list decoding radius of Reed-Muller codes in $\field$:
\begin{lemma}[Low Complexity Approximation]~\cite[Corollary 3.3]{bhowmick2014list}\label{every-function-can-be-approximated-by-a-few-functions}
Let $\funcdef{\genfunc[2]}{A}{B}$, and let $\epsilon > 0$.
Let $\genfuncset \subseteq B^A$ be a collection of functions from $A$ to $B$.
Then there exists $c \leq 1/\epsilon^2$ functions $\genfunc_1,...,\genfunc_c \in \genfuncset$ such that
for every $\genfunc \in \genfuncset$, there is a function $\funcdef{\Gamma_{\genfunc}}{B^c}{B}$ such that:
\[
    \prex{x \in A}{\Gamma_{\genfunc}(\genfunc_1(x),...,\genfunc_c(x)) = \genfunc(x)}
    \geq \prex{x \in A}{\genfunc[2](x) = \genfunc(x)} - \epsilon
\]
\end{lemma}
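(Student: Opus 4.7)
The plan is an iterative greedy construction driven by an $L^2$ energy-increment argument. Initialize $\genfuncset^{\prime} = \emptyset$. At each step, given the current selection $\genfunc_1, \ldots, \genfunc_k$ generating a partition $\factor$ of $A$, define $\Gamma_{\genfunc}$ as the conditional mode of $\genfunc$ with respect to $\factor$, namely $\Gamma_{\genfunc}(v_1,\ldots,v_k) := \arg\max_{b \in B} \prex{x}{\genfunc(x) = b \given \genfunc_i(x) = v_i \text{ for all } i}$. This is the optimal choice among all maps $B^k \to B$ for maximizing $\prex{x}{\Gamma_{\genfunc}(\genfunc_1(x),\ldots,\genfunc_k(x)) = \genfunc(x)}$. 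If this quantity is at least $\prex{x}{\genfunc[2](x) = \genfunc(x)} - \epsilon$ for every $\genfunc \in \genfuncset$, we halt; otherwise a violator $\genfunc^* \in \genfuncset$ is added to $\genfuncset^{\prime}$ and the partition refines.

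To bound the number of iterations, I would use the potential
$$\Phi(\factor) \;:=\; \sum_{P \in \factor} \frac{|P|}{|A|} \sum_{b \in B} \prex{x \in P}{\genfunc[2](x) = b}^2,$$
which equals $\sum_b \|P_{\factor}\, \mathbbm{1}[\genfunc[2](x) = b]\|_2^2$ with $P_{\factor}$ denoting the $L^2$-projection onto $\factor$-measurable functions and $\|\cdot\|_2$ using the uniform measure on $A$. It lies in $[0,1]$, is non-decreasing under refinement by Jensen, and by Pythagoras the increment upon refining by $\genfunc^*$ equals $\sum_b \| P_{\factor^{\prime}} \mathbbm{1}[\genfunc[2] = b] - P_{\factor} \mathbbm{1}[\genfunc[2] = b] \|_2^2$.

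The core quantitative step is to show this increment is at least $\epsilon^2$ whenever $\genfunc^*$ is a violator, in four pieces. (i) The mode-agreement $\prex{x}{\Gamma_{\genfunc^*}(\genfunc_1,\ldots,\genfunc_k)(x) = \genfunc^*(x)}$ dominates the ``independence agreement'' $I(\factor) := \sum_{P,b} (|P|/|A|)\, \prex{x \in P}{\genfunc[2] = b}\, \prex{x \in P}{\genfunc^* = b}$ via the pointwise bound $\sum_b p_b q_b \leq \max_b q_b$. (ii) The violation hypothesis therefore gives $\prex{x}{\genfunc[2](x) = \genfunc^*(x)} - I(\factor) > \epsilon$. (iii) Expanding each inner product as projection plus residual (Pythagoras), this gap equals $\sum_b \langle P_{\factor^{\prime}}\mathbbm{1}[\genfunc[2]=b] - P_{\factor}\mathbbm{1}[\genfunc[2]=b],\, \mathbbm{1}[\genfunc^* = b]\rangle$, using that $\mathbbm{1}[\genfunc^* = b]$ is $\factor^{\prime}$-measurable. (iv) A single Cauchy--Schwarz in the variable $b$, together with $\sum_b \|\mathbbm{1}[\genfunc^* = b]\|_2^2 = 1$, yields $\epsilon^2 < \Phi(\factor^{\prime}) - \Phi(\factor)$. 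Since $\Phi$ is bounded by $1$, the procedure halts after at most $1/\epsilon^2$ iterations, giving the desired $c$.

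The main conceptual obstacle is identifying the right bridge between the combinatorial mode-agreement (an $L^1$-type quantity built from an $\arg\max$) and the $L^2$ energy $\Phi$; once found, the rest is a standard Pythagoras--Cauchy--Schwarz calculation of the kind used in regularity and density-increment arguments. The bridge is step (i), the elementary ``mode dominates independence'' inequality, which lets the problem be reformulated entirely in Hilbert space and then pushes the violation directly into an $\epsilon^2$ lower bound on the energy increment.
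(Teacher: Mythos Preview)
The paper does not give its own proof of this lemma; it is quoted verbatim as \cite[Corollary 3.3]{bhowmick2014list} and used as a black box. Your argument is correct and is essentially the standard energy-increment proof that appears in Bhowmick--Lovett: iteratively refine by violators, track the $L^2$ energy $\Phi(\factor)=\sum_b\|P_{\factor}\mathbbm{1}[\genfunc[2]=b]\|_2^2$, and show each refinement gains at least $\epsilon^2$. Your ``mode dominates independence'' bridge (step (i)) and the single Cauchy--Schwarz in $\bigoplus_b L^2(A)$ (step (iv)) are exactly the mechanism used there, so both the idea and the bound $c\le 1/\epsilon^2$ match.
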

The lemma shows that $\genfunc[2]$ can ``estimated'' by a only a few functions from $\genfuncset$.
Note that the estimation is close to $\genfunc[2]$ in compare to every $\genfunc \in \genfuncset$ and not necessarily close to $\genfunc[2]$ itself.

Finally, we present our main theorem, which shows that under assumptions on the subset $\variety \subseteq \field$, the list decoding radius of polynomials in $\variety$ will be similar to the list decoding radius in $\field$.
\newline
In more details (and informally), we show that if $\variety \subseteq \field$ is lift-enabler, has the limited-relative-rank-bias-property,
the list-decoding count is constant (independent of $\blocklength$) for every valid $\epsilon$ in distance $\tau = \normalizedcodedistance{\basefield}{\degree} - \epsilon$.
Note that not every $\epsilon > 0$ will be valid: the valid values of $\epsilon$ will depend on the limitations of the rank-bias property.
Formally, we show the following:
\begin{theorem}[List Decoding RM in $\variety$]\label{thm:list-decoding-RM-in-X}
    There exist functions $c_1(\basefield, \degree, \rankbiasfunc, \epsilonlimitedrankbias)$ and $c_2(\basefield, \degree, \rankbiasfunc, \epsilon)$ such that the following holds:
    Let $\basefield$ be a finite field, and let $\degree \in \naturalnumbersset$ be an integer that represents a degree.
    Let $\epsilonlimitedrankbias > 0$, and let $\funcdef{\rankbiasfunc}{[\epsilonlimitedrankbias, \infty]}{\naturalnumbersset}$ be a limited-relative-rank-bias function.
    \newline
    Let $\variety \subseteq \field$ be a set with the following properties
    \begin{enumerate}
        \item $\variety$ is $\degree$-lift-enabler with a lift operator $\lift{\square}$.
        \item $\variety$ has the $(\rankbiasfunc, \basefield, \degree, \epsilonlimitedrankbias)$-relative-rank-bias property.
    \end{enumerate}
    Then, for every $\epsilon \geq c_1(\basefield, \degree, \rankbiasfunc, \epsilonlimitedrankbias)$ it holds:
    \[
        \listpolycount{\basefield}{\variety}{\degree}{\normalizedcodedistanceex{\basefield}{\field}{\degree} - \epsilon} \leq
        c_2(\basefield, \degree, \rankbiasfunc, \epsilon)
    \]
\end{theorem}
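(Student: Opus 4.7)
The plan is to mirror the Bhowmick--Lovett analysis of the list decoding radius of $\reedmullercodeex{\basefield}{\field}{\degree}$ from~\cite{bhowmick2014list}, substituting the $\variety$-relative regularization process (Theorem~\ref{theorem:regularization-in-X}) for the classical one wherever it is used. Fix a received word $\funcdef{\onvarfunc}{\variety}{\basefield}$. First I would apply the low-complexity approximation lemma (Lemma~\ref{every-function-can-be-approximated-by-a-few-functions}) with $\genfuncset = \allpolyset{\leq \degree}{\variety}{\basefield}$ and $\genfunc[2] = \onvarfunc$, obtaining a constant-sized collection $\onvarpolyset[3] = (\onvarpoly[3]_1,\ldots,\onvarpoly[3]_c)$ with $c \leq O(1/\epsilon^2)$ such that for every $\onvarpoly \in \allpolyset{\leq \degree}{\variety}{\basefield}$ there is an $\onvarpolyset[3]$-measurable proxy whose agreement with $\onvarpoly$ on $\variety$ approximates the agreement between $\onvarpoly$ and $\onvarfunc$ to within $\epsilon/4$. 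This reduces the list-counting problem to counting degree-$\leq\degree$ polynomials in $\variety$ that are $(\normalizedcodedistance{\basefield}{\degree} - \epsilon/2)$-close to some function measurable by $\onvarpolyset[3]$.

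Next I would lift each $\onvarpoly[3]_i$ using the $\degree$-lift operator (which exists because $\variety$ is a $\degree$-lift-enabler) to obtain $\lift{\onvarpolyset[3]}$, and apply the $\variety$-relative regularization process to it with a rank function $\rankfunc$ chosen so that the output collection $\genpolyset[3]^\prime = (\genpoly[3]^\prime_1,\ldots,\genpoly[3]^\prime_{c^\prime})$ is simultaneously equidistributed in $\field$ via Theorem~\ref{high-rank-implies-low-bias} and equidistributed in $\variety$ via the $(\rankbiasfunc,\basefield,\degree,\epsilonlimitedrankbias)$-relative-rank-bias property. Analogously to the proof of Theorem~\ref{thm:distance-of-RM-in-X}, requiring $\epsilon \geq c_1(\basefield,\degree,\rankbiasfunc,\epsilonlimitedrankbias)$ for a suitable $c_1$ ensures that every bias tolerance we invoke remains $\geq \epsilonlimitedrankbias$, so the limited rank-bias property genuinely applies throughout.

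For the main step, fix any $\onvarpoly$ of degree $\leq \degree$ in $\variety$ with $\dist{\onvarpoly,\onvarfunc} \leq \normalizedcodedistance{\basefield}{\degree} - \epsilon$, and lift it to $\genpoly = \lift{\onvarpoly}$. Apply the $\variety$-relative regularization process to $\genpolyset[3]^\prime \cup \set{\genpoly}$, invoking the syntactic-refinement clause of Theorem~\ref{theorem:regularization-in-X} so that the output has the form $\genpolyset[3]^{\prime\prime} = \genpolyset[3]^\prime \cup \set{\genpoly[3]^{\prime\prime}_1,\ldots,\genpoly[3]^{\prime\prime}_{c^{\prime\prime}}}$. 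Since $\genpoly$ is trivially $\variety$-measurable with respect to $\genpolyset[3]^\prime \cup \set{\genpoly}$, it is $\variety$-measurable with respect to $\genpolyset[3]^{\prime\prime}$: there exist $\Phi$ and a valid $\variety$-remainder $\relativeremainder{\genpoly}$ (with $\restrictfunc{\relativeremainder{\genpoly}}{\variety} \equiv 0$ and $\deg(\genpoly - \relativeremainder{\genpoly}) \leq \degree$) satisfying $\genpoly(a) = \Phi(\genpoly[3]^\prime_1(a),\ldots,\genpoly[3]^\prime_{c^\prime}(a),\genpoly[3]^{\prime\prime}_1(a),\ldots,\genpoly[3]^{\prime\prime}_{c^{\prime\prime}}(a)) + \relativeremainder{\genpoly}(a)$. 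The validity of the remainder is the critical payoff of our framework: it forces $\genpoly - \relativeremainder{\genpoly}$ to have degree $\leq \degree$, and since $\genpolyset[3]^{\prime\prime}$ has high classical rank thanks to the embedded standard regularization inside Theorem~\ref{theorem:regularization-in-X}, the faithful composition lemma then shows that $\Phi$ is itself a low-degree polynomial in its inputs with the appropriate degree weights.

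Finally, using the equidistribution of $\genpolyset[3]^{\prime\prime}$ in $\variety$ together with the fact that $\onvarpoly = \restrictfunc{\genpoly}{\variety}$ is close to some $\genpolyset[3]^\prime$-measurable function in $\variety$, the argument of~\cite{bhowmick2014list} adapts to show that $\Phi$ cannot genuinely depend on its last $c^{\prime\prime}$ coordinates; otherwise, fixing the first $c^\prime$ coordinates and varying the remaining $c^{\prime\prime}$ uniformly (valid by equidistribution in $\variety$) would push $\Phi$ too far from every function of $\genpolyset[3]^\prime$ alone. Hence $\onvarpoly$ agrees on $\variety$ with a $\genpolyset[3]^\prime$-measurable function, so the number of candidate $\onvarpoly$ is bounded by $\abs{\basefield}^{\abs{\basefield}^{c^\prime}}$, a constant depending only on $\basefield$, $\degree$, $\rankbiasfunc$, and $\epsilon$. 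I expect the principal obstacle to be precisely this structural deduction for $\Phi$ in the presence of a remainder: without the validity of $\relativeremainder{\genpoly}$ guaranteed by Theorem~\ref{theorem:regularization-in-X}, the faithful composition lemma could not be applied to $\genpoly - \relativeremainder{\genpoly}$, and the concluding coordinate-independence step of the Bhowmick--Lovett argument would collapse.
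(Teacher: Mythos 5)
Your proposal follows essentially the same route as the paper's proof: low-complexity approximation in $\variety$, lifting, the $\variety$-relative regularization with a rank function strong enough to invoke both the classical and relative rank-bias properties (with $\epsilon \geq c_1$ keeping all tolerances above $\epsilonlimitedrankbias$), the syntactic-refinement clause to get $\genpolyset[3]^{\prime\prime} = \genpolyset[3]^\prime \cup \set{\genpoly[3]^{\prime\prime}_1,\ldots,\genpoly[3]^{\prime\prime}_{c^{\prime\prime}}}$, the valid $\variety$-remainder making $\genpoly - \relativeremainder{\genpoly}$ low degree so the composition structure of $\Phi$ can be exploited, and the count $\abs{\basefield}^{\abs{\basefield}^{c^\prime}}$. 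The only part you leave at the level of a gesture is the concluding coordinate-independence step, which the paper carries out concretely via the disjoint-variable (tensorized) functions $\vec{\genpoly^\prime},\vec{\onvarfunc}$, the degree-preservation lemma, the equidistribution comparison between $\variety$ and $\basefield^{\blocklength^\prime}$, and the Schwartz--Zippel variant of~\cite{bhowmick2014list}; this matches the intuition you describe.
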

\begin{proof}
    We follow the lines of the proof of~\cite[Theorem 1]{bhowmick2014list}.
    Let $\basefield$ be a finite field, and let $\degree \in \naturalnumbersset$ be an integer that represents a degree.
    Let $\epsilonlimitedrankbias > 0$, and let $\funcdef{\rankbiasfunc}{[\epsilonlimitedrankbias, \infty]}{\naturalnumbersset}$ be a limited-rank-relative-bias function.
    Let $c_1(\basefield, \degree, \rankbiasfunc, \epsilonlimitedrankbias)$ be a function we will specify later.
    Let $\variety \subseteq \field$ be a set with the properties defined above.
    \newline
    Finally, let $\epsilon \geq c_1(\basefield, \degree, \rankbiasfunc, \epsilonlimitedrankbias)$ for $c_1$ that we will specify later, and let $\funcdef{\onvarfunc}{\variety}{\basefield}$ be a received word.
    We wish to bound the amount of polynomials in $\allpolyset{\leq \degree}{\variety}{\basefield}$ that are $(\normalizedcodedistance{\basefield}{\degree} - \epsilon)$-close to $\onvarfunc$.
    \newline
    Apply Lemma~\ref{every-function-can-be-approximated-by-a-few-functions} with $A = \variety$, $B = \basefield$, $\genfunc[2] = \onvarfunc$, $\genfuncset = {\allpolyset{\leq \degree}{\variety}{\basefield}}$
    and approximation parameter $\epsilon / 2$ to obtain $\onvarpolyset[3] \subset \allpolyset{\leq \degree}{\variety}{\basefield}$, defined by $\onvarpolyset[3] = (\onvarpoly[3]_1,...,\onvarpoly[3]_c)$ where $c \leq 4/\epsilon^2$,
    such that for every $\onvarpoly \in \allpolyset{\leq \degree}{\variety}{\basefield}$ there is a function $\funcdef{\Gamma_{\onvarpoly}}{\basefield^c}{\basefield}$ that approximates $\onvarfunc$ in $\variety$ relative to $\allpolyset{\leq \degree}{\variety}{\basefield}$ i.e.:
    \[
        \forall \onvarpoly \in \allpolyset{\leq \degree}{\variety}{\basefield}: \prex{x \in \variety}{\Gamma_{\onvarpoly}(\onvarpoly[3]_1(x),...,\onvarpoly[3]_c(x)) = \onvarpoly(x)} \geq \prex{x \in \variety}{\onvarfunc(x) = \onvarpoly(x)} - \epsilon / 2
    \]
    \newline
    Let $\funcdef{\rankfunc_1, \rankfunc_2}{\naturalnumbersset}{\naturalnumbersset}$ be two non-decreasing functions that represents rank that we will specify later.
    For $\rankfunc_1$, we will require that for all $m \geq 1$:
    \[
        \rankfunc_1(m) \geq \max { \set{
        \topbot{
            {\rankfunc_2(C_{\rankfunc_2, \degree}^{\ref{theorem:regularization-in-X}}(m + 1)) + C_{\rankfunc_2, \degree}^{\ref{theorem:regularization-in-X}}(m + 1) + 1,}}
            {\rankfunc_2(C_{\rankfunc_{\ref{preserving-degree-starting-field}}, \degree}^{\ref{theorem:regularization-in-X}}(m + 1))
            + C_{\rankfunc_{\ref{preserving-degree-starting-field}}, \degree}^{\ref{theorem:regularization-in-X}}(m + 1) + 1}
        }}
    \]
    Note that in the expression above, we denote $\funcdef{\rankfunc_{\ref{preserving-degree-starting-field}}}{\naturalnumbersset}{\naturalnumbersset}$,
    as follows: $\rankfunc_{\ref{preserving-degree-starting-field}}(c) \definedas \rankfunc_{\ref{preserving-degree-starting-field}}(\basefield, \degree, c)$.
    \newline
    The reason we chose this $\rankfunc_1$, is that by our choice of $\rankfunc_1$ we can use the second part of Lemma~\ref{theorem:regularization-in-X}.
    Specifically, if we start with $\rankfunc_1$-$\variety$-regular factor and we $\rankfunc_2$-$\variety$-regularize it,
    we get that the $\rankfunc_2$-$\variety$-regular factor that we received is a syntactic refinement of the $\rankfunc_1$-$\variety$-regular factor we started with.
    \newline
    As a first step, we lift the polynomial factor to get $\genpolyset[3] \definedas \lift{\onvarpolyset[3]}$.
    Note that because $\forall x \in \variety: \lift{\onvarpoly[3]_i}(x) = \onvarpoly[3]_i(x)$, for all $\onvarpoly \in F$ we have:
    \[
        \prex{x \in \variety}{\Gamma_{\onvarpoly}(\lift{\onvarpoly[3]_1}(x),...,\lift{\onvarpoly[3]_c}(x)) = \onvarpoly(x)} \geq \prex{x \in \variety}{\onvarfunc(x) = \onvarpoly(x)} - \epsilon / 2
    \]
    Next, we $\rankfunc_1$-$\variety$-regularize the factor $\genpolyset[3]$ by Theorem~\ref{theorem:regularization-in-X}.
    This gives us a $\rankfunc_1$-$\variety$-regular factor $\factor^\prime$, which is defined by a set of polynomials $\genpolyset[3]^\prime \definedas (\genpoly[3]^\prime_1,...,\genpoly[3]^\prime_{c^\prime})$ of degree $\leq \degree$
    such that $\factor^\prime \relsemrefine{\variety} \factor$,
    with $\relrank{\variety}{\genpolyset[3]^\prime} \geq \rankfunc(c^\prime)$
    and with bounded amount of polynomials defining it i.e. $c^\prime \leq C_{\rankfunc_1, \degree}^{\ref{theorem:regularization-in-X}}(c)$.
    We apply Corollary~\ref{relative-semantic-refinement-is-restricted-semantic-refinement}
    and get that $\factor^{\prime} \semrefineex{\variety} \factor$.
    We then use the fact that $\Gamma_{\onvarpoly}(\lift{\onvarpoly[3]_1}(x),...,\lift{\onvarpoly[3]_c}(x))$ is measurable in respect of $\genpolyset$ \emph{in $\variety$},
    and deduce we have a similar approximation of $\onvarpoly$ using $\genpolyset^\prime$ as the approximation of $\onvarpoly$ using $\genpolyset$.
    Formally, there exists a function $\funcdef{\Gamma_{\onvarpoly}^\prime}{\basefield^{c^\prime}}{\basefield}$ such that:
    \[
        \prex{x \in \variety}
        {\Gamma^{\prime}_{\onvarpoly}(\genpoly[3]^{\prime}_1(x)),...,\genpoly[3]^{\prime}_{c^\prime}(x)) = \onvarpoly(x)} \geq
            \prex{x \in \variety}{\onvarfunc(x) = \onvarpoly(x)} - \epsilon / 2
    \]
    Now we recall that we wished to bound the amount of polynomials $\onvarpoly \in \allpolyset{\leq \degree}{\variety}{\basefield}$ such that
    $\prex{x \in \variety}{\onvarfunc(x) \neq \onvarpoly(x)} < \normalizedcodedistance{\basefield}{\degree} - \epsilon$.
    Let $\onvarpoly \in \allpolyset{\leq \degree}{\variety}{\basefield}$ be a polynomial as we just described.
    We will show that such $\onvarpoly$ is measurable with respect to $\genpolyset[3]^\prime$ in $\variety$.
    This will upper bound the amount of possible polynomials $\onvarpoly$ by the amount of possible different $\funcdef{\Gamma^{\prime}_{\onvarpoly}}{\basefield^{c^\prime}}{\basefield}$,
    which is $\abs{\basefield}^{\norm{\factor^\prime}} = \basefieldsize^{(\basefieldsize^{c^\prime})}$, and thus $c_2(\basefield, \degree, \rankbiasfunc, \epsilon) \leq \basefieldsize^{(\basefieldsize^{c^\prime})}$.
    \newline
    By our choice of $c^\prime$ we have that $c^\prime \leq C_{\rankfunc_1, \degree}^{\ref{theorem:regularization-in-X}}(4/\epsilon^2)$, and thus $c_2$ is bounded by a function of $(\basefield, \degree, \rankfunc_1, \epsilon)$.
    Note that we have not yet specified the value of $\rankfunc_1$, because it is determined by the choice of $\rankfunc_2$ that we will later define its exact values.
    The important thing about our future choice of $\rankfunc_2$ is that the value of $\rankfunc_2$ must be independent of $\blocklength$,
    but can depend on $(\basefield, \degree, \rankbiasfunc, \epsilon)$.
    This will conclude the proof.
    \newline
    Now, consider a lift of $\onvarpoly$, i.e $\genpoly \definedas \lift{\onvarpoly}$.
    Note that by the definition of lift $\forall x \in \variety: \genpoly(x) = \onvarpoly(x)$.
    We will show that $\genpoly$ is measurable in respect of $\genpolyset[3]^\prime$ in $\variety$.
    \newline
    We consider the factor $\factor_{\genpoly}$ that is generated by $\genpolyset[3]_{\genpoly} \definedas \genpolyset[3]^\prime \cup \set{\genpoly}$.
    By using Theorem~\ref{theorem:regularization-in-X}, we can $\rankfunc_2$-$\variety$-regularize it and get the polynomial factor $\factor^{\prime\prime}$ that relative-refines $\factor_{\genpoly}$.
    We denote the set of polynomials in the factor as $\genpolyset[3]^{\prime\prime}$.
    \newline
    Next, notice that the factor $\factor^{\prime\prime}$ is a $\rankfunc_2$-regular factor, therefore by our choice of $\rankfunc_1$ and the second part of Theorem~\ref{theorem:regularization-in-X},
    we in fact have $\factor^{\prime\prime} \synrefine \factor^{\prime}$.
    This is true because by our choice of $\rankfunc_1$:
    \[
        \relrank{\variety}{\genpolyset[3]^\prime} \geq
        \rankfunc_1(c^\prime) \geq
        \rankfunc_2(C_{\rankfunc_2, \degree}^{\ref{theorem:regularization-in-X}}(c^\prime + 1)) + C_{\rankfunc_2, \degree}^{\ref{theorem:regularization-in-X}}(c^\prime + 1) + 1 \geq
        \rankfunc_2(\abs{\factor^{\prime\prime}}) + \abs{\factor^{\prime\prime}}+1
    \]
    And as rank is always bigger than relative rank, we also have:
    \[
        \rank{\genpolyset[3]^\prime} \geq
        \rankfunc_1(c^\prime) \geq
        \rankfunc_2(C_{\rankfunc_{\ref{preserving-degree-starting-field}}, \degree}^{\ref{theorem:regularization-in-X}}(c^\prime + 1))
        + C_{\rankfunc_{\ref{preserving-degree-starting-field}}, \degree}^{\ref{theorem:regularization-in-X}}(c^\prime + 1) + 1
    \]
    Thus, the polynomials defining $\factor^{\prime\prime}$ are in the form $\genpolyset[3]^{\prime\prime} \definedas \genpolyset[3]^{\prime} \cup \set{\genpoly[3]^{\prime\prime}_1,...,\genpoly[3]^{\prime\prime}_{c^{\prime\prime}}}$.
    Note that as promised in Theorem~{\ref{theorem:regularization-in-X}}, we have $\abs{\genpolyset[3]^{\prime\prime}} = c^\prime+c^{\prime\prime} \leq C^{\ref{theorem:regularization-in-X}_{\rankfunc_2, \degree}}(c^\prime)$.
    \newline
    Additionally, by the way we built $\genpolyset[3]_{\genpoly}$, the function $\genpoly$ is measurable in respect of it.
    Therefore, as $\factor^{\prime\prime} \relsemrefine{\variety} \factor_{\genpoly}$, we have that $\genpoly$ is $\genpolyset[3]^{\prime\prime}$-measurable relative to $\variety$.
    In other words, there exists $\funcdef{\Phi}{\basefield^{c^\prime + c^{\prime\prime}}}{\basefield}$
    and $\funcdef{\relativeremainder{\genpoly}}{\field}{\basefield}$ with $\deg(\relativeremainder{\genpoly}), \deg(\genpoly - \relativeremainder{\genpoly}) \leq \deg(\genpoly) \leq \degree$ and $\restrictfunc{\relativeremainder{\genpoly}}{\variety} \equiv 0$ such that:
    \[
        \forall a \in \field: \genpoly(a) = \Phi(\genpoly[3]^\prime_1(a),...,\genpoly[3]^\prime_{c^\prime}(a), \genpoly[3]^{\prime\prime}_1(a),...,\genpoly[3]^{\prime\prime}_{c^{\prime\prime}}(a))) + \relativeremainder{\genpoly}(a)
    \]
    And specifically in $\variety$ we have:
    \[
        \forall x \in \variety: \genpoly(x) = \Phi(\genpoly[3]^\prime_1(x),...,\genpoly[3]^\prime_{c^\prime}(x), \genpoly[3]^{\prime\prime}_1(x),...,\genpoly[3]^{\prime\prime}_{c^{\prime\prime}}(x)))
    \]
    Denote $\genpoly^{\prime} \definedas \genpoly - \relativeremainder{\genpoly}$.
    We will show the polynomial $\genpoly^{\prime}$ does not depend on its last $c^{\prime\prime}$ variables, and thus $\Phi$ does not depend on its last $c^{\prime\prime}$ variables.
    This will imply that $\genpoly$ is measurable in respect of $\genpolyset[3]^\prime$ in $\variety$, which will conclude the proof.
    \newline
    Now, we choose $\rankfunc_2$ to be such that:
    \[
        \rankfunc_2(m) \geq \max \set {
            \rankbiasfunc \parens{\dfrac{\epsilon / 4}{\abs{\basefield}^m}},
            \rankval_{\ref{high-rank-implies-low-bias}}\parens{\dfrac{\epsilon / 4}{\abs{\basefield}^m}},
            \rankval_{\ref{preserving-degree-starting-field}}(m)
        }
    \]
    Note that in the expression above we are discussing fixed field and degree, i.e. $\basefield, \degree$.
    Therefore we denote $\funcdef{\rankfunc_{\ref{preserving-degree-starting-field}}}{\naturalnumbersset}{\naturalnumbersset}$
    as $\rankfunc_{\ref{preserving-degree-starting-field}}(c) \definedas \rankfunc_{\ref{preserving-degree-starting-field}}(\basefield, \degree, c)$
    and $\funcdef{\rankfunc_{\ref{high-rank-implies-low-bias}}}{\naturalnumbersset}{\naturalnumbersset}$
    as $\rankfunc_{\ref{high-rank-implies-low-bias}}(\epsilon) \definedas \rankfunc_{\ref{high-rank-implies-low-bias}}(\basefield, \degree, \epsilon)$.
    \newline
    Next, we show that even if we change the polynomials in the factor to have a disjoint set of inputs in $\field$,
    we still obtain a polynomial in the same degree, which have an approximation close to the approximation we had in $\variety$.
    Note that after this step, the proof becomes very similar to the proof of list decoding Reed Muller in $\field$~\cite[Theorem 1]{bhowmick2014list}:
    we omit the dependence of $\variety$ and get the same approximation by functions of multiple variables,
    as we had in $\field$.
    This is done by the following lemma:
    \begin{lemma}
        Let $\set{a^{i}, b^{j}}, i \in [c^\prime], j \in [c^{\prime\prime}]$  be pairwise disjoint sets of $\blocklength$ variables each.
        Let $\blocklength^{\prime} \definedas \blocklength(c^\prime + c^{\prime\prime})$.
        Let $\funcdef{\vec{\genpoly}}{\basefield^{\blocklength^\prime}}{\basefield}$ and $\funcdef{\vec{\onvarfunc}}{\basefield^{\blocklength^{\prime}}}{\basefield}$
        be functions of $\blocklength^\prime$ variables defined as follows:
        \[
            \vec{\genpoly^{\prime}}(\vec{a}) \definedas
                \Phi \parens{\genpoly[3]^\prime_1(a^1),...,\genpoly[3]^\prime_{c^\prime}(a^{c^\prime}), \genpoly[3]^{\prime\prime}_1(b^{1}),...,\genpoly[3]^{\prime\prime}_{c^{\prime\prime}}(b^{c^{\prime\prime}})}
        \]
        and:
        \[
            \vec{\onvarfunc}(\vec{a}) \definedas \Gamma^{\prime}_{\onvarpoly}(\genpoly[3]^{\prime}_1(a^{1})),...,\genpoly[3]^{\prime}_{c^\prime}(a^{c^\prime}))
        \]
        Note that $\vec{\onvarfunc}$ is a function that receives $\blocklength^\prime$ variables, and ignores its last $c^{\prime\prime}$ variables.
        \newline
        Then:
        \begin{enumerate}
            \item The degree of $\vec{\genpoly^\prime}$ remains bounded, i.e. $\deg(\vec{\genpoly^\prime}) \leq \degree$.
            \item The approximation of $\vec{\onvarfunc}$ to $\vec{\genpoly^\prime}$ in $\basefield^{n^\prime}$ is close to the approximation of $\Gamma^{\prime}_{\onvarpoly}$ to $\onvarpoly$ in $\variety$.
            Specifically, we show:
            \[
                \abs{
                    \prex{\vec{a} \in \basefield^{\blocklength^\prime}}{\vec{\onvarfunc}(\vec{a}) = \vec{\genpoly^\prime}(\vec{a})} -
                    \prex{x \in \variety}{\Gamma^{\prime}_{\onvarpoly}(\genpoly[3]^{\prime}_1(x)),...,\genpoly[3]^{\prime}_{c^\prime}(x)) =\onvarpoly(x)}
                }
                \leq \epsilon/4
            \]
        \end{enumerate}
        \begin{proof}
            We start by proving the first part of the lemma: bounding the degree of $\vec{\genpoly^\prime}$ by $\degree$.
            First, we recall that $\genpoly^{\prime} = \genpoly - \relativeremainder{\genpoly}$ where $\relativeremainder{\genpoly}$ is a valid remainder.
            Specifically, we have $\deg(\genpoly^{\prime}) = \deg(\genpoly - \relativeremainder{\genpoly}) \leq \deg(\genpoly) \leq \degree$.
            In addition, by the way we built $\Phi$ we have:
            \[
                \forall a \in \field: \genpoly^\prime(a) = \Phi(\genpoly[3]^\prime_1(a),...,\genpoly[3]^\prime_{c^\prime}(a), \genpoly[3]^{\prime\prime}_1(a),...,\genpoly[3]^{\prime\prime}_{c^{\prime\prime}}(a)))
            \]
            Thus the function above is of degree $\leq \degree$.
            Moreover, we have:
            \[
                \rank{\genpolyset[3]^{\prime\prime}} \geq
                \relrank{\variety}{\genpolyset[3]^{\prime\prime}} \geq
                \rankfunc_2(\abs{\genpolyset[3]^{\prime\prime}})\geq
                \rankfunc_{\ref{preserving-degree-starting-field}}(\abs{\genpolyset[3]^{\prime\prime}})
            \]
            Therefore we can use Lemma~\ref{preserving-degree-starting-field} to get that $\deg(\vec{\genpoly^{\prime}}) \leq \deg(\genpoly^\prime) \leq \degree$.
            Note that in order to use the lemma formally,
            we had to extend the input space of $\genpoly^{\prime}$ to be of $\blocklength^{\prime}$ variables (and make it depend only on the first $\blocklength$ variables as it used to).
            Because lemma~\ref{preserving-degree-starting-field} require bounds independent of $\blocklength$, this is done smoothly.
            \newline
            Now we move to the second part of the lemma: bounding the approximation of $\vec{\onvarfunc}$ to $\vec{\genpoly^\prime}$.
            Denote $S \definedas \basefield^{c^\prime + c^{\prime\prime}}$, and for each $s \in S$ denote:
            \[
                p_1(s) \definedas \prex{x \in \variety}{
                \parens{\genpoly[3]^\prime_1(x),...,\genpoly[3]^\prime_{c^\prime}(x), \genpoly[3]^{\prime\prime}_1(x),...,\genpoly[3]^{\prime\prime}_{c^{\prime\prime}}(x)} = s}
            \]
            and as of our choice of $\rankfunc_2$, we have $\rank{\genpolyset[3]^{\prime\prime}} \geq \rankbiasfunc(\epsilon/8\abs{S})$.
            Therefore, if we require that the relative rank-bias relation holds for $\epsilon / 8\abs{S}$, we can use Lemma~\ref{every-linear-combination-has-low-bias-implies-equidistribution} with $A = \variety$
            to get that $p_1$ is $(\epsilon /8\abs{S})$-almost uniform, i.e:
            \[
                p_1(s) = \dfrac{1 \pm \epsilon / 8}{\abs{S}}
            \]
            We show that this can be done in the following claim by choosing a proper $c_1$:
            \begin{claim}
                One can choose $c_1 \definedas c_1(\basefield, \degree, \rankbiasfunc, \epsilonlimitedrankbias)$ such that if $\epsilon \geq c_1$
                we have that $\epsilon / 8\abs{S} \geq c_1$.
            \end{claim}
            \begin{proof}
                This is done by using the bound we already know.
                We need that:
                \[
                    \epsilonlimitedrankbias \leq \dfrac{\epsilon}{8 \abs{\basefield}^{c^\prime + c^{\prime\prime}}}
                \]
                As $c^\prime + c^{\prime\prime} \leq C^{\ref{theorem:regularization-in-X}}_{\rankfunc_2, \degree}(c^\prime)$,
                for the term above to hold it is enough that the following will be true:
                \[
                    \epsilon \geq \epsilonlimitedrankbias \cdot 8 \abs{\basefield}^{C^{\ref{theorem:regularization-in-X}}_{\rankfunc_2, \degree}(c^\prime)}
                \]
                and as $\rankfunc_2, c^\prime$ and thus also $C^{\ref{theorem:regularization-in-X}}_{\rankfunc_2, \degree}(c^\prime)$ are independent of $\blocklength$,
                we can pick $c_1 = c_1(\basefield, \degree, \rankbiasfunc, \epsilonlimitedrankbias)$ and get what we aimed for.
            \end{proof}
            Thus, we can assume that $p_1$ is $(\epsilon/8\abs{S})$-almost uniform.
            Now, let:
            \[
                p_2(s) \definedas \prex{\vec{a} \in \basefield^{\blocklength^\prime}}
                {\parens{\genpoly[3]^\prime_1(a^1),...,\genpoly[3]^\prime_{c^\prime}(a^{c^\prime}), \genpoly[3]^{\prime\prime}_1(b^{1}),...,\genpoly[3]^{\prime\prime}_{c^{\prime\prime}}(b^{c^{\prime\prime}})} = s}
            \]
            Note that the rank of $\vec{\genpolyset[3]}^{\prime\prime} = {\set{\genpoly[3]^\prime_1(a^1),...,\genpoly[3]^\prime_{c^\prime}(a^{c^\prime}), \genpoly[3]^{\prime\prime}_1(b^{1}),...,\genpoly[3]^{\prime\prime}_{c^{\prime\prime}}(b^{c^{\prime\prime}})}}$,
            as a factor defined over $\basefield^{\blocklength^\prime}$, can not be lower than the rank of $\genpolyset[3]^{\prime\prime}$
            and thus we have $\rank{\vec{\genpolyset[3]}^{\prime\prime}} \geq \rankval_{\ref{high-rank-implies-low-bias}}\parens{\dfrac{\epsilon / 8}{\abs{\basefield}^m}}$.
            By using lemma~{\ref{high-rank-implies-low-bias}}, which shows the rank-bias relation for $\basefield^{\blocklength^\prime}$,
            we can similarly use Lemma~\ref{every-linear-combination-has-low-bias-implies-equidistribution} with $A = \basefield^{\blocklength^\prime}$
            to get that $p_2$ is also $(\epsilon/8\abs{S})$-almost-uniform, i.e:
            \[
                p_2(s) = \dfrac{1 \pm \epsilon / 8}{\abs{S}}
            \]
            Now, we show the approximations are the same.
            Denote by $s^\prime$ the restriction of $s$ to its first $c^\prime$ coordinates, and consider the approximation:
            \begin{flalign*}
                \prex{\vec{a} \in \basefield^{\blocklength^\prime}}{\vec{\onvarfunc}(\vec{a}) = \vec{\genpoly}^\prime (\vec{a})} = \\
                &=\sum_{s \in S} {p_2(s) \cdot \existfunc{\Phi(s) = \Gamma_{\genpoly}^\prime (s^\prime)}} \\
                &=\sum_{s \in S} {p_1(s) \cdot \existfunc{\Phi(s) = \Gamma_{\genpoly}^\prime (s^\prime)}} \pm \epsilon / 4 \\
                &=\prex{x \in \variety}{\Gamma^{\prime}_{\onvarpoly}(\genpoly[3]^{\prime}_1(x)),...,\genpoly[3]^{\prime}_{c^\prime}(x)) =\onvarpoly(x)} \pm \epsilon/4
            \end{flalign*}
            This completes the proof the lemma.
        \end{proof}

        The proof is followed by the same methods used in~\cite{bhowmick2014list}.
        We repeat if for completeness.
        We next restate a lemma proved in~\cite[Claim 4.2]{bhowmick2014list}, which is a varaiant of the Schwartz-Zippel lemma~\cite{10.1145/322217.322225,Zippel1979ProbabilisticAF}:
        \begin{lemma}\label{lemma-schwarz-zippel-for-comparing-polynomial-to-function-with-less-variables}
            Let $\degree$, $\blocklength_1$, $\blocklength_2 \in \naturalnumbersset$ be integers.
            Let $\genpoly_1 \in \allpolyset{\leq \degree}{\basefield^{\blocklength_1 + \blocklength_2}}{\basefield}$,
            and let $\funcdef{\genfunc_1}{\basefield^{\blocklength_1}}{\basefield}$ be a function.
            Assume the polynomial is $\normalizedcodedistance{\basefield}{\degree}$-close to the function, i.e:
            \[
                \prex{x_1,...,x_{\blocklength_1+\blocklength_2} \in \basefield}
                        {\genpoly_1(x_1,...,x_{\blocklength_1+\blocklength_2}) = \genfunc_1(x_1,...,x_n)} > 1 - \normalizedcodedistance{\basefield}{\degree}
            \]
            Then, $\genpoly_1$ does not depend on $x_{\blocklength_1 + 1},...,x_{\blocklength_1 + \blocklength_2}$.
        \end{lemma}
        Now, apply Lemma~\ref{lemma-schwarz-zippel-for-comparing-polynomial-to-function-with-less-variables} to
        $\genpoly_1 = \vec{\genpoly^{\prime}}$, $\genfunc_1 = \vec{\onvarfunc}$, $\blocklength_1 = \blocklength c^\prime$, $\blocklength_2 = \blocklength c^{\prime\prime}$.
        We obtain that $\vec{\genpoly^{\prime}}$ does not depend on its last $c^{\prime\prime}$ variables, and thus by denoting $C_{i} \definedas \genpoly[3]^{\prime\prime}_i(0)$ for $i \in \sparens{c^{\prime\prime}}$ we have:
        \[
            \vec{\genpoly^{\prime}}(\vec{a}) = \Phi \parens{\genpoly[3]^\prime_1(a^1),...,\genpoly[3]^\prime_{c^\prime}(a^{c^\prime}), C_1,...,C_{c^{\prime\prime}}}
        \]
        Now, for every $a \in \field$, if we substitute $a$ in the $i$-th component of $\vec{a}$ for every $i \in \sparens{c^{\prime}}$ in the equation above, we get the following is true:
        \[
            \genpoly^{\prime}(a) = \Phi \parens{\genpoly[3]^\prime_1(a),...,\genpoly[3]^\prime_{c^\prime}(a), C_1,...,C_{c^{\prime\prime}}}
        \]
        Hence $\genpoly^{\prime}$ does not depend on its last $c^{\prime\prime}$ variables.
        As explained earlier, this implies that $\genpoly$ is measurable in respect of $\genpolyset[3]^{\prime}$ in $\variety$.
        This completes the proof of the theorem.
    \end{lemma}

\end{proof}

    \begin{appendices}
        \section[Equidistribution of Functions]{Equidistribution of Functions}\label{sec:equidistribution-of-functions}
Assume we have a collection of functions $(\genfunc_1,...,\genfunc_c$), where $\funcdef{\genfunc_i}{A}{\basefield}$ for some finite set $A$.
We are interested in showing that the functions are equidistributed, which means that their values behave close to independent random variables.
We begin by formulating this definition:
\begin{definition}[Equidistribution of Functions]
    Given $\epsilon > 0$ and $A \subseteq \field$,
    we say a collection of functions $\genfuncset = (\genfunc_1,...,\genfunc_c)$ where $\funcdef{\genfunc_i}{\genset}{\basefield}$ is $\epsilon$-equidistributed in $A$ if for all $\vec{\alpha} = (\alpha_1,...,\alpha_c) \in \basefield^c$ we have:
    \[
        \prex{x \in A}{(\genfunc_1(x),...,\genfunc_c(x)) = \vec{\alpha}} = \frac{1}{\abs{\basefield}^c} \pm \epsilon
    \]
\end{definition}

The following is a standard lemma that shows that if every linear combination of a collection of functions has low bias, the collection is equidistributed.
We repeat the steps of the proof of \cite[Lemma 7.24]{book}, but here, we think of $A$ as any finite set (and not particularly $\field$):
\begin{lemma}\label{every-linear-combination-has-low-bias-implies-equidistribution}
Let $\epsilon > 0$, and let $A$ be a finite set.
Let $\genfuncset = (\genfunc_1,...,\genfunc_c)$ be a collection of functions defined over $A$, i.e. $\funcdef{\genfunc_i}{A}{\basefield}$.
Assume each linear combination of the collection has low bias, i.e for each $\lambda = (\lambda_1,...,\lambda_c) \in \basefield^c$ such that $\lambda \neq \vec{0}$ we have:
\[
    \relbias{x \in A}{\sum_{i=1}^{c}{\lambda_i \genfunc_i}} < \epsilon
\]
Then, the collection $\genfuncset$ is $\epsilon$-equidistributed over $A$.
\newline
In particular, for $\epsilon < \frac{1}{\abs{\basefield}^c}$, the lemma shows that each atom of $\genfuncset$ is not empty i.e for all $\vec{\alpha}$ there is some $x \in A$ such that $(\genfunc_1(x),...,\genfunc_c(x)) = \vec{\alpha}$.
\end{lemma}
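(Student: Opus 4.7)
The plan is to follow the standard Fourier-analytic proof of this type of equidistribution result, noting that it adapts effortlessly to an arbitrary finite set $A$ because the Fourier inversion takes place on $\basefield^c$ and is insensitive to the domain over which we are averaging.

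The first step is to fix $\vec{\alpha} \in \basefield^c$ and expand the indicator function $\mathbbm{1}[(\genfunc_1(x),\dots,\genfunc_c(x)) = \vec{\alpha}]$ via character orthogonality on $\basefield^c$, namely
\[
    \mathbbm{1}\sparens{v = \vec{\alpha}} = \frac{1}{\abs{\basefield}^c} \sum_{\lambda \in \basefield^c} \charfunc{\lambda \cdot (v - \vec{\alpha})}.
\]
Substituting $v = (\genfunc_1(x),\dots,\genfunc_c(x))$, averaging over $x \in A$, and swapping the order of summation recasts the target probability as
\[
    \prex{x \in A}{(\genfunc_1(x),\dots,\genfunc_c(x)) = \vec{\alpha}} = \frac{1}{\abs{\basefield}^c} \sum_{\lambda \in \basefield^c} \charfunc{-\lambda \cdot \vec{\alpha}} \cdot \relbias{A}{\sum_{i=1}^{c} \lambda_i \genfunc_i},
\]
where the inner expectation over $x \in A$ is exactly the relative bias appearing in the hypothesis.

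The second step is to isolate the $\lambda = \vec{0}$ term. Its contribution is precisely $\frac{1}{\abs{\basefield}^c}$, matching the uniform target. Every remaining $\lambda$ corresponds to a nontrivial linear combination of the $\genfunc_i$'s, whose bias relative to $A$ has magnitude strictly below $\epsilon$ by hypothesis; meanwhile, each character factor $\charfunc{-\lambda \cdot \vec{\alpha}}$ has modulus one. The triangle inequality over the remaining $\abs{\basefield}^c - 1$ terms then bounds the total deviation by $\frac{\abs{\basefield}^c - 1}{\abs{\basefield}^c}\,\epsilon < \epsilon$, which is exactly the claimed $\epsilon$-equidistribution. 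The "in particular" statement follows immediately: when $\epsilon < 1/\abs{\basefield}^c$, every atom has strictly positive probability and is therefore nonempty.

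There is no real obstacle — the proof is textbook Fourier inversion followed by the triangle inequality, and nothing about it uses any structure on $A$ beyond the fact that the uniform probability measure on $A$ is well-defined. The only point worth flagging is a bookkeeping one: the Fourier expansion is done on $\basefield^c$ (the codomain side), not on $A$, which is precisely why the same identity works uniformly for $A = \field$, $A = \variety$, or any other finite ground set.
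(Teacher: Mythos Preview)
Your proof is correct and is essentially the same argument as the paper's: both apply character orthogonality on $\basefield^c$ to expand the indicator $\mathbbm{1}[(\genfunc_1(x),\dots,\genfunc_c(x))=\vec\alpha]$, average over $x\in A$, isolate the $\lambda=\vec 0$ term as $1/\abs{\basefield}^c$, and bound the remaining $\abs{\basefield}^c-1$ terms by the bias hypothesis via the triangle inequality. The only cosmetic difference is that the paper writes the expansion as a product of one-dimensional sums $\prod_i \frac{1}{\basefieldsize}\sum_{\lambda_i}\charfunc{\lambda_i(\genfunc_i(x)-\alpha_i)}$ and then uses multiplicativity of $\charfunc{\cdot}$ to collapse it into a single sum over $\basefield^c$, whereas you invoke orthogonality on $\basefield^c$ in one stroke.
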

\begin{proof}
    We wish to show that for each $\vec{\alpha} \in \basefield^c$ we have:
    \[
        \prex{x \in A}{(\genfunc_1(x),...,\genfunc_c(x)) = \vec{\alpha}} = \frac{1}{\abs{\basefield}} \pm \epsilon
    \]
    We express the fraction of inputs that are in the atom $\vec{\alpha}$ the following way:
    \[
        \prex{x \in A}{(\genfunc_1(x),...,\genfunc_c(x))} =
        \expectation{x \in A}{\prod_{i=1}^c {1_{[\genfunc_i(x) = \alpha_i]}}}
    \]
    We use the fact that for every $0 \neq x \in \basefield$, we have $\sum_{\lambda = 0}^{\basefieldsize - 1} \charfunc{\lambda x} = 0$,
    and if $x = 0$ we have $\sum_{\lambda = 0}^{\basefieldsize - 1} \charfunc{\lambda x} = \basefieldsize$.
    Therefore, the expression above equals:
    \[
        =\expectation{x \in A}{\prod_{i=1}^c \parens {{\frac{1}{\basefieldsize} \cdot \sum_{\lambda_i = 0}^{\basefieldsize - 1} {\charfunc{\lambda_i (\genfunc_i(x) - \alpha_i)}}}}} =\\
        \frac{1}{\basefieldsize^c} \cdot {\expectation{x \in A}{\prod_{i=1}^c \sum_{\lambda_i = 0}^{\basefieldsize - 1} {\charfunc{\lambda_i (\genfunc_i(x) - \alpha_i)}}}}
    \]
    By the definition of character functions, we have that $\charfunc{a+b} = \charfunc{a} \cdot \charfunc{b}$, and therefore the expression above equals:
    \[
        \frac{1}{\basefieldsize^c} \cdot \sum_{(\lambda_1,...,\lambda_c) \in \prod_{i=1}^c [0, \basefieldsize - 1]} \parens{\expectation{x \in A}{\charfunc{\sum_{i = 0}^{c} {\lambda_i (\genfunc_i(x) - \alpha_i)}}}}
    \]
    Now, we use the fact that:
    \[
        \relbias{x \in A}{\sum_{i=1}^c (\lambda_i (\genfunc_i(x) -\alpha_i)} = \relbias{x \in A}{\sum_{i=1}^c (\lambda_i \genfunc_i(x))} < \epsilon
    \]
    and get that:
    \[
        \prex{x \in A}{(\genfunc_1(x),...,\genfunc_c(x)) = \vec{\alpha}} = \frac{1}{\basefieldsize^c} \cdot \parens{1 \pm \epsilon \prod_{i=1}^c{\basefieldsize}} = \frac{1}{\abs{\basefield}^c} \pm \epsilon
    \]
\end{proof}

%
%
        
\section[Comparing Ranks]{Comparing Ranks}\label{sec:comparing-ranks}
In this section, we compare the definition of rank we used in this paper to another definition of rank used implicitly throughout this paper.
This comparison is crucial, as there is no universally accepted definition of rank;
different theorems presented throughout this paper employ distinct definitions.
We demonstrate that our definition is sufficiently comprehensive, in that a polynomial (or a factor) classified as having high rank according to our criteria
also exhibits high rank according to the second implicitly-used definition.
While in many cases the comparison may appear straightforward, we include it for the sake of completeness.
\newline
Specifically, we compare our definition of rank with the definition established in ~\cite{lampert2021relative}.
The paper~\cite{lampert2021relative} extended the original definition of rank that was presented in \cite{10.1007/BF02392473},
to include also the concept of relative rank.
It is important to note that this definition is specifically defined to subsets $\variety \subseteq \field$ that can be expressed as sets in the form $\variety = \zerofunc{\varpolyset}$ for some set of polynomials $\varpolyset$,
and not to a general set $\variety \subseteq \field$.
\newline
First, we present a useful notation that is used in the definition presented in ~\cite{lampert2021relative}:
\begin{notation*}[Largest Degree Homogenous Part]
    For a polynomial $\genpoly$ of degree $\degree$, we denote by $\homopart{\genpoly}$ its degree-$\degree$ homogenous component.
    In other words, $\homopart{\genpoly}$ is the sum of all the monomials of $\genpoly$ of degree exactly $\degree$.
    For a set of polynomials $\genpolyset = \set{\genpoly_1,...,\genpoly_c}$, we define $\homopart{\genpolyset} \definedas \set{\homopart{\genpoly_i} \suchthat i = 1,...,c}$.
\end{notation*}
Next, we present the exact definition of rank for a polynomial:
\begin{definition}[Schmidt Rank of a Polynomial]
    The schmidt rank of a homogenous polynomial $\funcdef{\genpoly[1]}{\field}{\basefield}$, noted as $\schmrank{\genpoly[1]}$, is the minimal $r$ such that there exist $(\genpoly[2]_i, \genpoly[3]_i)_{i\in[r]}$
    with $\deg{\genpoly[2]_i}, \deg{\genpoly[3]_i} < \deg{\genpoly[1]}$ such that:
    \[
        \genpoly[1](x) = \sum_{i=1}^{r}(\genpoly[2](x) \cdot \genpoly[3](x))
    \]
    For a general polynomial $\genpoly$ of degree $\degree$, we set its rank to be the rank of its degree-$\degree$ homogenous component, i.e. $\schmrank{\genpoly} \definedas \schmrank{\homopart{\genpoly}}$.
\end{definition}

\begin{remark}[High rank implies high schmidt rank]\label{high-rank-implies-high-schmidt-rank}
If $\rank{\genpoly} \geq 2 \cdot r + 1$ for some constant $r \in \naturalnumbersset$, then $\schmrank{\genpoly} \geq r$.
\end{remark}
\begin{proof}
    For homogenous polynomial $\genpoly$, assume $\schmrank{\genpoly} < r$.
    Then, there exist $r^\prime < r$ such that there exist $(\genpoly[2]_i, \genpoly[3]_i)_{i=1}^{r^\prime}$ with $\deg{\genpoly[2]_i}, \deg{\genpoly[3]}_i < \deg \genpoly$ such that:
    \[
        \genpoly[1](x) = \sum_{i=1}^{r^\prime}(\genpoly[2](x) \cdot \genpoly[3](x))
    \]
    Then we can choose $\funcdef{\Gamma}{\basefield^{2r^\prime}}{\basefield}$ to be a sum of multiples of each two consecutive variables to get that
    $\genpoly(x) = \Gamma(\genpoly[2]_1(x), \genpoly[3]_1(x),...,\genpoly[2]_{r^\prime}(x), \genpoly[3]_{r^\prime}(x))$, where the polynomials are from a degree $< \deg(\genpoly)$.
    This means that $\rank{\genpoly} \leq 2r^{\prime} < 2r$ as we requested.
    \newline
    If we do not assume $\genpoly$ is homogenous, by adding $\genpoly - \homopart{\genpoly}$ as an input to $\Gamma$,
    one can create a $\funcdef{\Gamma^\prime}{\basefield^{2r^\prime+1}}{\basefield}$
    which equals to $\genpoly$ when substituting the inputs with some polynomials with degree $< \deg{\genpoly}$,
    which concludes the proof in a similar way.
\end{proof}

Next, we present the definition of Schmidt rank of a factor as defined in ~\cite{lampert2021relative}.
\begin{definition}[Schmidt Rank of a Factor]
    For a factor of homogenous polynomials $\genpolyset = (\genpoly_1,...,\genpoly_{c})$, the schmidt rank of the factor is defined as:
    \[
        \schmrank{\genpolyset} \definedas \min \parens{\schmrank{\sum_{i=1}^{c}\lambda_i \genpoly_i} \suchthat 0 \neq (\lambda_1,...,\lambda_c) \in \basefield^c}
    \]
    Similarly, for a factor of general polynomials $\genpolyset$, we set its rank to be the rank of its matching homogenous-factor,
    i.e. $\schmrank{\genpolyset} \definedas \schmrank{\homopart{\genpolyset}}$
    For a factor $\factor$ generated by $\genpolyset$, we define $\schmrank{\factor} \definedas \schmrank{\genpolyset}$.
\end{definition}

To establish the equivalence of this definition with the one employed throughout the paper, we must first acknowledge two key distinctions between the definitions.
The first distinction is that this definition focuses on the largest-degree homogeneous components of the polynomials involved in the factor, rather than considering linear combinations of polynomials from the factor.
The second distinction pertains to the treatment of $\degree$ in the computation of $\degree$-rank of each linear combination.
This definition uses the degree of the linear combination directly to calculate the rank that participates in the minimum, in contrast to our definition which uses $\max_{i}{\deg(\lambda_i \genpoly_i)}$.
Despite these differences, we will demonstrate that both definitions ultimately yield a similar rank assessment, thereby affirming their equivalence.
\begin{remark}[High Rank Implies High Schmidt Rank for Factors]
    Let $\genpolyset = \parens{\genpoly_1,...,\genpoly_c}$ be a set of polynomials and let $\rankval \in \naturalnumbersset$ be a positive integer, i.e. $\rankval > 0$.
    If $\rank{\genpolyset} \geq 2 \cdot \rankval + 1$, then $\schmrank{\genpolyset} \geq \rankval$.
    \begin{proof}
        Assume that $\schmrank{\genpolyset} \leq \rankval$ for $\rankval > 0$.
        We will show that $\rank{\genpolyset} \leq 2 \rankval + 1$.
        By definition, there exists a linear combination of polynomials in $\homopart{\genpolyset}$ with rank $\leq \rankval$.
        In other words, there exists $\vec{0} \neq \lambda \in \basefield^{c}$ such that $\schmrank{\sum_{i=1}^{c} {\lambda_i \homopart{\genpoly_i}}} \leq \rankval$.
        Denote $\vec{\genpoly_h} \definedas \sum_{i=1}^{c} {\lambda_i \homopart{\genpoly_i}} $.
        As was shown in a previous remark, a rank of a polynomial is smaller than its schmidt rank up to a constant factor,
        thus $\rank{\vec{\genpoly_h}} \leq 2\rankval + 1$ (see Remark~\ref{high-rank-implies-high-schmidt-rank}).
        \newline
        Next, we denote $\vec{\genpoly} \definedas \sum_{i=1}^{c} {\lambda_i \genpoly_i}$, and $\degree_M \definedas \max_{i \in \sparens{c}}{\lambda_i \genpoly_i}$.
        Note that  $\deg(\vec{\genpoly}) \leq \degree_M$.
        We wish to show that $\drank{\degree_M}{\vec{\genpoly}} \leq 2 \rankval + 1$.
        First, we observe that the $\degree_M$-degree homogenous component of $\vec{\genpoly_h}$ equals the $\degree_M$-degree homogenous component of $\vec{\genpoly}$.
        This is true because every highest-degree component of polynomials in the linear combination that generated $\vec{\genpoly}$,
        also exists in the linear combination that generates $\vec{\genpoly_h}$.
        In particular, all homogenous components of degree $\degree_M$ exists in both linear combinations $\vec{\genpoly_h}$ and $\vec{\genpoly}$.
        Therefore, if the degree of $\vec{\genpoly}$ equals $\degree_M$, we have that $\drank{\degree_M}{\vec{\genpoly}} = \rank{\vec{\genpoly}} \geq 2 \rankval + 1$.
        Otherwise, if $\deg(\vec{\genpoly}) < \degree_M$, then $\drank{\degree_M}{\vec{\genpoly}} = 1 \leq 2 \rankval + 1$.
        This completes the proof.
        \newline
    \end{proof}
\end{remark}
\begin{note*}
    In the case discussed above, if $\deg(\vec{\genpoly}) < \degree_M$, then $\schmrank{\genpolyset} = 0$.
    \begin{proof}
        Assume that $\deg(\vec{\genpoly}) < \degree_M$.
        Therefore, the degree of the linear combination $\vec{\genpoly} = \sum_{i=1}^{c} {\lambda_i \genpoly_i}$ is strictly smaller than the degree of at least one of the polynomials participating in it.
        Denote by $\vec{\lambda}^\star$ the sub-combination of $\vec{\lambda}$ that consists only the polynomials that participated in $\vec{\genpoly}$ that are of degree $= \degree_M$.
        Trivially, $\vec{\lambda}^\star \neq \vec{0}$.
        Additionally, we have $\deg(\sum_{i=1}^{c} {\lambda_i^\star \genpoly_i}) < \degree_M$.
        Now, we use the following observation: the linear combination above, when summing only the homogenous components of each polynomial, equals $0$, i.e. $\sum_{i=1}^{c} {\lambda_i^\star \homopart{\genpoly_i}} \equiv 0$.
        By this, we found a linear combination of $\homopart{\genpolyset}$ that is $\equiv 0$.
        Thus by definition, we have $\schmrank{\genpolyset} = 0$.
    \end{proof}
\end{note*}
\begin{note*}
    This shows that if we compare only the differences in the definition of rank of a factor, i.e. the focus on linear combinations of the largest-degree homogenous components in contrast to the use of the maximal degree $\degree$-rank,
    the two definitions for a rank of a factor are equal up to $\pm 1$
    (in case we use the same definition of rank for a single polynomial).
    To avoid confusion, we omit the exact definitions and respective proof.
\end{note*}

We now present the definition of relative rank as stated in ~\cite[Definition 1.6]{lampert2021relative}:
We remind the reader that this definition is specifically defined to subsets $\variety \subseteq \field$ that can be expressed by $\variety = \zerofunc{\varpolyset}$ for some set of polynomials $\varpolyset$, and not to a general set $\variety \subseteq \field$.
\begin{definition}[Relative Schmidt Rank of a Polynomial]
    The relative schmidt rank of a homogeneous polynomial $\genpoly[1]$ relative to a collection of homogeneous polynomials $\varpolyset=(\series{\varpoly}{1}{\varietypolycount})$ is
    \[
        \relschmrank{\varpolyset}{\genpoly[1]} \definedas
        \min \set{\schmrank{P+\sum_{i=1}^{\varietypolycount}{\remainderpoly_{i}\varpoly_{i}}}
            \suchthat
            \deg(\varpoly_{i})+\deg(\remainderpoly_{i}) \leq \deg(\genpoly[1]), \forall i \in [\varietypolycount]}
    \]
    Note that whenever $\deg{\varpoly_{i}}>\deg{\genpoly}$, this implies $\remainderpoly_{i}=0$.
    \newline
    For general polynomial $\genpoly$ and general collection of polynomials $\varpolyset$, we define
    the schmidt rank of the former in respect to the latter by the relative rank of their largest-degree homogenous component,
    i.e. $\relschmrank{\varpolyset}{\genpoly} \definedas \relschmrank{\homopart{\varpolyset}}{\homopart{\genpoly}}$.
\end{definition}
\begin{remark}[High Relative Rank $\Rightarrow$ High Relative Schmidt Rank]\label{remark-high-relative-rank-implies-high-relative-schmidt-rank}
    Let $\genpoly$ and $\varpolyset = \set{\varpoly_1,...,\varpoly_{\varietypolycount}}$ be polynomials,
    and let $\variety \subseteq \field$ be defined as $\variety = \zerofunc{\varpolyset}$.

    If $\relrank{\variety}{\genpoly} \geq 2 \cdot \rankval + 2$ for some constant $\rankval \in \naturalnumbersset$,
    then $\relschmrank{\varpolyset}{\genpoly} \geq r$.
\end{remark}
\begin{proof}
    Let $\genpoly$ and $\varpoly_1,...,\varpoly_{\varietypolycount}$ be polynomials.
    Assume that $\relschmrank{\varpolyset}{\genpoly} \leq \rankval$.
    Then, there exists $\remainderpoly_1,...,\remainderpoly_\varietypolycount$ with
    $\deg(\varpoly_{i})+\deg(\remainderpoly_{i}) \leq \deg(\genpoly[1])$ for all $i \in \sparens{\varietypolycount}$
    such that:
    \[
        \schmrank{\homopart{\genpoly} + \sum_{i=1}^{\varietypolycount}{\remainderpoly_i \homopart{\varpoly_i}}} \leq \rankval
    \]
    Denote $\relativeremainder{\genpoly_h} \definedas \sum_{i=1}^{\varietypolycount}{\remainderpoly_i \homopart{\varpoly_i}}$.
    As we have shown earlier, a rank of a polynomial is smaller than its schmidt rank up to a constant factor (See Remark~\ref{high-rank-implies-high-schmidt-rank}).
    Thus:
    \[
        \rank{\homopart{\genpoly} + \relativeremainder{\genpoly_h}} \leq
        2 \cdot \schmrank{\homopart{\genpoly} + \relativeremainder{\genpoly_h}} + 1 \leq
        2 \cdot \relschmrank{\variety}{\genpoly} + 1 =
        2 \rankval + 1
    \]
    Next, we denote the respective remainder polynomial for the non-homogenous analogue, i.e $\relativeremainder{\genpoly} \definedas \sum_{i=1}^{\varietypolycount}{\remainderpoly_i \varpoly_i}$.
    By observing the highest degree homogenous component of each summand, one can see that $\homopart{\genpoly + \relativeremainder{\genpoly}} = \homopart{\homopart{\genpoly} + \relativeremainder{\genpoly_h}}$.
    Therefore, by adding to the decomposition the non higest-degree-homogenous-component, one can see that:
    \[
        \rank{\genpoly + \relativeremainder{\genpoly}} \leq
        \rank{\homopart{\genpoly} + \relativeremainder{\genpoly_h}} + 1 \leq
        2 \rankval + 2
    \]
    This completes the proof as $\relrank{\variety}{\genpoly} \leq \rank{\genpoly + \relativeremainder{\genpoly}} \leq 2 \rankval + 2$.
\end{proof}

\begin{remark}[Relative Schmidt Rank over Varieties of High Degree]\label{relative-schimdt-rank-equals-schmidt-rank-if-the-variety-is-of-high-degree}
If the polynomials defining the variety $\varpolyset = (\varpoly_1,...,\varpoly_{\varietypolycount})$ are of degree $> \deg(\genpoly)$,
then, $\relschmrank{\varpolyset}{\genpoly} = \schmrank{\genpoly}$.
This is true because in this case, in the calculation of the minimum in the definition of relative schmidt rank, we must have $\remainderpoly_i = 1$ for all $i \in [\varietypolycount]$ and therefore the minimum above is simply $\rank{\genpoly}$.
\newline
Note that a similar statement holds for factors aswell.
If $\genpolyset = (\genpoly_1,...,\genpoly_c)$ is a factor of degree $\degree$, then if all the polynomials in $\varpolyset$ have degree $> \degree$, then the statement above is also true i.e $\relschmrank{\varpolyset}{\genpolyset} = \schmrank{\genpolyset}$.
This is true because for every linear combination of $\genpolyset$ has degree $\leq \degree$ and therefore its relative schimdt rank equals its rank.
\end{remark}


Finally, we present the extension of the definition of relative rank for polynomials factors:
\begin{definition}[Relative Schmidt Rank of a Factor]
    The relative rank of a set of homogenous polynomials $\genpolyset = \set{\genpoly_1,...,\genpoly_c}$
    relative to another collection of polynomials $\varpolyset = \set{\varpoly_1,...,\varpoly_\varietypolycount}$ is defined as:
    \[
        \relschmrank{\varpolyset}{\genpolyset} \definedas
        \min \set{\relschmrank{\varpolyset}{\sum_{i=1}^c{\lambda_i \genpoly_i}} \suchthat \vec{0} \neq (\lambda_1,...,\lambda_c) \in \basefield^c}
    \]
    If $\genpolyset$ is a general collection of polynomials, then $\relschmrank{\varpolyset}{\genpolyset} \definedas \relschmrank{\varpolyset}{\homopart{\genpolyset}}$.
    \newline
    For a factor $\factor$ generated by a set of polynomials $\genpolyset$, we define its schmidt rank relative to $\variety = \zerofunc{\varpolyset}$
    to be $\relschmrank{\variety}{\factor} \definedas \relschmrank{\varpolyset}{\genpolyset}$.
\end{definition}

\begin{remark}
    Let $\genpolyset = \set{\genpoly_1,...,\genpoly_c}$ and $\varpolyset = \set{\varpoly_1,...,\varpoly_{\varietypolycount}}$ be sets of polynomials,
    and let $\variety \subseteq \field$ be defined as $\variety = \zerofunc{\varpolyset}$.
    Additionally, let $\rankval \in \naturalnumbersset$ such that $\rankval > 0$.
    If $\relrank{\variety}{\genpolyset} \geq 2 \cdot \rankval + 2$ for some constant $\rankval \in \naturalnumbersset$, then $\relschmrank{\varpolyset}{\genpoly} \geq r$.
\end{remark}
\begin{proof}
    Assume that $\relschmrank{\varpolyset}{\genpolyset} \leq \rankval$.
    We will show that $\relrank{\variety}{\genpolyset} \leq 2\rankval + 2$.
    Let $\vec{0} \neq \vec{\lambda} \in \basefield^c$ be some vector of coefficients.
    Let $\vec{\genpoly} \definedas \sum_{i=1}^{c} {\lambda_i \genpoly_i}$ and $\vec{\genpoly_h} \definedas \sum_{i=1}^{c} {\lambda_i \homopart{\genpoly_i}}$
    be the linear combinations of polynomials in $\genpolyset$ and $\homopart{\genpolyset}$ with coefficients $\vec{\lambda}$ respectively,
    and let $\degree_M \definedas \max_{i \in \sparens{c}}{\deg(\lambda_i \genpoly_i)}$.
    Additionally, denote $\hat{\rankval} \definedas \relschmrank{\varpolyset}{\vec{\genpoly_h}} \leq \rankval$.
    It is enough to show that $\drelrank{\degree_M}{\variety}{\vec{\genpoly}} \leq 2 \hat{\rankval} + 2$,
    If $\deg(\vec{\genpoly}) < \degree_M$, then $\drelrank{\degree_M}{\variety}{\vec{\genpoly}} = 1 \leq 2 \rankval + 2$.
    Otherwise, if $\deg(\vec{\genpoly}) = \degree_M$, then the remark follows from Remark~\ref{remark-high-relative-rank-implies-high-relative-schmidt-rank}
    as:
    \[
        \drelrank{\degree_M}{\variety}{\vec{\genpoly}} =
        \relrank{\variety}{\vec{\genpoly}} \leq
        2 \cdot \relschmrank{\varpolyset}{\vec{\genpoly}} + 2
    \]
    Where:
    \[
        \relschmrank{\varpolyset}{\vec{\genpoly}}
        \relschmrank{\varpolyset}{\homopart{\vec{\genpoly}}} =
        \relschmrank{\varpolyset}{\homopart{\vec{\genpoly_h}}} =
        \relschmrank{\varpolyset}{\vec{\genpoly_h}} =
        \hat{\rankval}
    \]
\end{proof}

    \end{appendices}

    \bibliography{main}
    \bibliographystyle{alpha}

\end{document}